\newcommand{\blind}{1}
\newcommand\cites[1]{\citeauthor{#1}'s\ (\citeyear{#1})}
\DeclareMathOperator*{\argmin}{arg\,min}
\def\T{{ \mathrm{\scriptscriptstyle T} }}
\def\QRCM{\textsc{qrcm}}
\def\ILM{\textsc{ilm}}
\def\thetavec{\bm{\theta}}
\def\phivec{\bm{\phi}}
\def\varphivec{\bm{\varphi}}
\def\betavec{\bm{\beta}}
\def\gammavec{\bm{\gamma}}
\def\aa{\bm{\alpha}_N}
\def\aahat{\hat{\bm{\alpha}}_N}
\def\xx{\bm{x}}
\def\zz{\bm{z}}
\def\bvec{\bm{b}}
\def\Bvec{\bm{B}}
\def\cvec{\bm{c}}
\def\Cvec{\bm{C}}
\def\vvec{\bm{v}}
\def\I{\bm{I}}
\def\G{\bm{G}}
\def\H{\bm{H}}
\def\O{\bm{\Omega}}
\def\csi{\bm{\xi}}
\def\Ep{\mathrm{E}}
\DeclareMathOperator{\plim}{plim}
\DeclareMathOperator{\ve}{vec}
\newtheorem{assumption}{Assumption}
\newtheorem{theorem}{Theorem}
\newenvironment{proof}[1][Proof]{\textbf{#1.} }{\ \rule{0.5em}{0.5em}}
\begin{document}

\def\spacingset#1{\renewcommand{\baselinestretch}%
{#1}\small\normalsize} \spacingset{1}

%%%%%%%%%%%%%%%%%%%%%%%%%%%%%%%%%%%%%%%%%%%%%%%%%%%%%%%%%%%%%%%%%%%%%%%%%%%%%%
%%%%%%%%%%%%%%%%%%%%%%%%%%%%%%%%%%%%%%%%%%%%%%%%%%%%%%%%%%%%%%%%%%%%%%%%%%%%%%
%%%%%%%%%%%%%%%%%%%%%%%%%%%%%%%%%%%%%%%%%%%%%%%%%%%%%%%%%%%%%%%%%%%%%%%%%%%%%%

\if1\blind
{
  \title{\bf PARAMETRIC MODELING\\ OF QUANTILE REGRESSION COEFFICIENT FUNCTIONS\\ WITH LONGITUDINAL DATA}
  \author{Paolo Frumento,  \vspace{.2cm}\\
    Matteo Bottai \vspace{0.2cm} \\
    and  \vspace{0.2cm} \\
    Iv\'an Fern\'andez-Val \vspace{0.5cm} \\
    University of Pisa, \\
    Karolinska Institutet\\
    and Boston University}
  \maketitle
} \fi

\if1\blind
{
  \newpage
  \title{\bf Author's Footnote} \vspace{1 cm}\\
  Paolo Frumento (paolo.frumento@unipi.it) is Associate Professor at the Department of Political Sciences, University of Pisa, Italy.\vspace{0.5 cm}\\
  Matteo Bottai (matteo.bottai@ki.se) is Professor at the Unit of Biostatistics at Karolinska Institutet, Institute of Environmental Medicine, Stockholm, Sweden.\vspace{0.5 cm}\\
  Iv\'an Fern\'andez-Val (ivanf@bu.edu) is Professor at the Department of Economics, Boston University.\vspace{0.5 cm}\\
  Address for correspondence: Department of Political Sciences, University of Pisa, Via F. Serafini, 3, 56126 Pisa, Italy.
  \newpage
} \fi

\if0\blind
{
  \bigskip
  \bigskip
  \bigskip
  \begin{center}
    {\LARGE \bf PARAMETRIC MODELING \\ OF QUANTILE REGRESSION COEFFICIENT FUNCTIONS \vspace{0.3cm} \\ WITH LONGITUDINAL DATA}
\end{center}
  \medskip
} \fi

%%%%%%%%%%%%%%%%%%%%%%%%%%%%%%%%%%%%%%%%%%%%%%%%%%%%%%%%%%%%%%%%%%%%%%%%%%%%%%
%%%%%%%%%%%%%%%%%%%%%%%%%%%%%%%%%%%%%%%%%%%%%%%%%%%%%%%%%%%%%%%%%%%%%%%%%%%%%%
%%%%%%%%%%%%%%%%%%%%%%%%%%%%%%%%%%%%%%%%%%%%%%%%%%%%%%%%%%%%%%%%%%%%%%%%%%%%%%

\bigskip
\begin{abstract}
In ordinary quantile regression, quantiles of different order are estimated one at a time.
An alternative approach, which is referred to as \textit{quantile regression coefficients modeling} (\QRCM), 
is to model quantile regression coefficients as parametric functions of the order of the quantile.
In this paper, we describe how the $\QRCM$ paradigm can be applied to longitudinal data.
We introduce a two-level quantile function, in which two different quantile regression models
are used to describe the (conditional) distribution of the within-subject response and that
of the individual effects. We propose a novel type of penalized fixed-effects estimator,
and discuss its advantages over standard methods based on $\ell_1$ and $\ell_2$ penalization. We provide model identifiability conditions, derive asymptotic properties,
describe goodness-of-fit measures and model selection criteria, present simulation results, and 
discuss an application. The proposed method has been implemented in the R package \texttt{qrcm}.
\end{abstract}

\noindent%
{\it Keywords:} Longitudinal quantile regression, two-level quantile function, parametric quantile function, penalized fixed-effects, R package \texttt{qrcm}.
\vfill

\newpage
\spacingset{1.45} % DON'T change the spacing!

%%%%%%%%%%%%%%%%%%%%%%%%%%%%%%%%%%%%%%%%%%%%%%%%%%%%%%%%%%%%%%%%%%%%%%%%%%%%%%
%%%%%%%%%%%%%%%%%%%%%%%%%%%%%%%%%%%%%%%%%%%%%%%%%%%%%%%%%%%%%%%%%%%%%%%%%%%%%%
%%%%%%%%%%%%%%%%%%%%%%%%%%%%%%%%%%%%%%%%%%%%%%%%%%%%%%%%%%%%%%%%%%%%%%%%%%%%%%

\section{INTRODUCTION}

Quantile regression (e.g., \citealp{kb, koenker}) has become a standard method 
in many fields, including medicine, epidemiology, economics, and social sciences. 
Different solutions have been proposed to extend quantile regression to 
longitudinal data, in which the same individuals or clusters are observed repeatedly.

In conditional models, that include fixed- and random-effects models, 
the dependence between observations is accounted for by introducing individual-specific parameters,
or ``individual effects''. In fixed-effects models, the individual effects are treated as parameters, avoiding
distributional assumptions and allowing for a simple computation. 
A penalized fixed-effects estimator for longitudinal quantile regression has been proposed by \cite{koenker2004}, 
and similar approaches have been used in \cite{lamarche}, \cite{canay}, and \cite{kato}.\footnote{\cite{cfw18} considered an alternative to quantile regression for estimation of quantile effects in longitudinal data based on  distribution regression.}
In random-effects models, the individual effects are described by a parametric distribution.
Different methods have been proposed to combine the parametric likelihood of the random effects with the estimating equation 
of ordinary quantile regression. \citeauthor{geraci} (\citeyear{geraci}, \citeyear{geraci2}) used the log-likelihood 
of an asymmetric Laplace distribution, and \cite{kim} described an empirical likelihood method. 
\cite{ad08} adapted the correlated random effect approach of \cite{c84} to quantile regression, and \cite{arellano} marginalized the loss function of quantile regression with respect to the posterior distribution 
of the individual effects. \cite{farcomeni}, \cite{marino}, and \cite{alfo} used finite mixtures to approximate the probability density function 
of the individual effects through a discrete distribution.

Marginal models have also been described in the literature.
\cite{leng} defined a set of unbiased estimating equations carrying information on the correlation structure.
A similar approach was used by \cite{zhao} to implement longitudinal single-index quantile regression.

In this paper, we adopt the conditional paradigm and introduce a two-level quantile function, in which
both the distribution of the within-subject response (level 1) and that
of the individual effects (level 2) are described by quantile regression models. 
With this approach, the distribution of the individual effects is not subject to strong parametric 
assumptions and is allowed to depend on level-2 covariates.
Following \citeauthor{iqr} (2016, 2017) and \cite{yang}, we describe the level-1 and level-2 quantile
regression coefficients by (flexible) parametric functions of the order of the quantile.
Compared with standard quantile regression, in which quantiles are estimated one at a time, this modeling approach 
presents numerous advantages, that include a simpler computation and inference
(owing to a smooth objective function), increased statistical efficiency, and easy interpretability of the results.

To fit the model, we introduce a new form of penalized fixed-effects estimator in which the penalty term
carries information on level-2 parameters. This method presents important advantages 
over standard $\ell_1$ and $\ell_2$ penalization. In particular, it avoids the problem of selecting a tuning constant,
and allows to estimate the level-2 coefficients of the model using fixed-effects techniques.

The paper is structured as follows. We describe a general model in Section \ref{sec:model0},
and discuss model building in Section \ref{sec:model}. We introduce an
estimator in Section \ref{sec:est}, and in Section \ref{sec:asy} we derive its asymptotic properties. 
In Section \ref{sec:gof} we present goodness-of-fit measures and tools for model selection,
and in Section \ref{sec:sim} we report simulation results. Section \ref{sec:app} concludes the paper with the analysis of 
a dataset relating plasma neutrophil gelatinase-associated lipocalin (NGAL) to sepsis status. Appendix A provides a general asymptotic expansion for fixed effects estimators with mixed-rates asymptotics and applies it to derive the asymptotic distribution of the proposed estimator.
We discuss computation in Appendix B, 
and present extended simulation results in Appendix C. The R package \texttt{qrcm} implements the described estimator
and provides a variety of auxiliary functions for model building, summary, plotting, prediction, and goodness-of-fit assessment.

%%%%%%%%%%%%%%%%%%%%%%%%%%%%%%%%%%%%%%%%%%%%%%%%%%%%%%%
%%%%%%%%%%%%%%%%%%%%%%%%%%%%%%%%%%%%%%%%%%%%%%%%%%%%%%%
%%%%%%%%%%%%%%%%%%%%%%%%%%%%%%%%%%%%%%%%%%%%%%%%%%%%%%%

\section{THE MODEL}\label{sec:model0}

\subsection{A two-level quantile function}
We consider a cluster data structure, in which $N$ individuals or clusters are observed repeatedly.
We denote by $i = 1, \ldots, N$ the index of the subject, and by $t = 1, \ldots, T$ the within-subject index,
such that the total sample size is $NT$. Designs in which $T$ varies across clusters are also possible,
at the cost of a slightly more complicated notation.

We denote by $Y_{it}$ a response variable of interest, and assume that
\begin{equation}\label{themodel}
Y_{it} = \xx_{it}^\T\betavec(U_{it}) + \zz_{i}^\T\gammavec(V_{i})
\end{equation}
where $\xx_{it}$ is a $d_x$-dimensional vector of level-1 covariates, with associated parameter $\betavec(\cdot)$;
and $\zz_i$ is a $d_z$-dimensional vector of level-2 covariates, with associated parameter $\gammavec(\cdot)$.

We assume that (i) $\xx_{it}^\T\betavec(\cdot)$ and $\zz_{i}^\T\gammavec(\cdot)$ are a.s. non-decreasing functions of their arguments, and
(ii) $U_{it}$ and $V_i$ are $U(0,1)$ variables, independent of each other and of the covariates.
Based on model (\ref{themodel}), $\alpha_i = \zz_{i}^\T\gammavec(V_{i})$ is an individual effect with 
conditional quantile function $\zz_{i}^\T\gammavec(\cdot)$, while $\xx_{it}^\T\betavec(\cdot)$ is the conditional 
quantile function of $Y_{it} - \alpha_i$. 

The level-1 quantile regression model, $\xx_{it}^\T\betavec(\cdot)$, has the standard interpretation 
(e.g., \citealp{koenker2004}): it characterizes the ``within'' part of the 
distribution, purged of the individual effects.
The level-2 regression model, $\zz_{i}^\T\gammavec(\cdot)$,
describes the distribution of the between-subject differences with respect to a reference value
which typically corresponds to a ``mean'' or ``median'' individual.

Consider, for example, a clinical study in which patients are 
repeatedly measured their body mass index (BMI) during their lifetime.
The level-1 part of the model describes the conditional quantiles of BMI in a ``typical''
patient, i.e., someone with an individual effect equal to $0$.
Level-1 predictors include time-varying characteristics, such as the age of the patient at each observation,
as well as constant traits, such as the gender of the patient.
The level-2 model accounts for the between-patient heterogeneity, and describes the 
conditional quantiles of the individual effects. Level-2 covariates 
can only include time-invariant traits, such as the gender,
and summary statistics of level-1 covariates, e.g., the age at the first examination.
Note that the dimension of the level-1 covariates, $\xx_{it}$,
is $NT$, while that of the level-2 covariates, $\zz_i$, is $N$.

Unlike the ``standard'' approaches, that do not consider the effect of level-2 covariates, 
our modeling framework allows to investigate the determinants of the between-subject variability.
For example, in the linear random-intercept model, the level-2 response is described by a $N(0, \phi^2)$ distribution,
in which $\phi^2 = \text{var}(\alpha_i)$ is interpreted as the ``between'' variance and 
is assumed to be unaffected by predictors. This model may fail to capture important features of the data,
such as the fact that the variance of the individual effects is different in males and females. 
Model \eqref{themodel}, instead, allows including gender as level-2 predictor.

Using a quantile regression approach permits avoiding strong parametric assumptions such as normality and homoskedasticity,
that are often used in likelihood-based modeling. In the existing literature on longitudinal quantile regression, however,  
a quantile regression model is usually applied to the level-1 response, but not to the individual effects, that are treated as nuisance parameters.
In our paradigm, instead, the two parts of the distribution are considered ``equally important'',
in the sense that the same modeling structure is used to describe 
the quantiles of the within-subject response, and those of the between-subject differences.
As shown later in the paper, working with model \eqref{themodel} permits using the same techniques to estimate both the level-1 and the level-2 parameters,
and avoids combining level-1 quantile regression methods with likelihood-based level-2 estimators as for example in \cite{kim}.
This leads to rather simple procedures for estimation and inference, in which a fundamental role is played by the two independent uniform random variables 
($U_{it}, V_i$) that generate the data.

\subsection{Parametric coefficient functions}
Through the paper, we assume that
the quantile regression coefficient functions, $\betavec(\cdot)$ and $\gammavec(\cdot)$,
can be modeled parametrically:
\begin{equation}\label{par}
  \betavec(u) = \betavec(u\mid\thetavec), \hspace{0.2cm}
  \gammavec(v) = \gammavec(v\mid\phivec),
\end{equation}
where $\thetavec$ and $\phivec$ are unknown  model parameters. 
This modeling approach was used by \citeauthor{iqr} (2016, 2017),
and is exemplified in Figure~1. The broken line in figure 
represents standard regression coefficients at quantiles $u = (0.01,0.02, \ldots, 0.99)$. 
The estimated coefficients show a non-smooth, volatile trend and, 
although consistently positive, are almost never significant. A parametric model 
can be used to characterize the coefficient function with few parameters and describe it by a simple,
closed-form mathematical expression. In Figure~1 we propose a linear fit, 
$\beta(u \mid \thetavec) = \theta_0 + \theta_1 u$, that is represented by a dashed line.
This simple model reveals the underlying trend and permits achieving statistical significance.

Compared with standard quantile regression, which works in a quantile-by-quantile fashion,
modeling quantile functions parametrically simplifies estimation and inference and yields important 
advantages in terms of parsimony, efficiency, and ease of interpretation. Moreover,
it allows for model identification in the presence of latent structures or missing information,
making it simple to apply quantile regression to censored and truncated data \citep{ctiqr}.

On the other hand, this approach requires formulating a parametric model for the coefficient functions,
$\betavec(u\mid\thetavec)$ and $\gammavec(v\mid\phivec)$. This task is not straightforward and
the existing literature on the subject is lacking. In Section \ref{sec:model} we describe in details
model building, provide guidelines, and suggest a variety of possible parametrizations.

\begin{figure}[h]
    \centering    
	\makebox{\includegraphics[scale = 0.5]{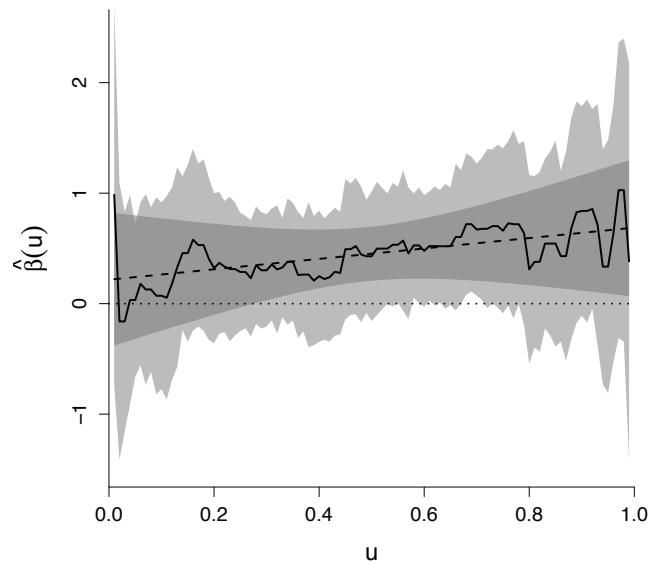}}
	\caption{\label{fig1}
	The broken line exemplifies a typical behavior of quantile regression estimators,
	while the dashed line suggests a hypothetical linear trend, $\beta(u \mid \thetavec) = \theta_0 + \theta_1 u$.
	Pointwise confidence intervals are represented by the light and dark shaded areas, respectively. 
	The dotted line indicates the zero.}
\end{figure}

%%%%%%%%%%%%%%%%%%%%%%%%%%%%%%%%%%%%%%%%%%%%%%%%%%%%%%%
%%%%%%%%%%%%%%%%%%%%%%%%%%%%%%%%%%%%%%%%%%%%%%%%%%%%%%%
%%%%%%%%%%%%%%%%%%%%%%%%%%%%%%%%%%%%%%%%%%%%%%%%%%%%%%%

%\clearpage

\section{TWO-LEVEL MODELING OF QUANTILE REGRESSION COEFFICIENT FUNCTIONS}\label{sec:model}

We assume model (\ref{themodel}) to hold, and parametrize the quantile regression coefficient functions
as follows:
\begin{equation}\label{par2}
\betavec(u\mid\thetavec) = \thetavec\bvec(u), \hspace{0.2cm} \gammavec(v\mid\phivec) = \phivec\cvec(v),
\end{equation}
where $\bvec(u) = \left[b_1(u), \ldots, b_{d_{b}}(u)\right]^{\T}$ and $\cvec(v) = \left[c_1(v), \ldots, c_{d_{c}}(v)\right]^{\T}$
are $d_{b}$- and $d_{c}$-dimensional sets of known functions. With this notation, $\thetavec$ is a $d_{x}\times d_{b}$ matrix,
and $\phivec$ is a $d_{z}\times d_{c}$ matrix. The data-generating process can be written as
\begin{equation}\label{themodel2}
Y_{it} = \xx_{it}^\T\thetavec\bvec(U_{it}) + \zz_{i}^\T\phivec\cvec(V_i).
\end{equation}
Although other parametrizations are possible (e.g., $\betavec(u\mid\thetavec)$ and $\gammavec(v\mid\phivec)$
may be allowed to be nonlinear functions of $\thetavec$ and $\phivec$),
model \eqref{par2} is very flexible and computationally convenient. We illustrate the potentials
of this modeling approach with a number of examples, and provide general guidelines for model building.

\subsection{A simple model}\label{subsec:simplemodel}

Consider the following model with a single level-1 covariate $x$, and no level-2 predictors:
$$Y_{it} = \beta_0(U_{it}) + \beta_1(U_{it})x_{it} + \gamma_0(V_i).$$
Denote by $\zeta(\cdot)$ the quantile function of a standard normal distribution, and assume that
$$\beta_0(u\mid\thetavec) = \theta_{00} + \theta_{01}\zeta(u),$$
$$\beta_1(u\mid\thetavec) = \theta_{10},$$
$$\gamma(v \mid\phivec) = \phi \zeta(v).$$
%In matrix form, the model is defined by
%\begin{equation}\nonumber
%	\bvec(u) = 		
%		\begin{bmatrix}
%			1\\
%			\zeta(u)
%		\end{bmatrix},
%\hspace{0.2cm}
%	\thetavec = 
%		\begin{bmatrix}
%			\theta_{00} & \theta_{01}\\
%			\theta_{10} & 0
%		\end{bmatrix},
%\end{equation}
%\begin{equation}\nonumber
%	\cvec(v) = 		
%		\begin{bmatrix}
%			\zeta(v)
%		\end{bmatrix},
%\hspace{0.2cm}
%	\phivec = 
%		\begin{bmatrix}
%			\phi
%		\end{bmatrix}.
%\end{equation}
This is just a reformulation of the standard linear random-intercept model, in which
$Y_{it} = \theta_{00} + \theta_{10}x_{it} + \alpha_i + \epsilon_{it}$ with $\alpha_i \sim N(0,\phi^2)$
and $\epsilon_{it} \sim N(0,\theta_{01}^2)$. In this model, 
$\theta_{00}$ corresponds to the intercept of the ``fixed'' part, while $\theta_{01}^2$ and $\phi^2$ are interpreted 
as the ``within'' and ``between'' variance components. In the equivalent quantile regression model, $\theta_{00}$ is the ``intercept'' of $\beta_0(u\mid\thetavec)$
and corresponds to $\beta_0(0.5\mid\thetavec)$, while $\theta_{01}$ and $\phi$ are ``slopes'' associated with $\zeta(\cdot)$ in the level-1 and level-2 
part of the quantile function, respectively. The regression coefficient of $x$, 
$\beta_1(u\mid\thetavec)$, is assumed to be constant across quantiles, forcing homoskedasticity.

\subsection{A more flexible model}\label{subsec:flexmodel}
The standard linear random-intercept model is rather restrictive and, within the described framework,
can be easily generalized by choosing a different specification of $\betavec(\cdot \mid \thetavec)$ and $\gammavec(\cdot \mid \phivec)$. 
For example, one may define
$$\beta_0(u\mid\thetavec) = \theta_{00} +  \theta_{01}u + \theta_{02}u^2 + \theta_{03}u^3 + \theta_{04}\zeta(u),$$
$$\beta_1(u\mid\thetavec) = \theta_{10} + \theta_{11}u,$$
$$\gamma(v \mid\phivec) = \phi_1 \log{(2v)} + \phi_2\log{(2(1 - v))}.$$
%or, in matrix form:
%\begin{equation}\nonumber
%	\bvec(u) = 		
%		\begin{bmatrix}
%			1\\
%			u\\
%			u^2\\
%			u^3\\
%			\zeta(u)
%		\end{bmatrix},
%\hspace{0.2cm}
%	\thetavec = 
%		\begin{bmatrix}
%			\theta_{00} & \theta_{01} & \theta_{02} & \theta_{03} & \theta_{04}\\
%			\theta_{10} & \theta_{11} & 0 & 0 & 0
%		\end{bmatrix},
%\end{equation}
%\begin{equation}\nonumber
%	\cvec(v) = 		
%		\begin{bmatrix}
%			\log{(2v)} \\ \log{(2(1 - v))}
%		\end{bmatrix},
%\hspace{0.2cm}
%	\phivec = 
%		\begin{bmatrix}
%			\phi_1 & \phi_2
%		\end{bmatrix}.
%\end{equation}
The intercept, $\beta_0(u\mid\thetavec)$, is modeled by a linear combination of $\zeta(u)$, the quantile function of a standard normal
distribution, and three additional components, $u$, $u^2$ and $u^3$, that allow for a deviation from the normal model. The resulting quantile function
can be asymmetric or multimodal and does not correspond to any ``standard'' family of random variables. 
The coefficient associated with $x$, $\beta_1(u\mid\thetavec)$, is now assumed to be a linear function of $u$,
allowing for data heteroskedasticity. In particular, the variance of the level-1 response is an increasing function of $x$,
if $\theta_{11} > 0$, and a decreasing function of it, if $\theta_{11} < 0$. Finally, the individual effects are assumed to follow
a zero-median asymmetric logistic distribution, which is much more flexible than the commonly used normal model.

As shown in this example, $\betavec(\cdot)$ and $\gammavec(\cdot)$ can be constructed as linear combinations of relatively
simple functions, $\bvec(\cdot)$ and $\cvec(\cdot)$, such that $\betavec(u\mid\thetavec) = \thetavec\bvec(u)$, and $\gammavec(v\mid\phivec) = \phivec\cvec(v)$.
In this framework, the model is entirely determined by the choice of $\bvec(\cdot)$ and $\cvec(\cdot)$.
Useful guidelines for model building are provided in the rest of this section.
Various modeling approaches are illustrated in Sections \ref{sec:sim} and \ref{sec:app} of this paper,
while a general discussion on quantile modeling can be found in the book by \cite{gil}.
Finally, the documentation of the \texttt{qrcm} package (in particular the functions
\texttt{iqr} and \texttt{iqrL}) includes an extensive tutorial for the practitioners.

\subsection{Model building: level 1}\label{subsec:level1}

%%%%%%%%%%%%%%%%%%%%%%%%%%%%%%%%%%%%%%%%%%%%%%%%%%%%%%%%%%
\textbf{\textsc{Modeling $\beta_0(u \mid \thetavec)$}}. 
Assuming that the support of $\xx$ includes the zero (which can be obtained by centering the covariates), 
$\beta_0(\cdot \mid \thetavec)$ must be a monotonically increasing function.
Prior belief or knowledge can be used to identify a meaningful parametric model. 
For instance, one may use the quantile function of a known distribution. Possible parametrizations of $\beta_0(u \mid \thetavec)$ include:
$\theta_{00} + \theta_{01}\zeta(u)$, the normal distribution, $\text{N}(\theta_{00}, \theta_{01}^2)$;
$-\theta_{01}\log(1 - u)$, the exponential distribution, $\text{Exp}(\theta_{01})$;
$\theta_{00} + \theta_{01}\log(u/(1 - u))$, the logistic distribution, $\text{Logis}(\theta_{00}, \theta_{01})$;
$\theta_{00} + \theta_{01}\log(u) + \theta_{02}\log(1 - u)$, the asymmetric logistic, $\text{aLogis}(\theta_{00}, \theta_{01}, \theta_{02})$;
$\theta_{00} + \theta_{01}u$, the uniform distribution, $\text{U}(\theta_{00}, \theta_{00} + \theta_{01})$.
Note that, in this framework, the parameters of well-known distributions may have
an unusual interpretation. For example, the value of $\theta_{01}$ in a $\text{U}(\theta_{00}, \theta_{00} + \theta_{01})$
distribution corresponds to its range, but can also be seen as the slope of a linear quantile function, $\theta_{00} + \theta_{01}u$.\vspace{0.3cm}\\ 
%%%%%%%%%%%%%%%%%%%%%%%%%%%%%%%%%%%%%%%%%%%%%%%%%%%%%%%%%%
\textbf{\textsc{Modeling $\beta_1(u \mid \thetavec), \beta_2(u \mid \thetavec), \ldots$.}} 
There are no general constraints to the parametric form of the regression coefficients associated with the covariates.
However, the coefficient functions are usually bounded and exhibit a rather simple behavior.
Sometimes, it is possible to assume that covariates only affect the location of the level-1 response, and force homoskedasticity by choosing
a constant-slope model in which $\beta_j(u \mid \thetavec) = \theta_{j0},  j = 1, 2, \ldots$. 
In a more general scenario, a useful approximation is often given by a linear-slope model, 
$\beta_j(u \mid \thetavec) = \theta_{j0} + \theta_{j1}u$, or a quadratic-slope model, $\beta_j(u \mid \thetavec) = \theta_{j0} + \theta_{j1}u + \theta_{j2}u^2$, which does not impose monotone effect with respect to $u$.\vspace{0.3cm}\\ 
%%%%%%%%%%%%%%%%%%%%%%%%%%%%%%%%%%%%%%%%%%%%%%%%%%%%%%%%%%

\subsection{Model building: level 2}\label{subsec:level2}

A similar model strategy can be applied to the level-2 quantile function.
There are, however, some important differences.\vspace{0.3cm}\\ 
%%%%%%%%%%%%%%%%%%%%%%%%%%%%%%%%%%%%%%%%%%%%%%%%%%%%%%%%%%
\textbf{\textsc{Modeling $\gamma_0(v \mid \phivec)$}}.
The distribution of the individual effects is typically assumed to have zero mean or median, and,
for identifiability, $\gamma_0(v \mid \phivec)$ does not usually include a constant term. Meaningful definitions
of $\gamma_0(v \mid \phivec)$ include:
$\phi_{01}\zeta(v)$, the normal distribution, $\text{N}(0, \phi_{01}^2)$;
$-\phi_{01}\log(1 - v)$, the exponential distribution, $\text{Exp}(\phi_{01})$;
$\phi_{01}\log(v/(1 - v))$, the logistic distribution, $\text{Logis}(0, \phi_{01})$;
$\phi_{01}\log(2v) + \phi_{02}\log(2(1 - v))$, a zero-median asymmetric logistic;
$\phi_{01}[\log(v) + 1] + \phi_{02}[\log(1 - v) + 1]$, a zero-mean asymmetric logistic;
$\phi_{01}2(v - 0.5)$, a centered uniform distribution, $\text{U}(-\phi_{01}, \phi_{01})$.
In most cases, the coefficients can be interpreted as scale parameters, while the centrality parameter is 
fixed and equal to zero. In the exponential case, the value $0$ is the minimum of the support of the individual effects, 
and not a measure of central tendency, while both 
the mean and the standard deviation of the individual effects correspond to $\phi_{01}$.\vspace{0.3cm}\\ 
%%%%%%%%%%%%%%%%%%%%%%%%%%%%%%%%%%%%%%%%%%%%%%%%%%%%%%%%%%
\textbf{\textsc{Modeling $\gamma_1(v \mid \phivec), \gamma_2(v \mid \phivec), \ldots$.}} 
Importantly, the described framework permits investigating how the conditional quantile function of the individual effects
depends on level-2 covariates $\zz_i$, which typically include cluster-invariant characteristics (e.g., gender) or 
summary measures of the level-1 covariates, e.g., the cluster means or medians. 
The variance of the individual effects is likely to differ across subgroups of the population.
Also, as suggested by some authors (e.g., \citealp{lancaster}), agents may select their covariates' values based on prior knowledge about their own individual effect, 
which induces a correlation between $\alpha_i$ and $\zz_i$.

Modeling the effect of level-2 covariates is not trivial. To make an example, suppose that 
$\alpha_i = \gamma_0(V_i) + \gamma_1(V_i)z_i$, and consider the following alternative parametrizations:
\begin{equation}
\gamma_0(v \mid \phivec) = \phi_{01}\zeta(v), \hspace{0.2cm} \gamma_1(v \mid \phivec) = \phi_{11}\zeta(v)\tag{i}
\end{equation}
\begin{equation}
\gamma_0(v \mid \phivec) = \phi_{01}\zeta(v), \hspace{0.2cm} \gamma_1(v \mid \phivec) = \phi_{10} + \phi_{11}v \tag{ii}\\
\end{equation}
\begin{equation}
\gamma_0(v \mid \phivec) = \phi_{01}v, \hspace{0.2cm} \gamma_1(v \mid \phivec) = \phi_{10} + \phi_{11}v \tag{iii}\\
\end{equation}
In model (i), where $\gamma_0(v \mid \phivec)$ and $\gamma_1(v \mid \phivec)$ are symmetric around the zero,
the conditional distribution of $\alpha_i$ has zero mean and median at all values of $z_i$. The covariate only affects
the scale of the individual effects by introducing heteroskedasticity, while no linear correlation between $z_i$ and $\alpha_i$ is present. 
Model (i) assumes normality, but allows the variance of the individual effects to be a function of the level-2 covariates, i.e. $\alpha_i \mid z_i \sim \text{N}(0, \phi_{01}^2 + \phi_{11}^2 z_i^2)$.
For example, if $z_i$ is binary, the ``between'' variance is $\phi_{01}^2$ when $z_i = 0$, and $\phi_{01}^2 + \phi_{11}^2$ when $z_i = 1$.

In models (ii) and (iii), $z_i$ and $\alpha_i$ have a non-zero correlation unless $\phi_{10} = 0$.
In model (ii), where $\int_0^1\gamma_0(v \mid \phivec) \mathrm{d}v = 0$, the \textit{marginal} distribution of the individual 
effects has zero mean if $z_i$ is centered around its mean or $\phi_{10} + \phi_{11}/2 = 0$. In model (iii) the mean and the median
of the individual effects are functions of the parameters and cannot be determined in advance.  
However, if $z_i \ge 0$, model (iii) generates $\alpha_i \ge 0$ for any positive value of the parameters,
implying that the ``reference'' individual ($\alpha_i = 0$) corresponds to 
someone with the smallest possible individual effect.

\subsection{Additional remarks}\label{subsec:add}

The problem of formulating a parametric quantile function is equivalent, at least in principle, to that of choosing a parametric form for a probability density function, a hazard function, or a survival function. For example, as shown in Section \ref{subsec:simplemodel}, standard parametric assumptions such as normality and homoskedasticity can be directly translated into a quantile function with a simple closed-form expression. However, as suggested in Section \ref{subsec:flexmodel}, the models that can be used to describe a quantile function are often very different from most of the ``conventional'' parametric distributions, and frequently much more flexible.

An exploratory semiparametric fit can be
obtained by letting $\bvec(\cdot)$ and $\cvec(\cdot)$ be the basis of a linear or polynomial spline. A flexible model can be used
as a guide to find more parsimonious and efficient parametrizations. Note that standard quantile regression, in which 
quantiles are estimated one at a time, can be thought of as a model in which $\bvec(\cdot)$ and $\cvec(\cdot)$ are allowed to be arbitrarily flexible
and the parameters $\thetavec$ and $\phivec$ are virtually infinite-dimensional.

In absence of prior knowledge, one may define $\bvec(\cdot)$ and $\cvec(\cdot)$ using polynomials $\left[\text{e.g.,} u, u^2, u^3, \ldots\right]$, roots $\big[\text{e.g.,} u^{1/2}$, $(1 - u)^{1/2}$, $u^{1/3}$,  $(1 - u)^{1/3}, \ldots\big]$, trigonometric functions $\left[\text{e.g.,} \cos(2\pi u), \sin(2\pi u)\right]$, splines, and combinations of the above. A possible strategy is to consider a ``simple'' quantile function (e.g., that of a normal or an exponential distribution, depending on the nature of the outcome) and allow for a departure from it, as suggested in Section \ref{subsec:flexmodel}.

Importantly, the model specification should reflect assumptions on the shape, support, and boundedness (or unboundedness) of the level-1 and level-2 responses. For example, if the individual effects are believed to be symmetric, $\gamma_0(v \mid \phivec)$ could be formed by the quantile function of a normal or logistic distribution. If the level-1 distribution has a long right tail, $\beta_0(u \mid \thetavec)$ may have a positive asymptote in $u = 1$, e.g., $\beta_0(u \mid \thetavec) = \theta_{00} - \theta_{01} \log(1 - u) + \ldots$. On the other hand, if the outcome is strictly positive, building blocks such as $\log(u)$ or $\zeta(u)$, that present a negative asymptote in $u = 0$, may not be appropriate.

Apart from the above important considerations, the choice of $\bvec(\cdot)$
and $\cvec(\cdot)$ is not as crucial as it appears. For example, 
the coefficient function defined by $\beta(u) = (u - 0.3)^3$ is almost identical to $\beta(u) = -1.87 + 6.20u + 1.84\cos(u) - 5.92\sin(u)$,
the correlation between the two being about $0.99999$. The fact that very different model specifications can be indistinguishable in terms of model fit
is unsurprising (for example, it is almost impossible to distinguish a Normal distribution, a Student's t distribution with large degrees of freedoms, 
and a Gamma distribution with large shape parameter), and suggests that meaningful criteria for model selection should include parsimony 
and interpretability.

Often, a rather restrictive model may provide a reasonable approximation of the true data distribution, 
and can be preferred to a more correct model because of its simplicity. Also, parsimonious models are very
rewarding in terms of precision, although they may introduce some bias. This explains why
strong parametric assumptions, such as homoskedasticity and proportionality of hazards or odds, 
are used routinely in statistical analysis. In quantile regression, very convenient assumptions are represented by the constant-slope 
model (e.g., $\beta(u \mid \thetavec) = \theta_0$), in which a certain predictor has the same effect at all quantiles, and the
linear-slope model (e.g., $\beta(u \mid \thetavec) = \theta_0 + \theta_1u$), in which a quantile regression coefficient is assumed to be a linear function.

%Finally, one may want to determine in advance which values of the parameters $\thetavec$ and $\phivec$ 
%would ensure that no quantile crossing occurs, i.e., that $\xx_{it}^\T\thetavec\bvec(\cdot)$ and $\zz_{i}^\T\phivec\cvec(\cdot)$ are monotonically 
%increasing functions of their arguments. Unfortunately, this is only feasible in very simple models with few covariates, or in presence of
%restrictive assumptions. 

%%%%%%%%%%%%%%%%%%%%%%%%%%%%%%%%%%%%%%%%%%%%%%%%%%%%%%%
%%%%%%%%%%%%%%%%%%%%%%%%%%%%%%%%%%%%%%%%%%%%%%%%%%%%%%%
%%%%%%%%%%%%%%%%%%%%%%%%%%%%%%%%%%%%%%%%%%%%%%%%%%%%%%%

\section{THE ESTIMATOR}\label{sec:est}

\cite{iqr} considered cross-sectional data $(y_i, \xx_i)$ and defined $\betavec(u\mid\thetavec) = \thetavec\bvec(u)$
as in (\ref{par2}). To estimate $\thetavec$, they suggested minimizing
\begin{equation}\label{iloss}
	L(\thetavec) = \int_0^1{\sum_i{\rho_u(y_i - \xx_i^\T\betavec(u \mid \thetavec))} \mathrm{d}u},
\end{equation}
which is the integral, with respect to the order of the quantile, of the loss function of standard
quantile regression, $\rho_u(w) = w(u - I(w \le 0))$ being the ``check'' function. 
This estimation method is referred to as \textit{integrated loss minimization} (\ILM) 
and is currently implemented in the \texttt{qrcm} R package. 

To generalize this idea to longitudinal data,
assume model (\ref{themodel2}) holds,
\begin{equation}\nonumber
Y_{it} = \xx_{it}^\T\betavec(U_{it} \mid \thetavec) + \zz_{i}^\T\gammavec(V_{i} \mid \phivec) = \xx_{it}^\T\thetavec\bvec(U_{it}) + \zz_{i}^\T\phivec\cvec(V_i),
\end{equation}
and denote by $y_{it}$ a realization of $Y_{it}$.
If the individual effects $\alpha_i = \zz_{i}^\T\phivec\cvec(V_i)$ were known, one could directly apply the $\ILM$ estimator to $y_{it} - \alpha_i$,
to compute an estimate of $\thetavec$; and to $\alpha_i$, to compute an estimate of $\phivec$.
This would require solving
\[\min_{\thetavec} L_1(\thetavec, \aa), \hspace{0.2cm} \min_{\phivec} L_2(\phivec, \aa)\]
where $\aa = (\alpha_1, \ldots, \alpha_N)$,\footnote{We index $\aa$ by $N$ to emphasize that the dimension grows with the sample size. }
\begin{align}\label{Ltheta}
	L_1(\thetavec, \aa) & = \int_0^1 \sum_{i = 1}^N \sum_{t = 1}^{T}{\rho_u(y_{it} - \alpha_i - \xx_{it}^\T\betavec(u \mid \thetavec)) \mathrm{d}u}\\ 
		& = \sum_{i = 1}^N \sum_{t = 1}^{T}{\left\{
			(y_{it} - \alpha_i)(u_{it}(\thetavec, \alpha_i) - 0.5) 
			+ \xx_{it}^\T\thetavec\left[\bar\Bvec - \Bvec(u_{it}(\thetavec, \alpha_i))\right]\right\}, \nonumber
	}
\end{align}
\begin{align}\label{Lphi}
	L_2(\phivec, \aa) & = \int_0^1 \sum_{i = 1}^N {\rho_v(\alpha_i - \zz_i^\T\gammavec(v \mid \phivec)) \mathrm{d}v}\\ 
		& = \sum_{i = 1}^N{\left\{
			\alpha_i(v_i(\phivec, \alpha_i) - 0.5) 
			+ \zz_i^\T\phivec\left[\bar\Cvec - \Cvec(v_i(\phivec, \alpha_i))\right]\right\}. \nonumber
	}
\end{align}
To obtain expressions (\ref{Ltheta})\footnote{
     The expression for $L_1(\thetavec, \aa)$ bears some similarity to \cites{koenker2004} loss function for unpenalized fixed-effects quantile regression, 
	which is defined by $L(\betavec, \aa) = \sum_j \sum_i \sum_t w_j \rho_{u_j}(y_{it} - \alpha_i - \xx_{it}\betavec(u_j))$ 
	and can be seen as a discretized, non-parametrized, and weighted version of $L_1(\thetavec, \aa)$.} 
and (\ref{Lphi}), we used equation (9) from \cite{iqr},
and define
\begin{equation}\label{B}
  \Bvec(u) = \int_0^u \bvec(s) \mathrm{d}s, \hspace{0.3cm} \bar\Bvec = \int_0^1{\Bvec(u)\mathrm{d}u},
\end{equation}
\begin{equation}\label{C}
\Cvec(v) = \int_0^v \cvec(s) \mathrm{d}s, \hspace{0.3cm} \bar\Cvec = \int_0^1{\Cvec(v)\mathrm{d}v}.
\end{equation}
In the formulas, $u_{it}(\thetavec, \alpha_i)$ and $v_i(\phivec, \alpha_i)$ are such that $y_{it} - \alpha_i = \xx_{it}^\T\thetavec\bvec(u_{it}(\thetavec, \alpha_i))$
and $\alpha_i = \zz_{i}\phivec\cvec(v_i(\phivec, \alpha_i))$, respectively. This also implies that
\begin{equation}\label{u}
u_{it}(\thetavec, \alpha_i) = F_{y - \alpha}(y_{it} - \alpha_i \mid \xx_{it}, \thetavec)
\end{equation}
is the cumulative distribution of $Y_{it} - \alpha_i$,
given $\xx_i$, with parameter $\thetavec$; and 
\begin{equation}\label{v}
v_i(\phivec, \alpha_i) = F_{\alpha}(\alpha_i \mid \zz_i, \phivec)
\end{equation} 
is the cumulative distribution 
of $\alpha_i$, given $\zz_i$, with parameter $\phivec$.

In practice, the vector $\aa$ of individual effects is not known 
and must be estimated. We propose estimating $(\thetavec, \phivec, \aa)$ by solving
\begin{equation}\label{L}
\min_{\thetavec, \phivec, \aa} L_1(\thetavec, \aa) + L_2(\phivec, \aa).
\end{equation}
The proposed loss function is similar to that of a penalized fixed-effects estimator
in which $L_2(\phivec, \aa)$ plays the role of a penalty term.
Intuitively, $L_2(\phivec, \aa)$ shrinks the estimated fixed effects towards their assumed conditional distribution,
introducing some degree of smoothing, improving model identification and efficiency, and avoiding overfitting.
At the same time, $L_2(\phivec, \aa)$ carries information on the parameter $\phivec$ that describes the quantile function of $\aa$. 

Since both $\aa$ and $\phivec$ are treated as parameters, this approach combines features
of fixed-effects estimators, which only estimate $\thetavec$ and $\aa$, and random-effects models, which
directly estimate $\thetavec$ and $\phivec$. Computation, however, is much simpler than that 
of purely random-effects methods (e.g., \citealp{kim, arellano}).

The gradient functions of $L(\thetavec, \phivec, \aa) = L_1(\thetavec, \aa) + L_2(\phivec, \aa)$ can be written as
\begin{equation}\label{Gtheta}
\G_{\thetavec}(\thetavec, \aa) = \nabla_{\ve(\thetavec)} L(\thetavec, \phivec, \aa) = 
	\sum_{i = 1}^N \sum_{t = 1}^{T}{\left[\bar\Bvec - \Bvec(u_{it}(\thetavec, \alpha_i))\right] \otimes \xx_{it}},
\end{equation}
\begin{equation}\label{Gphi}
\G_{\phivec}(\phivec, \aa) = \nabla_{\ve(\phivec)} L(\thetavec, \phivec, \aa) = 
	\sum_{i = 1}^N {\left[\bar\Cvec - \Cvec(v_i(\phivec, \alpha_i))\right] \otimes \zz_i},
\end{equation}
\begin{equation}\label{Ga}
G_{\alpha_i}(\alpha_i, \thetavec, \phivec) = \nabla_{\alpha_i} L(\thetavec, \phivec, \aa) = 
	\left[\sum_{t = 1}^{T} {(0.5 - u_{it}(\thetavec, \alpha_i))}\right] + (v_i(\phivec, \alpha_i) - 0.5),%, \hspace{0.3cm} i = 1, \ldots, N,
\end{equation}
where $\ve$ denotes the vectorization operator and $\otimes$  the kronecker product.
The model parameters, $(\thetavec, \phivec, \aa)$, only enter equations \eqref{Gtheta}--\eqref{Ga}
through the cumulative distribution functions $u_{it}(\thetavec, \alpha_i)$ and $v_i(\phivec, \alpha_i)$ defined in \eqref{u} and \eqref{v}.
Note that $\G_{\thetavec}(\thetavec, \aa)$ does not carry information on $\phivec$, and
$\G_{\phivec}(\phivec, \aa)$ does not carry information on $\thetavec$; while
$G_{\alpha_i}(\alpha_i, \thetavec, \phivec)$ depends on both $\thetavec$ and $\phivec$.
As shown by \cite{iqr}, $\G_{\thetavec}(\thetavec, \aa)$ and $\G_{\phivec}(\phivec, \aa)$ approach zero when the 
distributions of $u_{it}(\thetavec, \alpha_i)$ and $v_i(\phivec, \alpha_i)$ tend to be uniform. This reflects the data-generating process described in (\ref{themodel}),
which involves the two independent uniform variables $U_{it}$ and $V_i$.

Equation (\ref{Ga}) clarifies the role of the ``penalty'' term $L_2(\phivec, \aa)$: 
\begin{itemize}
\item{the left-hand side of (\ref{Ga}), $\sum_{t = 1}^{T} {(0.5 - u_{it}(\thetavec, \alpha_i))} = \nabla_{\alpha_i}L_1(\thetavec, \aa)$,
is an unpenalized estimating equation for $\alpha_i$. It approaches zero when $u_{i1}(\thetavec, \alpha_i), u_{i2}(\thetavec, \alpha_i), \ldots, u_{iT}(\thetavec, \alpha_i)$ are evenly spaced in $(0,1)$,
imposing a within-cluster uniformity of $u_{it}(\thetavec, \alpha_i)$ which mirrors the assumption of independence between $U_{it}$ and $V_i$;}
\item{the right-hand side, $ (v_i(\phivec, \alpha_i) - 0.5) = \nabla_{\alpha_i}L_2(\phivec, \aa)$, is a penalty term that shrinks 
the value of $\alpha_i$ towards its conditional median, $\zz_{i}^\T\gammavec(0.5 \mid \phivec) = \zz_{i}^\T\phivec\cvec(0.5)$.
}
\end{itemize}
A desirable property of the proposed penalization is that it only affects the estimates of $\aa$ when the clusters are relatively small.
As $T \to \infty$, each cluster contains sufficient information to estimate its own individual effect and,
consistently, the penalty term $(v_i(\phivec, \alpha_i) - 0.5)$ in equation (\ref{Ga}) becomes irrelevant.

Estimation can be performed by the following iterative process: (i) given $\aa$, estimate $\thetavec$
and $\phivec$ separately by solving $\G_{\thetavec}(\thetavec, \aa)=0$ and $\G_{\phivec}(\phivec, \aa)=0$;
(ii) given $(\thetavec, \phivec)$, compute a new estimate of $\aa$ by solving $G_{\alpha_i}(\alpha_i, \thetavec, \phivec) = 0$,
$i = 1, \ldots, N$. Step (i) can be implemented with standard routines available in the \texttt{qrcm} package,
while step (ii) requires finding the zero of $N$ univariate estimating equations. Neither $u_{it}(\thetavec, \alpha_i)$ nor $v_i(\phivec, \alpha_i)$ are generally 
available in closed form, and can be evaluated by using a bisection algorithm. Note that the objective function
defined by (\ref{L}) is a smooth function of all parameters, unlike the loss function of standard quantile regression.

The fact that the quantile function may be ill-defined at some value of the parameters can be an issue 
during estimation. In the implementation of the \texttt{qrcm} package, we use
unconstrained optimization from carefully chosen initial values. 
The algorithm is described in detail in Appendix B.

\subsection{A new family of penalized fixed-effects estimators}\label{subsec:penalty}

A possible interpretation of the proposed loss function,
\begin{equation}\label{Lbis}\nonumber
L(\thetavec, \phivec, \aa) = L_1(\thetavec, \aa) + L_2(\phivec, \aa),
\end{equation}
is to consider $L_2(\phivec, \aa)$ as a penalty term that shrinks the estimated individual effects
towards their conditional median, $\zz_i^\T\gammavec(0.5 \mid \phivec)$. Unlike standard
penalizations, however, $L_2(\phivec, \aa)$ may depend on level-2 covariates and is a function of estimated parameters.

To clarify this idea, consider a more traditional penalized loss function
\begin{equation}\label{Lp}\nonumber
L_{\lambda}(\thetavec, \aa) = L_1(\thetavec, \aa) + \lambda L_2(\aa),
\end{equation}
where $L_2(\aa)$ is a penalty term which does not contain $\phivec$, and $\lambda$ is a tuning parameter.
Common choices of $L_2(\aa)$ are the $\ell_1$-penalization, $L_2(\aa) = \sum_{i = 1}^N|\alpha_i|$,
which was used by \cite{koenker2004} to implement longitudinal quantile regression,
and the $\ell_2$-penalization, $L_2(\aa) = \sum_{i = 1}^N \alpha_i^2$.

Standard $\ell_1$- and $\ell_2$-penalized fixed-effects methods are computationally simple and can
substantially improve efficiency of the estimates of the structural parameters. However, besides the 
fact that they do not allow for estimation of $\phivec$, they present 
some important limitations: 
(i) they do not use prior knowledge on the distribution of the individual effects; 
(ii) they can introduce bias; 
(iii) they apply the same penalization to all clusters; and 
(iv) they require to specify a tuning parameter.

For instance, $\ell_1$-penalized estimators of quantile regression coefficients are asymptotically biased unless
$(\alpha_1, \ldots, \alpha_N)$ are independent and identically distributed with zero median \citep{lamarche}. This is just
a consequence of the $\ell_1$-penalty term being a sum of absolute deviations from zero,
which does not generally reflect the true distribution of $\aa$ and the effect of level-2 covariates on it. 
Moreover, the same value of $\lambda$ is used for all clusters, ignoring the fact that the variance
of the individual effects may differ across subgroups of the population. 

The tuning constant $\lambda$ determines the degree of shrinking and,
in the standard random-intercept linear model, its optimal value is $\sigma^2_{\epsilon}/\sigma^2_{\alpha}$,
i.e., a function of nuisance scale parameters (e.g., \citealp{koenker2004}). % which do not enter the quadratic loss function.
Outside the restrictive conditions of linear models, not only the choice of $\lambda$ becomes problematic,
but also the use of a single value of $\lambda$ for all clusters is questionable.

The novelty of our approach is that, unlike the $\ell_1$- and $\ell_2$-penalizations, 
the term $L_2(\phivec, \aa)$ reflects the true (conditional) distribution of $\aa$ and carries information about its parameters, $\phivec$. 
Our estimator presents the following advantages over standard penalized methods:
(i) it enables incorporating parametric assumptions on the distribution of $\aa$;
(ii) it permits estimating all parameters consistently; 
(iii) it applies a different degree of shrinking to each cluster, by modeling the effect of level-2 covariates 
on the distribution of the individual effects; and 
(iv) it does not require selecting a tuning constant, as no nuisance parameters are present.

To clarify point (iv), consider the loss function of an $\ell_2$-penalized linear regression model:
$L_{\lambda}(\betavec, \aa) = L_1(\betavec, \aa) + \lambda L_2(\aa) = 
\sum_{i = 1}^N\sum_{t = 1}^T{(y_{it} - \xx_i^\T\betavec - \alpha_i)^2} + \lambda \sum_{i = 1}^N \alpha_i^2$.
Here, $L_1(\betavec, \aa)$ and $L_2(\aa)$ lack information on the nuisance scale parameters
$\sigma^2_{\epsilon} = \text{var}(Y_{it} - \xx_i^\T\betavec - \alpha_i)$ and $\sigma^2_{\alpha} = \text{var}(\alpha_i)$.
This is adjusted for by the tuning constant $\lambda = \sigma^2_{\epsilon}/\sigma^2_{\alpha}$.
In our special type of penalized estimator, instead, $L_1(\thetavec,\aa)$ and $L_2(\phivec,\aa)$ carry information on \textit{all} model parameters.
Intuitively, this means that $L_1(\thetavec,\aa)$ and $L_2(\phivec,\aa)$ are already ``properly scaled''.
The tuning constant can be thought of as an implicit parameter, a function of $\thetavec$ and $\phivec$. 
Although a more general estimator with criterion function $L_1(\thetavec,\aa) + \lambda L_2(\phivec,\aa)$ 
could in principle be formulated, choosing $\lambda = 1$ appears natural and avoids the problem of
selecting the tuning parameter. 

%For example, if $\sigma^2_{\alpha}$ is multiplied by a costant, the value of $\lambda$ must be modified accordingly.
%If $\sigma^2_{\alpha}$ is multiplied by a costant, the scale change is automatically accomodated for.

%For example, consider two models with the same level-1 parameters, but different values of $\sigma^2_{\alpha}$,
%say $\sigma^2_{\alpha1}$ and $\sigma^2_{\alpha2}$. Using standard penalized regression, 
%the two models should be estimated with different values of the tuning parameters,
%$\lambda_1 = \sigma^2_{\epsilon}/\sigma^2_{\alpha1}$, and $\lambda_2 = \sigma^2_{\epsilon}/\sigma^2_{\alpha2}$.
%If the same tuning parameter is used for both models, say $\lambda_1 = \lambda_2 = \lambda$, different estimators
%of identical level-1 parameters will be obtained. With our estimator, instead, information on the scale of the individual effects
%is carried by $L(\phivec,\aa)$, and $\sigma^2_{\alpha}$ is just one of the model parameters. This means
%that the same estimators of level-1 coefficients will be obtained from the two models, with the same value of $\lambda$. 

%%%%%%%%%%%%%%%%%%%%%%%%%%%%%%%%%%%%%%%%%%%%%%%%%%%%%%%
%%%%%%%%%%%%%%%%%%%%%%%%%%%%%%%%%%%%%%%%%%%%%%%%%%%%%%%
%%%%%%%%%%%%%%%%%%%%%%%%%%%%%%%%%%%%%%%%%%%%%%%%%%%%%%%

\section{INFERENCE}\label{sec:asy}

The asymptotic properties of fixed-effects estimators are complicated by the fact that,
as $N \to \infty$, the dimension of the parameter $\aa$ 
tends to infinity. Unless $T \to \infty$, 
the individual effects $\alpha_i$ are estimated using a fixed number of observations.
This is often referred to as the ``incidental parameter'' problem 
(\citealp{NS}; \citealp{lancaster}), which causes widely used estimators, such as
maximum likelihood and M-estimators, to be inconsistent. 

To develop the asymptotic theory of our estimator,
we follow the recent panel data literature in econometrics and deal with the incidental parameter problem by considering asymptotic sequences where  both $N$ and $T$ tend to infinity (e.g., \citealp{Phillips:1999p733}, \citealp{hn}, \citealp{koenker2004},  \citealp{fv}, \citealp{ArellanoHahn2007}, \citealp{lamarche}, and \citealp{kato}). Under this  approximation, we show that our estimators are consistent but might have biases in the asymptotic distribution depending on the relative rate of convergence of $N$ and $T$. We apply the theory of M-estimators
(e.g., \citealp{newey}), and use well-established results to handle the following non-standard
features of our problem: (i) the estimators of $\thetavec$, $\phivec$ and $\aa$ converge at
different rates (e.g., \citealp{radchenko}; \citealp{cs}; \citealp{masuda}); and (ii) additional
conditions are required on the relative growth rate of $N$ and $T$ (e.g., 
\citealp{hn}; \citealp{fv}; \citealp{neweynotes}).

Let $\xx_{it} = (\xx_{1i}^\T,\xx_{2it}^\T)^\T$, where $\xx_{1i}$ contains the time-invariant components including the constant and $\xx_{2it}$ contains the time-varying covariates.  We use the following sufficient conditions to establish the identification of the parameters and derive the asymptotic properties of the estimators:

\begin{assumption}[Longitudinal ILM Estimator]\label{ass:ilm}
(i) The data generating process is $Y_{it} = \xx_{it}^{\T}\thetavec^0\bvec(U_{it}) + \zz_i^{\T}\phivec^0\cvec(V_i)$, where $\plim_{N\to \infty}N^{-1} \sum_{i=1}^N \alpha_i^0 =0$ for $\alpha_i^0 := \zz_i^{\T}\phivec^0\cvec(V_i)$ and, conditional on $\{(\xx_{it}, \zz_i) : 1 \leq i \leq N, 1 \leq t \leq T \}$,   $U_{it} \sim U(0,1)$ independently over $i$ and $t$,  $V_i \sim U(0,1)$ independently over $i$, and $U_{it}$ and $V_j$ are independent over all $i,t,j$. (ii) For each $i$, $\plim_{T\to \infty} T^{-1} \sum_{t=1}^T \bar \xx_{2it} \bar \xx_{2it}^{\T}$ exists and is positive definite for $\bar \xx_{2it} = (1,\xx_{2it}^\T)^\T$, and $\plim_{N\to \infty} N^{-1} \sum_{i=1}^N \xx_{1i}  \xx_{1i}^{\T}$ and $\plim_{N\to \infty}N^{-1} \sum_{i=1}^N \zz_{i}  \zz_{i}^{\T}$ exist and are positive definite. (iii) The variables $\xx_{it}^{\T}\thetavec^0\bvec(U_{it})$ and $\zz_i^{\T}\phivec^0\cvec(V_i)$ are finite a.s., and $\xx_{it}^{\T}\thetavec^0 \bvec'(U_{it})$ and $\zz_{i}^{\T}\phivec^0 \cvec'(V_{i})$ are positive a.s.   (iv)   There exist two sets of quantile indexes $\{u_1, \ldots, u_{d_{b}} \}$ and $\{v_1, \ldots, v_{d_{c}} \}$ such that the matrices $\bvec(u_1, \ldots, u_{d_{b}}) := [\bvec(u_1), \ldots, \bvec(u_{d_{b}})]$ and $\cvec(v_1, \ldots, v_{d_{c}}) := [\cvec(v_1), \ldots, \cvec(v_{d_{c}})]$ have full rank.
(v) The functions $u \mapsto \bvec(u)$  and $v \mapsto \cvec(v)$ are three times continuously differentiable on $(0,1)$, $\sup_i \plim_{T \to \infty} T^{-1} \sum_{t=1}^T \| \xx_{2it} \|^\xi < \infty$ for some $\xi > 32$, and $\|(\xx_{1i}^\T, \zz_{i}^\T)^\T\|$ is bounded a.s. (vi) The following probability limits exist:
\begin{eqnarray*}
\bar{\H}_{\thetavec} &=&  \plim_{N,T \to \infty} \frac{1}{NT} \sum_{i=1}^N \sum_{t=1}^T \Ep\left( \H_{\thetavec it}\right), \\ 
\bar{\H}_{\phivec} &=&  \plim_{N \to \infty} \frac{1}{N} \sum_{i=1}^N  \Ep\left( \frac{\cvec(V_{i}) \cvec(V_{i})^{\T}  }{\zz_{i}^{\T}\phivec^0 \cvec'(V_{i})} \otimes  \zz_{i}\zz_{i}^{\T} \right), \\
\bar{\H}_{\phivec\thetavec} &=&   \plim_{N,T \to \infty} \frac{1}{N} \sum_{i=1}^N  \Ep\left(\frac{\cvec(V_{i})  }{\zz_{i}^{\T}\phivec^0 \cvec'(V_{i})}  \otimes \zz_{i} \right)  \frac{\sigma_i^2}{T} \sum_{t=1}^T \Ep\left( \frac{\bvec(U_{it})  }{\xx_{it}^{\T}\thetavec^0 \bvec'(U_{it})}  \otimes \xx_{it}  \right)^{\T}, \\
\bar{\bvec}_{\thetavec} &=&   \plim_{N,T \to \infty} \frac{1}{NT} \sum_{i=1}^N \sum_{t=1}^T \Ep(\bvec_{\thetavec it}), \ \ \bar{\bvec}_{\phivec} =   \plim_{N \to \infty} \frac{1}{N} \sum_{i=1}^N \Ep(\bvec_{\phivec i}) , \\
\bar{\O}_{\thetavec} &=& \plim_{N,T \to \infty} \frac{1}{NT} \sum_{i=1}^N  \sum_{t=1}^T \Ep(\varphivec_{\thetavec it} \varphivec_{\thetavec it}^{\T}), \\ 
\bar{\O}_{\phivec} &=& \plim_{N \to \infty} \frac{1}{N} \sum_{i=1}^N \Ep\left\{\left[\bar \Cvec - \Cvec(V_i)\right]\left[\bar \Cvec - \Cvec(V_i)\right]^{\T} \otimes \zz_{i}\zz_{i}^{\T}\right\},
\end{eqnarray*}
where the expectation $\Ep$ is taken with respect to the distribution of $U_{it}$ and $V_i$, $\otimes$ denotes the Kronecker product, $\cvec'$ and $\cvec''$ denote the vectors of first and second  derivatives of  $v \mapsto \cvec(v)$, 
\begin{eqnarray*}
\H_{\thetavec it} &=&  \frac{\bvec(U_{it}) \bvec(U_{it})^{\T}  }{\xx_{it}^{\T}\thetavec^0 \bvec'(U_{it})} \otimes   \xx_{it}  \xx_{it}^{\T}  -  \left( \frac{\bvec(U_{it})  }{\xx_{it}^{\T}\thetavec^0 \bvec'(U_{it})}  \otimes \xx_{it}  \right) \frac{\sigma_i^2}{T} \sum_{t=1}^T \Ep\left( \frac{\bvec(U_{it})  }{\xx_{it}^{\T}\thetavec^0 \bvec'(U_{it})}  \otimes \xx_{it}  \right)^{\T} ,\\
\bvec_{\phivec i} &=& \left\{ \frac{[\sigma_i^2(V_i - 0.5) - \beta_i]\cvec(V_{i})  }{\zz_{i}^{\T}\phivec^0 \cvec'(V_{i})}  +  \frac{\sigma_i^4 \cvec(V_{i})  \zz_{i}^{\T}\phivec^0 \cvec''(V_{i})}{24[\zz_{i}^{\T}\phivec^0 \cvec'(V_{i})]^3} - \frac{\sigma_i^4 \cvec'(V_{i})  }{24[\zz_{i}^{\T}\phivec^0 \cvec'(V_{i})]^2}\right\}\otimes \zz_{i},\\
\bvec_{\thetavec it} &=&   \left\{ \frac{[ \sigma_i^2(U_{it} - 0.5)+\beta_i] \bvec(U_{it})  }{\xx_{it}^{\T}\thetavec^0 \bvec'(U_{it})}  -  \frac{\sigma_i^4 \bvec(U_{it})  \xx_{it}^{\T}\thetavec^0 \bvec''(U_{it})}{24[\xx_{it}^{\T}\thetavec^0 \bvec'(U_{it})]^3} - \frac{\sigma_i^4 \bvec'(U_{it})  }{24[\xx_{it}^{\T}\thetavec^0 \bvec'(U_{it})]^2}\right\} \otimes \xx_{it}, \\
\varphivec_{\thetavec it} &=& [\bar \Bvec - \Bvec(U_{it})] \otimes \xx_{it} + \frac{\sigma_i^2}{T} \sum_{s=1}^T \Ep\left( \frac{\bvec(U_{is})  }{\xx_{is}^{\T}\thetavec^0 \bvec'(U_{is})}  \otimes \xx_{is}  \right) (U_{it} - 0.5),\\
\beta_i &=&  \sigma_i^4 \plim_{T\to \infty} \frac{1}{T} \sum_{t=1}^T \Ep \left(\frac{U_{it}  - 0.5}{\xx_{it}^{\T}\thetavec^0 \bvec'(U_{it})} -  \frac{\sigma_i^2 \xx_{it}^{\T}\thetavec^0 \bvec''(U_{it})}{24[\xx_{it}^{\T}\thetavec^0 \bvec'(U_{it})]^3}\right),\\
 \sigma_i^2 &=&  \left[ \plim_{T\to \infty} \frac{1}{T} \sum_{t=1}^T  \Ep\left( \frac{1  }{\xx_{it}^{\T}\thetavec^0 \bvec'(U_{it})} \right) \right]^{-1},
\end{eqnarray*}
and $\bvec'$ and $\bvec''$ denote the vectors of first and second derivatives of  $u \mapsto \bvec(u)$. 
(vii) The minimum eigenvalues of the matrices $\bar{\H}_{\thetavec}$ and $\bar{\H}_{\phivec}$ are bounded away from zero, and $\sup_i \sigma_i^2 < c$ for some constant $c > 0$. 
\end{assumption}

We use Assumptions \ref{ass:ilm}(i)-(iv) to establish the identification of all the model parameters.  Assumption \ref{ass:ilm}(i) imposes that the model is correctly specified. It also requires $\alpha_i$ and $Y_{it} - \alpha_i$ to be conditionally 
independent across $(i,t)$, and to be independent of each other. We do not impose any sampling condition on the covariate sequences $\{(\xx_{it}, \zz_i) : 1 \leq i \leq N, 1 \leq t \leq T \}$, other than the existence of some limits.  Assumptions \ref{ass:ilm}(ii)-(iii) apply standard regularity conditions for parameter  identification in quantile regression to our longitudinal model (e.g., \citealp{ACF06}). For example, Assumption \ref{ass:ilm}(iii) imposes that the conditional quantile and density functions of $Y_{it} - \alpha_i$ and $\alpha_i$ are bounded. These conditions, together with a location normalization on the fixed effects in Assumption \ref{ass:ilm}(i), guarantee that $\betavec(\cdot)$ and $\gammavec(\cdot)$ in the model \eqref{themodel} are identified.\footnote{We normalize  the mean of the fixed effects. Alternative  normalizations on the median or other quantile of the fixed effects are also possible.}  Then, Assumption \ref{ass:ilm}(iv)  pins down $\thetavec^0$ and $\phivec^0$ from the system of linear equations $\betavec(\cdot) = \thetavec^0 \bvec(\cdot)$ and $\gammavec(\cdot) = \phivec^0 \cvec(\cdot)$. Assumption \ref{ass:ilm}(iv) provides a sufficient condition to guarantee  existence and uniqueness of solution to the system from a subset of equations, which  is easy to verify in practice. It can be replaced by any other existence and uniqueness condition.

Assumptions \ref{ass:ilm}(v)-(vii)  impose regularity conditions to derive the distribution of the estimators in large samples. The derivation relies on a general asymptotic expansion for fixed effects M-estimators given in Appendix A, which extend the results of \cite{hn} and \cite{fv} to estimators with mixed-rates asymptotics.  Assumption \ref{ass:ilm}(v) requires sufficient smoothness and bounded moments of the objective functions \eqref{Ltheta} and \eqref{Lphi} and their partial derivatives, which are needed to carry out  higher-order  expansions of these functions. Assumption \ref{ass:ilm}(vi) guarantees that all the terms of the  expansions are well-defined. Finally, Assumption   \ref{ass:ilm}(vii) is a standard condition imposing that the limit Hessian matrices of the objective functions are non-singular.

\vspace{0.4cm}

%{\bf Theorem}: \textit{Asymptotic distribution of $\hat \csi$}. Under conditions {\bf A} -- {\bf C}, 
%$$\left(
%  \sqrt{NT}(\hat\thetavec - \thetavec^0), 
%  \sqrt{N}(\hat\phivec - \phivec^0), 
%  \sqrt{T}(\aahat - \aa^0)
%  \right) \overset{d}{\to} N(\zerovec, \H_0^{-1}\O_0\H_0^{-1}).$$
%In particular, the asymptotic covariance matrix of $(\hat\thetavec, \hat\phivec)$ is given
%by the $d\times d$ top-left block of $\H_0^{-1}\O_0\H_0^{-1}$, where $d = \text{dim}(\thetavec, \phivec)$.

\begin{theorem}\label{thm:ilm} Suppose that Assumption \ref{ass:ilm} holds. Then, (i) $\thetavec^0$ and $\phivec^0$ are identified. (ii) If $N = O(T)$,
$$
\sqrt{NT}\left( \ve\left[\hat{\thetavec} - \thetavec^0\right] + \frac{\bar{\H}_{\thetavec}^{-1} \bar{\bvec}_{\thetavec}}{T} \right) \to_d \bar{\H}_{\thetavec}^{-1} \text{N}(0, \bar{\O}_{\thetavec}). 
$$ 
(iii) If $N = O(T^2)$,
$$
\sqrt{N}\left( \ve\left[\hat{\phivec} - \phivec^0\right] + \frac{\bar{\H}_{\phivec}^{-1} (\bar{\bvec}_{\phivec} +  \bar{\H}_{\phivec\thetavec} \bar{\H}_{\thetavec}^{-1}\bar{\bvec}_{\thetavec})}{T} \right) \to_d \bar{\H}_{\phivec}^{-1} \text{N}(0, \bar{\O}_{\phivec}). 
$$ 
The expressions of all the terms are given in Assumption \ref{ass:ilm}.
\end{theorem}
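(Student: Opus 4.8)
The plan is to prove part (i) by a self-contained identification argument and parts (ii)--(iii) by verifying that the smooth scores \eqref{Gtheta}--\eqref{Ga} satisfy the hypotheses of the general fixed-effects M-estimator expansion established in Appendix A, and then reading off the limiting Hessian, variance, and bias objects listed in Assumption \ref{ass:ilm}.

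For identification, I would first exploit that under Assumption \ref{ass:ilm}(i) the uniformity of $U_{it}$ and $V_i$ forces the population scores to vanish at the truth, since $\Ep[\bar{\Bvec} - \Bvec(U_{it})] = \zerovec$ and $\Ep[\bar{\Cvec} - \Cvec(V_i)] = \zerovec$. Combined with the location normalization $\plim_{N} N^{-1}\sum_i \alpha_i^0 = 0$ and the positive-density and positive-definiteness conditions of Assumptions \ref{ass:ilm}(ii)--(iii), standard quantile-regression identification (as in \citealp{ACF06}) determines the coefficient functions $\betavec(\cdot)$ and $\gammavec(\cdot)$. Assumption \ref{ass:ilm}(iv) then inverts the linear systems $\betavec(u_k) = \thetavec^0\bvec(u_k)$ and $\gammavec(v_k) = \phivec^0\cvec(v_k)$ over the selected quantile indices to recover $\thetavec^0$ and $\phivec^0$ uniquely, giving part (i).

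For the limiting distributions, the core of the argument is to profile out the incidental parameters and to track the bias this profiling creates. I would first establish consistency of $(\hat{\thetavec}, \hat{\phivec})$ together with the uniform rate $\max_i |\hat{\alpha}_i - \alpha_i^0| = o_p(1)$, using the smoothness and moment bounds of Assumption \ref{ass:ilm}(v). Solving $G_{\alpha_i}(\hat{\alpha}_i, \hat{\thetavec}, \hat{\phivec}) = 0$ and noting that the derivative of $G_{\alpha_i}$ in $\alpha_i$ is of order $T/\sigma_i^2$ (the penalty $(v_i - 0.5)$ contributing only $O(1)$), a higher-order Taylor expansion yields $\hat{\alpha}_i - \alpha_i^0 = O_p(T^{-1/2})$ with a deterministic $O(T^{-1})$ bias generated by the curvature of $u_{it}(\thetavec, \alpha_i)$ and $v_i(\phivec, \alpha_i)$ in $\alpha_i$; the leading stochastic term is $(\sigma_i^2/T)\sum_t (U_{it} - 0.5)$, which explains the appearance of $\sigma_i^2$ throughout Assumption \ref{ass:ilm}. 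Substituting this expansion into $\G_{\thetavec}$ and $\G_{\phivec}$ and expanding to second order in $\hat{\alpha}_i - \alpha_i^0$ produces: the projection of the $\alpha_i$-direction out of the $\thetavec$-score, which yields the adjusted influence function $\varphivec_{\thetavec it}$ (with variance $\bar{\O}_{\thetavec}$) and the Hessian correction in $\H_{\thetavec it}$; and the averaged second-order terms, which yield the bias vectors $\bar{\bvec}_{\thetavec}$ and $\bar{\bvec}_{\phivec}$. The analogous projection adjustment to the $\phivec$-score is negligible at the $\sqrt{N}$ rate, so $\bar{\O}_{\phivec}$ retains only $\bar{\Cvec} - \Cvec(V_i)$. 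Finally, because $\hat{\alpha}_i$ depends on $\thetavec$ as well, the bias in $\hat{\thetavec}$ feeds into the $\phivec$-score through the composition linking $\partial\hat{\alpha}_i/\partial\thetavec$ to $\partial\G_{\phivec}/\partial\alpha_i$, which is exactly the cross term $\bar{\H}_{\phivec\thetavec}$ appearing in part (iii).

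Applying the general expansion of Appendix A then delivers the sandwich forms $\bar{\H}_{\thetavec}^{-1}\bar{\O}_{\thetavec}\bar{\H}_{\thetavec}^{-1}$ and $\bar{\H}_{\phivec}^{-1}\bar{\O}_{\phivec}\bar{\H}_{\phivec}^{-1}$, shifted by the $T^{-1}$ biases, with non-singularity guaranteed by Assumption \ref{ass:ilm}(vii). The rate restrictions follow by requiring the rescaled bias to remain bounded: $\sqrt{NT}\cdot T^{-1} = \sqrt{N/T}$ stays $O(1)$ iff $N = O(T)$, giving part (ii), and $\sqrt{N}\cdot T^{-1}$ stays $O(1)$ iff $N = O(T^2)$, giving part (iii). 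I expect the main obstacle to be the higher-order, mixed-rate bookkeeping: since $\hat{\alpha}_i$ converges only at rate $\sqrt{T}$ while $\hat{\thetavec}$ converges at $\sqrt{NT}$, the second- and third-order remainder terms from the $N$ incidental-parameter expansions must be shown to be uniformly negligible, which is precisely where the strong moment condition $\xi > 32$ in Assumption \ref{ass:ilm}(v) and the eigenvalue bounds of Assumption \ref{ass:ilm}(vii) enter.
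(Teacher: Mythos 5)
Your proposal follows essentially the same route as the paper's proof: part (i) via standard quantile-regression identification arguments combined with the mean normalization on the fixed effects and inversion of the linear systems under Assumption \ref{ass:ilm}(iv), and parts (ii)--(iii) by verifying the general mixed-rates fixed-effects expansion of Appendix A for the smooth ILM scores, with the profiled expansion $\hat{\alpha}_i - \alpha_i^0 = \psi_i/\sqrt{T} + \beta_i/T + o_P(T^{-1})$, the projected influence function and Hessian correction, the $T^{-1}$ biases, the cross term $\bar{\H}_{\phivec\thetavec}\bar{\H}_{\thetavec}^{-1}\bar{\bvec}_{\thetavec}$ entering the $\phivec$-bias through block-matrix structure, and the same rate arithmetic ($\sqrt{N/T}$ and $\sqrt{N}/T$ remainders) yielding $N = O(T)$ and $N = O(T^2)$. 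The one step where you are thinner than the paper is identification: the paper explicitly partitions $\xx_{it}$ into time-invariant and time-varying parts and separates the level-1 coefficients of time-invariant covariates from the individual effects using within-group expectations, between-group variation, and the normalization $\plim_N N^{-1}\sum_i \alpha_i^0 = 0$, a separation your sketch subsumes under ``standard quantile-regression identification'' but which is the substantive content of that part.
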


\vspace{0.4cm}

Theorem \ref{thm:ilm} shows that the parameters $\thetavec^0$ and $\phivec^0$ are identified and  their estimators $\hat{\thetavec}$ and $\hat{\phivec}$ have a normal distribution in large samples with different rates of convergence. The large sample distribution of the plugin estimators of $\betavec^0(\cdot) = \thetavec^0\bvec(\cdot)$ and $\gammavec^0(\cdot) = \phivec^0 \cvec(\cdot)$ can be obtained by the delta method. Let $\hat{\betavec}(u) = \hat{\thetavec} \bvec(u)$ and $\hat{\gammavec}(v) = \hat{\phivec} \cvec(v)$, for any $u,v \in (0,1)$. Then, if $N = O(T)$,
$$
\sqrt{NT}\left( \hat{\betavec}(u) - \betavec^0(u) + \frac{\ve_{d_x,d_{b}}^{-1}(\bar{\H}_{\thetavec}^{-1} \bar{\bvec}_{\thetavec}) \bvec(u)}{T} \right) \to_d  \text{N}(0, \tilde{\bvec}(u)^{\T}\bar{\H}_{\thetavec}^{-1}\bar{\O}_{\thetavec}\bar{\H}_{\thetavec}^{-1} \tilde{\bvec}(u)),
$$
where $\ve_{d,k}^{-1}$ is the inverse vectorization operator that maps a $dk$-vector to a $d \times k$ matrix, i.e. $\ve_{d,k}^{-1}(\vvec) = \{[\ve(\I_k)]^{\T} \otimes \I_d \} (\I_k \otimes \vvec)$, $\I_n$ is the identity matrix of size $n$, and $\tilde{\bvec}(u) = \bvec(u) \otimes \I_{d_x}$. Similarly, if $N = O(T^2)$,
\begin{multline*}
\sqrt{N}\left( \hat{\gammavec}(v) - \gammavec^0(v) + \frac{\ve_{d_z,d_{c}}^{-1}(\bar{\H}_{\phivec}^{-1} (\bar{\bvec}_{\phivec} +  \bar{\H}_{\phivec\thetavec} \bar{\H}_{\thetavec}^{-1}\bar{\bvec}_{\thetavec})) \cvec(v)}{T} \right) \\ \to_d  \text{N}(0, \tilde{\cvec}(v)^{\T}\bar{\H}_{\phivec}^{-1}\bar{\O}_{\phivec}\bar{\H}_{\phivec}^{-1} \tilde{\cvec}(v)),
\end{multline*}
where $\tilde{\cvec}(v) = \cvec(v) \otimes \I_{d_z}$. 

The rates of convergence of all the estimators agree with the square roots of the dimensions of the observations that are informative about the corresponding parameters. Thus, the rate is $\sqrt{NT}$ for $\thetavec^0$ and $\betavec^0(u)$, and  $\sqrt{N}$ for $\phivec^0$ and $\gammavec^0(v)$. All the estimators might suffer from  bias in short panels due to the estimation of the fixed effects. The order of this bias is the inverse of the number of observations that are informative about each fixed effect, i.e. $T^{-1}$.  Comparing the rates of convergence with the order of the bias,  we can see that the biases of $\hat{\thetavec}$ and $\hat{\betavec}(u)$ are negligible in the asymptotic distribution when $N/T \to 0$, whereas the biases of $\hat{\phivec}$ and $\hat{\gammavec}(v)$ are negligible when $N/T^2 \to 0$. These biases can be reduced by using analytical or jackknife corrections (e.g.,  \citealp{hn}, \citealp{fv}, or \citealp{dj15}). We provide consistent analytical estimators of the components of the biases and variances below. 

%The numerical simulations of Section  \ref{sec:sim} report biases and standard errors for $\hat{\betavec}(u)$ and $\hat{\gammavec}(v)$ in small samples.  }

%The above theorem is a special case of the more general theory described by \cite{fv},
%which investigates the asymptotic properties of $\GMM$ estimators with individual-specific parameters,
%and to which we refer for a proof.
%
%Note that $\H^0$ and $\O^0$ are infinite-dimensional matrices reflecting an ideal situation
%in which infinite many clusters are available, each containing an unlimited number of observations.
%As shown in Appendix A, the finite-sample counterparts of $\H^0$ and $\O^0$ can be used
%to compute an estimate of $\text{cov}(\hat\csi)$.

We construct estimators of the components of the asymptotic distribution using sample analogs evaluated at the estimated value of the parameters,  e.g., $\hat{u}_{it} = u_{it}(\hat{\thetavec}, \hat \alpha_i)$ and $\hat{v}_{i} = v_{i}(\hat{\phivec}, \hat \alpha_i)$.  Then,
\begin{eqnarray*}
\hat{\H}_{\thetavec} &=&   \sum_{i=1}^N \sum_{t=1}^T \hat{\H}_{\thetavec it}, \ \ \hat{\H}_{\phivec} =   \frac{1}{N} \sum_{i=1}^N  \left( \frac{\cvec(\hat v_{i}) \cvec(\hat v_{i})^{\T}  }{\zz_{i}^{\T}\phivec^0 \cvec'(\hat v_{i})} \otimes  \zz_{i}\zz_{i}^{\T} \right), \\
\hat{\H}_{\phivec\thetavec} &=&  \frac{1}{N} \sum_{i=1}^N  \left(\frac{\cvec(\hat v_{i})  }{\zz_{i}^{\T}\hat \phivec \cvec'(\hat v_{i})}  \otimes \zz_{i} \right)  \frac{\hat \sigma_i^2}{T} \sum_{t=1}^T \left( \frac{\bvec(\hat u_{it}) }{\xx_{it}^{\T}\hat \thetavec \bvec'(\hat  u_{it})}  \otimes \xx_{it}  \right)^{\T}, \\
\hat{\bvec}_{\thetavec} &=&  \frac{1}{NT} \sum_{i=1}^N \sum_{t=1}^T \hat \bvec_{\thetavec it}, \ \ \hat{\bvec}_{\phivec} =  \frac{1}{N} \sum_{i=1}^N  \hat \bvec_{\phivec i}, \\
\hat{\O}_{\thetavec} &=&  \frac{1}{NT} \sum_{i=1}^N  \sum_{t=1}^T \hat  \varphivec_{\thetavec it} \hat \varphivec_{\thetavec it}^{\T}, \ \ 
\hat{\O}_{\phivec} = \frac{1}{N} \sum_{i=1}^N \left\{\left[\bar \Cvec - \Cvec(\hat v_i)\right]\left[\bar \Cvec - \Cvec(\hat v_i)\right]^{\T} \otimes \zz_{i}\zz_{i}^{\T}\right\}, 
\end{eqnarray*}
where 
\begin{eqnarray*}
\hat \H_{\thetavec it} &=&  \frac{\bvec(\hat u_{it}) \bvec(\hat u_{it})^{\T}  }{\xx_{it}^{\T}\hat \thetavec \bvec'(\hat u_{it})} \otimes   \xx_{it}  \xx_{it}^{\T}  -  \left( \frac{\bvec(\hat u_{it})  }{\xx_{it}^{\T}\hat \thetavec \bvec'(\hat u_{it})}  \otimes \xx_{it}  \right) \frac{\hat \sigma_i^2}{T} \sum_{t=1}^T \left( \frac{\bvec(\hat u_{it})  }{\xx_{it}^{\T}\hat \thetavec \bvec'(\hat u_{it})}  \otimes \xx_{it}  \right)^{\T} ,\\
\hat \H_{\phivec i} &=&  \frac{\cvec(\hat v_{i}) \cvec(\hat v_{i})^{\T}  }{\zz_{i}^{\T}\phivec^0 \cvec'(\hat v_{i})} \otimes  \zz_{i}\zz_{i}^{\T}  -  \left( \frac{\cvec(\hat v_{i})  }{\zz_{i}^{\T}\hat \phivec \cvec'(\hat v_{i})}  \otimes \zz_{i}  \right) \frac{\hat \sigma_i^2}{T} \left( \frac{\cvec(\hat v_{i})  }{\zz_{i}^{\T}\hat \phivec \cvec'(\hat v_{i})}  \otimes \zz_{i}  \right)^{\T} ,\\
\hat  \bvec_{\thetavec it} &=&   \left\{ \frac{[\hat \sigma_i^2(\hat u_{it} - 0.5) + \hat \beta_i ] \bvec(\hat u_{it})  }{\xx_{it}^{\T}\hat \thetavec \bvec'(\hat u_{it})}  -  \frac{\hat \sigma_i^4 \bvec(\hat u_{it})  \xx_{it}^{\T}\thetavec^0 \bvec''(\hat u_{it})}{24[\xx_{it}^{\T}\hat \thetavec \bvec'(\hat u_{it})]^3} - \frac{\hat \sigma_i^4 \bvec'(\hat u_{it})  }{24[\xx_{it}^{\T}\hat \thetavec \bvec'(\hat u_{it})]^2}\right\} \otimes \xx_{it}, \\
\hat \bvec_{\phivec i} &=& \left\{ \frac{[\hat \sigma_i^2(\hat v_i - 0.5) - \hat \beta_i ]\cvec(\hat v_{i})  }{\zz_{i}^{\T}\hat \phivec \cvec'(\hat v_{i})}  +  \frac{\hat \sigma_i^4 \cvec(\hat v_{i})  \zz_{i}^{\T}\hat \phivec \cvec''(\hat v_{i})}{24[\zz_{i}^{\T}\hat \phivec \cvec'(\hat v_{i})]^3} - \frac{\hat \sigma_i^4 \cvec'(\hat v_{i})  }{24[\zz_{i}^{\T}\hat \phivec^0 \cvec'(\hat v_{i})]^2}\right\}\otimes \zz_{i},\\
\hat \varphivec_{\thetavec it} &=& [\bar \Bvec - \Bvec(\hat u_{it})] \otimes \xx_{it} + \frac{\hat \sigma_i^2}{T} \sum_{s=1}^T \left( \frac{\bvec(\hat u_{is})  }{\xx_{is}^{\T}\hat \thetavec \bvec'(\hat u_{is})}  \otimes \xx_{is}  \right) (\hat u_{it} - 0.5),\\
\hat \beta_i &=&  \hat \sigma_i^4 \frac{1}{T} \sum_{t=1}^T  \left(\frac{\hat u_{it}  - 0.5}{\xx_{it}^{\T}\hat \thetavec \bvec'(\hat u_{it})} -  \frac{\hat \sigma_i^2 \xx_{it}^{\T}\hat \thetavec \bvec''(\hat u_{it})}{24[\xx_{it}^{\T}\hat \thetavec \bvec'(\hat u_{it})]^3}\right),\ \
 \hat  \sigma_i^2 =   \left[ \frac{1}{T} \sum_{t=1}^T   \frac{1  }{\xx_{it}^{\T}\hat \thetavec \bvec'(\hat u_{it})} \right]^{-1}.
\end{eqnarray*}
The consistency of these estimators follows from the law of large numbers and consistency of $\hat \thetavec$, $\hat \phivec$ and $\hat \alpha_i$, $1 \leq i \leq N$, together with the continuous mapping theorem, as all the components are continuous functions of the parameters. 

\section{GOODNESS-OF-FIT MEASURES AND MODEL SELECTION}\label{sec:gof}

\subsection{Goodness-of-fit}\label{subsec:gof}
To assess the model fit, we use the fact that, under the true model, $\hat u_{it}$ and $\hat v_i$ consistently 
estimate the realizations of the two independent uniform variables $U_{it}$ and $V_i$ that generated the data.

A graphical inspection of the joint and marginal distributions of $\hat u_{it}$ and $\hat v_i$
is always recommended. For a formal test, we suggest comparing the empirical distribution of $\hat u_{it}$ and $\hat v_i$, 
$\hat F_{\hat u \hat v}$, with the distribution of $U_{it}$ and $V_i$, given by
\[F_{uv}(u,v) = uv.\]
Following \citeauthor{iqr} (\citeyear{iqr, ctiqr}), we compute a p-value for the null hypothesis
$H_0:$ \{\textit{the model is correct}\} using a Monte Carlo procedure:
\begin{itemize}
\item{\textit{step 0}: compute a test statistic $D$ that measures the distance between 
$\hat F_{\hat u \hat v}(\hat u_{it}, \hat v_i)$ and $F_{uv}(\hat u_{it}, \hat v_i) = \hat u_{it}\hat v_i$;}
\item{\textit{step 1}: simulate new data as $Y_{it}^* = \xx_{it}^{\T}\hat\thetavec\bvec(U_{it}^*) + \zz_i^{\T}\hat\phivec\cvec(V_i^*)$
by randomly generating $(U_{it}^*, V_i^*)$ from two independent $\text{U}(0,1)$ distributions;}
\item{\textit{step 2}: fit the model on the simulated data and compute the corresponding value $D^*$ of the test statistic.}
\end{itemize}
After repeating steps 1-2 for a sufficient number of times, the p-value is computed as the empirical
proportion of cases in which $D^* > D$. In step 1, it is also possible to take a random sample of clusters, 
and to resample the covariates' value within each cluster. To assess local fit, the test could be repeated within subsets of
the original sample identified by specific values of the covariates.

In the implementation of the \texttt{qrcm} package, 
we chose $D$ to be the Kolmogorov-Smirnov statistic, $\sup_{u,v} |\hat F_{\hat u \hat v}(u, v) - uv|$.
This testing procedure is usually reliable, as indicated by the simulation results reported in Section \ref{sec:sim}.

\subsection{Model selection}\label{subsec:ms}
As suggested in Section \ref{subsec:add} and exemplified in the real-data example presented in Section \ref{sec:app}, 
it is usually possible to identify numerous alternative models that have a similar fit
and are not rejected by a goodness-of-fit test. This can be explained by the fact that the same coefficient
functions can be well approximated by different parametric functions.

Important criteria for model selection include parsimony, flexibility, and interpretability.
Nested models can be compared by standard Wald test. Let $\aahat = (\hat \alpha_1, \ldots, \hat \alpha_N)$. To compare non-nested models,
the value of $L_1(\hat\thetavec,\aahat)$ and $L_2(\hat\phivec,\aahat)$ can be used to construct
information criteria such as the \textsc{AIC} \citep{akaike} and the \textsc{BIC} \citep{schwarz}.
These criteria were initially designed for likelihood-based estimators, but can be extended to
estimators defined by the minimizer of a loss function.
For example, a modification of \textsc{BIC} criterion for M-estimators has been described 
by \cite{machado}, while \cite{koenker} used \textsc{AIC}
to compare quantile regression models. 
Consider the probability density function of the asymmetric Laplace distribution,
$$f_p(y \mid \mu,\sigma) = \frac{p(1 - p)}{\sigma(p)}\exp\Bigl\{-\frac{\rho(y - \mu(p))}{2\sigma(p)}\Bigr\},$$
where $\mu(p)$ is a location parameter and corresponds to the $p$-th quantile of the distribution,
while $\sigma(p)$ is a scale parameter.
Although this distribution is not generally considered a plausible model, its 
log-likelihood has been used by numerous authors, including \cite{km1999}
and \cite{lee}, to obtain measures of goodness-of-fit for quantile regression.
Simple algebra permits showing that 
\begin{eqnarray}
\hat\thetavec &=&\argmin_{\thetavec} {\int_0^1{
   \sum_{i = 1}^N\sum_{t = 1}^T 
  \log f_u(y_{it} - \hat\alpha_i \mid \xx_{it}^\T\thetavec\bvec(u), \sigma_1) \mathrm{d}u}}, \nonumber\\\nonumber
\hat\phivec &=& \argmin_{\phivec} {\int_0^1{
   \sum_{i = 1}^N 
  \log f_v(\hat\alpha_i \mid \zz_{i}^\T\phivec\cvec(v), \sigma_2) \mathrm{d}v}},
\end{eqnarray}
i.e., $\hat\thetavec$ and $\hat\phivec$ minimize an ``average'' Laplace log-likelihood,  
in which $u$ and $v$ have been integrated away. 
After substituting $\hat \sigma_1 = L_1(\hat\thetavec, \aahat)/(2NT)$ and $\hat \sigma_2 = L_2(\hat\phivec, \aahat)/(2N)$, 
we obtain the following \textsc{AIC} and \textsc{BIC}:
$$
\textsc{AIC}_1 = \log L_1(\hat\thetavec, \aahat) + \frac{q_1}{NT}, \hspace{0.2cm} \textsc{BIC}_1  = \log L_1(\hat\thetavec, \aahat) + \frac{q_1\log{(NT)}}{2NT},
$$
$$
\textsc{AIC}_2 = \log L_2(\hat\phivec, \aahat) + \frac{q_2}{N}, \hspace{0.2cm} \textsc{BIC}_2  = \log L_2(\hat\phivec, \aahat) + \frac{q_2\log{(N)}}{2N}
$$
where $q_1$ and $q_2$ are the number of non-zero elements of $\thetavec$ and $\phivec$, respectively.
Note that \textsc{BIC}$_1$ can be obtained from equation 2.3 of \cite{lee} by replacing the loss of standard quantile regression
with $L_1(\hat\thetavec, \aahat)$. 

The proposed criteria seem to work well in simulation (see Appendix C). However, they often tend to reward parsimony, possibly sacrificing goodness of fit.
The testing procedure described in Section \ref{subsec:gof} should always be used to perform a preliminary screening of the candidate models.

%When evaluating goodness of fit of estimated models, the described \textsc{AIC} and \textsc{BIC} may serve as initial references. 
%Their properties and value, however, have not yet been thoroughly investigated, and further work is needed before they can be recommended for use in practice. 

\section{SIMULATION RESULTS}\label{sec:sim}
We analyze the performance of the estimators $\hat{\betavec}(u)$ and $\hat{\gammavec}(v)$ in finite samples through numerical simulations. In particular, we report the biases and standard errors of these estimators for different values of the dimensions $T$ and $N$ and the orders of the quantiles $u$ and $v$. We also evaluate the empirical size and power of the goodness of fit test.

%In Section \ref{sec:asy}, we developed asymptotic theory by letting both $N$ and $T$ tend to infinity.
%In real applications, however, $T$ is usually small and scenarios in which $T \to \infty$ appear unrealistic.
%Consider, for example, an experiment in which childrens' weight and height are repeatedly
%measured to estimate a growth chart. While it is possible to increase the number of children indefinitely, it 
%seems unrealistic to measure the same child  a very large number of times.
%
%[REWRITE THE PREVIOUS PARAGRAPH]

%In this section, we use a computer simulation to assess the performance of our estimator in scenarios in which
%$T$ is relatively small. We assume the following model to hold:
We used the following design to generate the data:
\[Y_{it} = \beta_0(U_{it}) + \beta_1(U_{it})x_{it} + \gamma_0(V_i) + \gamma_1(V_i)z_i\]
where $x_{it} \sim \text{Beta}(2,2)$ and $z_i \sim \text{U}(0,3)$.  In simulation 1, we defined:
\begin{eqnarray}
\beta_0(u) = 1 - 0.5\log(1 - u), & & \beta_1(u) = 1 + 10(u - 0.5)^3, \nonumber\\ 
\gamma_0(v) = 0.5\zeta(v), & & \gamma_1(v) = 0.5\zeta(v), \nonumber
\end{eqnarray}
where $\zeta(v)$ is the quantile function of a standard normal distribution. In simulation 2, we defined: 
\begin{eqnarray}
\beta_0(u) = 2(1 - (1 - u)^{1/4}), & & \beta_1(u) = 3(1 + u), \nonumber\\
\gamma_0(v) = \log(1 - \log(1 - v)), & & \gamma_1(v) = 0.5\log(1 - \log(1 - v)). \nonumber
\end{eqnarray}

To fit the true model, we used $\bvec(u) = \left[1, -\log(1 - u), (u - 0.5)^3\right]^{\T}$ and $\cvec(v) = \left[\zeta(v)\right]$
in simulation 1, and $\bvec(u) = \left[1, 1 - (1 - u)^{1/4}, u\right]^{\T}$ and $\cvec(v) = \left[\log(1 - \log(1 - v))\right]$ in simulation~2.
We ran $R = 1000$ Monte Carlo simulations, with $N = \{150, 300\}$ and $T = \{5, 10\}$. 
In Tables \ref{sim150} and \ref{sim300}, we report the true value of $\betavec(\cdot)$ and $\gammavec(\cdot)$ at the quintiles,
their average estimates, the empirical standard errors across simulations, and the average estimates of the asymptotic standard errors.
Despite the incidental parameters problem, a small bias was found, even with small values of $T$. Also, as $T$ increased,
the observed bias decreased rapidly as predicted by the asymptotic theory of Section \ref{sec:asy}. The estimated standard errors were, on average, very close to their true values.

To assess the performance of the goodness-of-fit procedure described in Section \ref{subsec:gof}, we selected two
nominal significance levels, $\alpha = 0.05$ and $\alpha = 0.10$, and computed the empirical probability
of type I error ($\tilde{\alpha}$) and the power ($1 - \tilde{\beta}$) of the Kolmogorov-Smirnov goodness-of-fit test described in Section \ref{sec:gof}.
The power was estimated by the empirical probability to reject a misspecified model in which the quantile function
was described by an incorrect basis function. In simulation 1, we incorrectly parametrized
$\beta_1(u)$ as a linear function, $\beta_1(u) = \theta_{01} + \theta_{11}u$. In simulation 2, we incorrectly assumed that the individual effects have
a logistic distribution, defined by $\cvec(v) = \left[\log(v/(1 - v))\right]$. Results are shown in the bottom rows of Tables \ref{sim150} and \ref{sim300}.
The risk of type I error was very close to its nominal level, and approached it as the value of $T$ increased. With small values of $N$ and $T$,
the risk of type II error was relatively large, and the power was often less than $50\%$.
However, with $N = 300$ and $T = 10$, and a nominal level of $0.10$ for rejection, the incorrect models were rejected in more than $90\%$ of cases in both scenarios. 

Additional simulation results are reported in Appendix C, where we compare our estimator with \cites{koenker2004}
penalized fixed-effects quantile regression, and discuss the performance of the model selection criteria
presented in Section \ref{subsec:ms}.

%%%%%%%%%%%%%%%%%%%%%%%%%%%%%%%%%%%%%%%%%%%%%%%%%%%%%%%%%%%
%%%%%%%%%%%%%%%%%%%%%%%%%%%%%%%%%%%%%%%%%%%%%%%%%%%%%%%%%%%
%%%%%%%%%%%%% SIMULATION %%%%%%%%%%%%%%%%%%%%%%%%%%%%%%%%%%%%%%%%
%%%%%%%%%%%%%%%%%%%%%%%%%%%%%%%%%%%%%%%%%%%%%%%%%%%%%%%%%%%
%%%%%%%%%%%%%%%%%%%%%%%%%%%%%%%%%%%%%%%%%%%%%%%%%%%%%%%%%%%

%%%%%%%%%%%%%%%%%%%%%%%%%%%%%%%%%%%%%%%%%%%%%%%%%%%%%%%%%%%
%%%%%%%%%%%%% N = 150 %%%%%%%%%%%%%%%%%%%%%%%%%%%%%%%%%%%%%%%%%%
%%%%%%%%%%%%%%%%%%%%%%%%%%%%%%%%%%%%%%%%%%%%%%%%%%%%%%%%%%%

\begin{sidewaystable}
\footnotesize
\caption{Simulation results with $N = 150$}
\label{sim150}
\centering
\begin{tabular}{cc
m{0.01cm}m{0.45cm}m{0.45cm}m{0.2cm}m{0.2cm}
m{0.01cm}m{0.45cm}m{0.45cm}m{0.2cm}m{0.2cm}
m{0.01cm}m{0.45cm}m{0.45cm}m{0.2cm}m{0.2cm}
m{0.01cm}m{0.45cm}m{0.45cm}m{0.2cm}m{0.2cm}
}
%\noalign{\vspace{0.2cm}}
\hline
\hline
%\noalign{\vspace{0.1cm}}
&&& \multicolumn{9}{c}{Simulation 1} & & \multicolumn{9}{c}{Simulation 2}\\
%\noalign{\vspace{0.1cm}}
\cline{4-12}\cline{14-22}
%%%%%%%%%%%%%%%%%%%%%%%%%%%%%%%%%%%%%%%%%%%%%%%%%%%%%%%%%%%
%\noalign{\vspace{0.1cm}}
$T = 5$ & $u$ && 
	$\beta_0$ & $\hat\beta_0$ & $\text{se}$ & $\hat{\text{se}}$ &&
	$\beta_1$ & $\hat\beta_1$ & $\text{se}$ & $\hat{\text{se}}$ &&
	$\beta_0$ & $\hat\beta_0$ & $\text{se}$ & $\hat{\text{se}}$ &&
	$\beta_1$ & $\hat\beta_1$ & $\text{se}$ & $\hat{\text{se}}$\\
\cline{4-7}\cline{9-12}\cline{14-17}\cline{19-22}
%\noalign{\vspace{0.1cm}}
& $0.2$ && 1.11 & 1.17 & .10 & .11	&& 0.73 & 0.72 & .10 & .11	&& 0.11 & 0.11 & .01 & .01	&& 3.60 & 3.79 & .11 & .12\\
& $0.4$ && 1.26 & 1.30 & .11 & .11	&& 0.99 & 0.97 & .11 & .11	&& 0.24 & 0.24 & .03 & .03	&& 4.20 & 4.28 & .12 & .12\\
& $0.6$ && 1.46 & 1.48 & .11 & .11	&& 1.01 & 0.99 & .11 & .11	&& 0.41 & 0.41 & .05 & .05	&& 4.80 & 4.77 & .15 & .15\\
& $0.8$ && 1.80 & 1.79 & .12 & .12	&& 1.27 & 1.24 & .12 & .13	&& 0.66 & 0.66 & .08 & .09	&& 5.40 & 5.26 & .19 & .19\\
%%%%%%%%%%%%%%%%%%%%%%%%%%%%%%%%%%%%%%%%%%%%%%%%%%%%%%%%%%%
%\noalign{\vspace{0.1cm}}
& $v$ && 
	$\gamma_0$ & $\hat\gamma_0$ & $\text{se}$ & $\hat{\text{se}}$ &&
	$\gamma_1$ & $\hat\gamma_1$ & $\text{se}$ & $\hat{\text{se}}$ &&
	$\gamma_0$ & $\hat\gamma_0$ & $\text{se}$ & $\hat{\text{se}}$ &&
	$\gamma_1$ & $\hat\gamma_1$ & $\text{se}$ & $\hat{\text{se}}$\\
\cline{4-7}\cline{9-12}\cline{14-17}\cline{19-22}
%\noalign{\vspace{0.1cm}}
& $0.2$ && 0.42 & 0.37 & .09 & .10	&& 0.42 & 0.40 & .08 & .08	&& 0.20 & 0.20 & .04 & .04	&& 0.10 & 0.10 & .02 & .03\\
& $0.4$ && 0.13 & 0.11 & .03 & .03	&& 0.13 & 0.12 & .02 & .02	&& 0.41 & 0.41 & .07 & .08	&& 0.21 & 0.20 & .05 & .05\\
& $0.6$ && 0.13 & 0.11 & .03 & .03	&& 0.13 & 0.12 & .02 & .02	&& 0.65 & 0.64 & .11 & .13	&& 0.33 & 0.32 & .08 & .09\\
& $0.8$ && 0.42 & 0.37 & .09 & .10	&& 0.42 & 0.40 & .08 & .08	&& 0.96 & 0.95 & .17 & .19	&& 0.48 & 0.47 & .12 & .13\\
%%%%%%%%%%%%%%%%%%%%%%%%%%%%%%%%%%%%%%%%%%%%%%%%%%%%%%%%%%%
%\noalign{\vspace{0.1cm}}
$T = 10$ & $u$ && 
	$\beta_0$ & $\hat\beta_0$ & $\text{se}$ & $\hat{\text{se}}$ &&
	$\beta_1$ & $\hat\beta_1$ & $\text{se}$ & $\hat{\text{se}}$ &&
	$\beta_0$ & $\hat\beta_0$ & $\text{se}$ & $\hat{\text{se}}$ &&
	$\beta_1$ & $\hat\beta_1$ & $\text{se}$ & $\hat{\text{se}}$\\
\cline{4-7}\cline{9-12}\cline{14-17}\cline{19-22}
%\noalign{\vspace{0.1cm}}
& $0.2$ && 1.11 & 1.14 & .10 & .10	&& 0.73 & 0.73 & .06 & .07	&& 0.11 & 0.11 & .01 & .01	&& 3.60 & 3.68 & .07 & .08\\
& $0.4$ && 1.26 & 1.27 & .10 & .10	&& 0.99 & 0.98 & .07 & .07	&& 0.24 & 0.23 & .02 & .02	&& 4.20 & 4.23 & .08 & .08\\
& $0.6$ && 1.46 & 1.47 & .10 & .10	&& 1.01 & 1.00 & .07 & .07	&& 0.41 & 0.40 & .03 & .03	&& 4.80 & 4.78 & .10 & .10\\
& $0.8$ && 1.80 & 1.80 & .10 & .10	&& 1.27 & 1.25 & .08 & .08	&& 0.66 & 0.64 & .05 & .06	&& 5.40 & 5.33 & .12 & .13\\
%%%%%%%%%%%%%%%%%%%%%%%%%%%%%%%%%%%%%%%%%%%%%%%%%%%%%%%%%%%
%\noalign{\vspace{0.1cm}}
& $v$ && 
	$\gamma_0$ & $\hat\gamma_0$ & $\text{se}$ & $\hat{\text{se}}$ &&
	$\gamma_1$ & $\hat\gamma_1$ & $\text{se}$ & $\hat{\text{se}}$ &&
	$\gamma_0$ & $\hat\gamma_0$ & $\text{se}$ & $\hat{\text{se}}$ &&
	$\gamma_1$ & $\hat\gamma_1$ & $\text{se}$ & $\hat{\text{se}}$\\
\cline{4-7}\cline{9-12}\cline{14-17}\cline{19-22}
%\noalign{\vspace{0.1cm}}
& $0.2$ && 0.42 & 0.40 & .09 & .09	&& 0.42 & 0.41 & .08 & .08	&& 0.20 & 0.21 & .03 & .04	&& 0.10 & 0.10 & .02 & .03\\
& $0.4$ && 0.13 & 0.12 & .03 & .03	&& 0.13 & 0.12 & .02 & .02	&& 0.41 & 0.42 & .07 & .07	&& 0.21 & 0.20 & .05 & .05\\
& $0.6$ && 0.13 & 0.12 & .03 & .03	&& 0.13 & 0.12 & .02 & .02	&& 0.65 & 0.66 & .11 & .11	&& 0.33 & 0.32 & .08 & .08\\
& $0.8$ && 0.42 & 0.40 & .09 & .09	&& 0.42 & 0.41 & .08 & .08	&& 0.96 & 0.98 & .16 & .17	&& 0.48 & 0.47 & .12 & .12\\
%%%%%%%%%%%%%%%%%%%%%%%%%%%%%%%%%%%%%%%%%%%%%%%%%%%%%%%%%%%
%\noalign{\vspace{0.1cm}}
\hline
\hline
%%%%%%%%%%%%%%%%%%%%%%%%%%%%%%%%%%%%%%%%%%%%%%%%%%%%%%%%%%%
%\noalign{\vspace{0.5cm}}
%&&& \multicolumn{9}{c}{Simulation 1} & & \multicolumn{9}{c}{Simulation 2}\\
%\noalign{\vspace{0.1cm}}
\cline{4-12}\cline{14-22}
%\noalign{\vspace{0.1cm}}
&&& \multicolumn{4}{l}{$T = 5$} && \multicolumn{4}{l}{$T = 10$} && \multicolumn{4}{l}{$T = 5$} && \multicolumn{4}{l}{$T = 10$}\\
\cline{4-7}\cline{9-12}\cline{14-17}\cline{19-22}
%\noalign{\vspace{0.1cm}}
& $\alpha$ &&  
	$\tilde{\alpha}$& &   \multicolumn{2}{l}{$1 - \tilde{\beta}$} &&
	$\tilde{\alpha}$& &   \multicolumn{2}{l}{$1 - \tilde{\beta}$} &&
	$\tilde{\alpha}$& &   \multicolumn{2}{l}{$1 - \tilde{\beta}$} &&
	$\tilde{\alpha}$& &   \multicolumn{2}{l}{$1 - \tilde{\beta}$}\\
& $0.05$ && 0.05 && 0.22 &&& 0.04 && 0.50 &&& 0.04 && 0.30 &&& 0.04 && 0.53\\
& $0.10$ && 0.10 && 0.36 &&& 0.09 && 0.80 &&& 0.10 && 0.45 &&& 0.09 && 0.69\\
%%%%%%%%%%%%%%%%%%%%%%%%%%%%%%%%%%%%%%%%%%%%%%%%%%%%%%%%%%%
%\noalign{\vspace{0.1cm}}
\hline
\hline
%\noalign{\vspace{0.1cm}}
%%%%%%%%%%%%%%%%%%%%%%%%%%%%%%%%%%%%%%%%%%%%%%%%%%%%%%%%%%%
\end{tabular}
%\vspace{0.1cm}
\footnotesize\center
Summary of simulation results, based on $R = 1000$ Monte Carlo replications, with $N = 150$ and $T = \{5, 10\}$. 
For each coefficient, we report the true \textit{absolute} value ($\beta, \gamma$) at the quintiles $(0.2, 0.4, 0.6, 0.8)$,
the average estimate ($\hat\beta, \hat\gamma$), the standard error ($\text{se}$), computed as the standard deviation
of the estimated model parameters across simulations, and the average estimated asymptotic standard error ($\hat{\text{se}}$). 
The bottom table reports, for two different nominal levels $\alpha = 0.05, 0.10$, the empirical probability
of type I error ($\tilde{\alpha}$) and the power ($1 - \tilde{\beta}$) of the Kolmogorov-Smirnov goodness-of-fit test described in Section \ref{sec:gof}.
\end{sidewaystable}

%%%%%%%%%%%%%%%%%%%%%%%%%%%%%%%%%%%%%%%%%%%%%%%%%%%%%%%%%%%
%%%%%%%%%%%%% N = 300 %%%%%%%%%%%%%%%%%%%%%%%%%%%%%%%%%%%%%%%%%%
%%%%%%%%%%%%%%%%%%%%%%%%%%%%%%%%%%%%%%%%%%%%%%%%%%%%%%%%%%%

\begin{sidewaystable}
\footnotesize
\caption{Simulation results with $N = 300$}
\label{sim300}
\centering
\begin{tabular}{cc
m{0.01cm}m{0.45cm}m{0.45cm}m{0.2cm}m{0.2cm}
m{0.01cm}m{0.45cm}m{0.45cm}m{0.2cm}m{0.2cm}
m{0.01cm}m{0.45cm}m{0.45cm}m{0.2cm}m{0.2cm}
m{0.01cm}m{0.45cm}m{0.45cm}m{0.2cm}m{0.2cm}
}
%\noalign{\vspace{0.2cm}}
\hline
\hline
%\noalign{\vspace{0.1cm}}
&&& \multicolumn{9}{c}{Simulation 1} & & \multicolumn{9}{c}{Simulation 2}\\
%\noalign{\vspace{0.1cm}}
\cline{4-12}\cline{14-22}
%%%%%%%%%%%%%%%%%%%%%%%%%%%%%%%%%%%%%%%%%%%%%%%%%%%%%%%%%%%
%\noalign{\vspace{0.1cm}}
$T = 5$ & $u$ && 
	$\beta_0$ & $\hat\beta_0$ & $\text{se}$ & $\hat{\text{se}}$ &&
	$\beta_1$ & $\hat\beta_1$ & $\text{se}$ & $\hat{\text{se}}$ &&
	$\beta_0$ & $\hat\beta_0$ & $\text{se}$ & $\hat{\text{se}}$ &&
	$\beta_1$ & $\hat\beta_1$ & $\text{se}$ & $\hat{\text{se}}$\\
\cline{4-7}\cline{9-12}\cline{14-17}\cline{19-22}
%\noalign{\vspace{0.1cm}}
& $0.2$ && 1.11 & 1.17 & .08 & .08	&& 0.73 & 0.72 & .07 & .07	&& 0.11 & 0.11 & .01 & .01	&& 3.60 & 3.79 & .08 & .08\\
& $0.4$ && 1.26 & 1.30 & .08 & .08	&& 0.99 & 0.97 & .08 & .08	&& 0.24 & 0.24 & .02 & .02	&& 4.20 & 4.28 & .08 & .08\\
& $0.6$ && 1.46 & 1.48 & .08 & .08	&& 1.01 & 1.00 & .08 & .08	&& 0.41 & 0.41 & .03 & .04	&& 4.80 & 4.77 & .10 & .10\\
& $0.8$ && 1.80 & 1.79 & .08 & .08	&& 1.27 & 1.24 & .09 & .09	&& 0.66 & 0.66 & .06 & .06	&& 5.40 & 5.27 & .13 & .13\\
%%%%%%%%%%%%%%%%%%%%%%%%%%%%%%%%%%%%%%%%%%%%%%%%%%%%%%%%%%%
%\noalign{\vspace{0.1cm}}
& $v$ && 
	$\gamma_0$ & $\hat\gamma_0$ & $\text{se}$ & $\hat{\text{se}}$ &&
	$\gamma_1$ & $\hat\gamma_1$ & $\text{se}$ & $\hat{\text{se}}$ &&
	$\gamma_0$ & $\hat\gamma_0$ & $\text{se}$ & $\hat{\text{se}}$ &&
	$\gamma_1$ & $\hat\gamma_1$ & $\text{se}$ & $\hat{\text{se}}$\\
\cline{4-7}\cline{9-12}\cline{14-17}\cline{19-22}
%\noalign{\vspace{0.1cm}}
& $0.2$ && 0.42 & 0.37 & .06 & .07	&& 0.42 & 0.41 & .06 & .06	&& 0.20 & 0.20 & .03 & .03	&& 0.10 & 0.10 & .02 & .02\\
& $0.4$ && 0.13 & 0.11 & .02 & .02	&& 0.13 & 0.12 & .02 & .02	&& 0.41 & 0.41 & .05 & .06	&& 0.21 & 0.20 & .04 & .04\\
& $0.6$ && 0.13 & 0.11 & .02 & .02	&& 0.13 & 0.12 & .02 & .02	&& 0.65 & 0.65 & .09 & .09	&& 0.33 & 0.32 & .06 & .06\\
& $0.8$ && 0.42 & 0.37 & .06 & .07	&& 0.42 & 0.41 & .06 & .06	&& 0.96 & 0.96 & .13 & .13	&& 0.48 & 0.47 & .09 & .09\\
%%%%%%%%%%%%%%%%%%%%%%%%%%%%%%%%%%%%%%%%%%%%%%%%%%%%%%%%%%%
%\noalign{\vspace{0.1cm}}
$T = 10$ & $u$ && 
	$\beta_0$ & $\hat\beta_0$ & $\text{se}$ & $\hat{\text{se}}$ &&
	$\beta_1$ & $\hat\beta_1$ & $\text{se}$ & $\hat{\text{se}}$ &&
	$\beta_0$ & $\hat\beta_0$ & $\text{se}$ & $\hat{\text{se}}$ &&
	$\beta_1$ & $\hat\beta_1$ & $\text{se}$ & $\hat{\text{se}}$\\
\cline{4-7}\cline{9-12}\cline{14-17}\cline{19-22}
%\noalign{\vspace{0.1cm}}
& $0.2$ && 1.11 & 1.14 & .07 & .07	&& 0.73 & 0.73 & .05 & .05	&& 0.11 & 0.10 & .01 & .01	&& 3.60 & 3.67 & .05 & .05\\
& $0.4$ && 1.26 & 1.27 & .07 & .07	&& 0.99 & 0.98 & .05 & .05	&& 0.24 & 0.23 & .01 & .01	&& 4.20 & 4.23 & .06 & .06\\
& $0.6$ && 1.46 & 1.47 & .07 & .07	&& 1.01 & 1.00 & .05 & .05	&& 0.41 & 0.40 & .02 & .02	&& 4.80 & 4.78 & .07 & .07\\
& $0.8$ && 1.80 & 1.80 & .07 & .07	&& 1.27 & 1.25 & .05 & .06	&& 0.66 & 0.64 & .04 & .04	&& 5.40 & 5.33 & .09 & .09\\
%%%%%%%%%%%%%%%%%%%%%%%%%%%%%%%%%%%%%%%%%%%%%%%%%%%%%%%%%%%
%\noalign{\vspace{0.1cm}}
& $v$ && 
	$\gamma_0$ & $\hat\gamma_0$ & $\text{se}$ & $\hat{\text{se}}$ &&
	$\gamma_1$ & $\hat\gamma_1$ & $\text{se}$ & $\hat{\text{se}}$ &&
	$\gamma_0$ & $\hat\gamma_0$ & $\text{se}$ & $\hat{\text{se}}$ &&
	$\gamma_1$ & $\hat\gamma_1$ & $\text{se}$ & $\hat{\text{se}}$\\
\cline{4-7}\cline{9-12}\cline{14-17}\cline{19-22}
%\noalign{\vspace{0.1cm}}
& $0.2$ && 0.42 & 0.39 & .06 & .07	&& 0.42 & 0.41 & .05 & .06	&& 0.20 & 0.21 & .02 & .02	&& 0.10 & 0.10 & .02 & .02\\
& $0.4$ && 0.13 & 0.12 & .02 & .02	&& 0.13 & 0.12 & .02 & .02	&& 0.41 & 0.42 & .05 & .05	&& 0.21 & 0.20 & .03 & .04\\
& $0.6$ && 0.13 & 0.12 & .02 & .02	&& 0.42 & 0.12 & .02 & .02	&& 0.65 & 0.66 & .07 & .08	&& 0.33 & 0.32 & .05 & .06\\
& $0.8$ && 0.42 & 0.39 & .06 & .07	&& 0.13 & 0.41 & .05 & .06	&& 0.96 & 0.98 & .11 & .12	&& 0.48 & 0.48 & .08 & .08\\
%%%%%%%%%%%%%%%%%%%%%%%%%%%%%%%%%%%%%%%%%%%%%%%%%%%%%%%%%%%
%\noalign{\vspace{0.1cm}}
\hline
\hline
%%%%%%%%%%%%%%%%%%%%%%%%%%%%%%%%%%%%%%%%%%%%%%%%%%%%%%%%%%%
%\noalign{\vspace{0.5cm}}
%&&& \multicolumn{9}{c}{Simulation 1} & & \multicolumn{9}{c}{Simulation 2}\\
%\noalign{\vspace{0.1cm}}
\cline{4-12}\cline{14-22}
%\noalign{\vspace{0.1cm}}
&&& \multicolumn{4}{l}{$T = 5$} && \multicolumn{4}{l}{$T = 10$} && \multicolumn{4}{l}{$T = 5$} && \multicolumn{4}{l}{$T = 10$}\\
\cline{4-7}\cline{9-12}\cline{14-17}\cline{19-22}
%\noalign{\vspace{0.1cm}}
& $\alpha$ &&  
	$\tilde{\alpha}$& &   \multicolumn{2}{l}{$1 - \tilde{\beta}$} &&
	$\tilde{\alpha}$& &   \multicolumn{2}{l}{$1 - \tilde{\beta}$} &&
	$\tilde{\alpha}$& &   \multicolumn{2}{l}{$1 - \tilde{\beta}$} &&
	$\tilde{\alpha}$& &   \multicolumn{2}{l}{$1 - \tilde{\beta}$}\\
& $0.05$ && 0.07 && 0.43 &&& 0.05 && 0.99 &&& 0.04 && 0.52 &&& 0.06 && 0.81\\
& $0.10$ && 0.11 && 0.65 &&& 0.10 && 1.00 &&& 0.08 && 0.67 &&& 0.10 && 0.90\\
%%%%%%%%%%%%%%%%%%%%%%%%%%%%%%%%%%%%%%%%%%%%%%%%%%%%%%%%%%%
%\noalign{\vspace{0.1cm}}
\hline
\hline
%\noalign{\vspace{0.1cm}}
%%%%%%%%%%%%%%%%%%%%%%%%%%%%%%%%%%%%%%%%%%%%%%%%%%%%%%%%%%%
\end{tabular}
%\vspace{0.1cm}
\footnotesize\center
Summary of simulation results with $N = 300$ and $T = \{5, 10\}$. 
\end{sidewaystable}

\clearpage
\section{ANALYSIS OF NGAL DATA}\label{sec:app}

We analyzed data from \cite{app}, aiming to investigate the role of plasma neutrophil gelatinase-associated lipocalin (NGAL)
as a marker of sepsis and acute kidney disfunction. The dataset included 139 patients admitted to the general intensive care unit at
Karolinska University Hospital in Solna, Sweden, between August 2007 and November 2010. Baseline information was collected, and patients 
were classified daily as having sepsis or not. NGAL (mg/mL), procalcitonin (PCT), C-reactive protein (CRP), and 
creatinine changes relative to baseline ($\Delta{\text{creat}}$) were measured daily before discharge, for a total of 1317 plasma samples. 
After removing missing data, individuals with only one observation, and one patient with severe complications, the final sample
included 135 patients for a total sample size of $\sum_{i = 1}^{135} T_i = 1263$. The number of observations per patient
varied between $2$ and $38$, and more than $80\%$ of patients had $T_i \le 14$. 

The goal of our analysis was to estimate conditional quantiles of NGAL, and in particular to measure its association with sepsis.
The between-patient variability appeared to be very large, reflecting the presence of important differences in the initial health conditions. 
We formulated a regression model with the following predictors: a binary indicator of sepsis status, an indicator of $\Delta{\text{creat}} \ge 50$, 
age (centered at its median, 52 years, and divided by 10), an indicator of female gender, and time since hospitalization (weeks). Age and gender
were cluster-invariant and were also included as level-2 predictors. 

The response variable was log-transformed, which made it more plausible to define individual
effects on the additive scale as in model (\ref{themodel}). 
The regression function was
\begin{eqnarray}
\log(\text{NGAL}_{it}) & = & \beta_0(U_{it}) + \beta_1(U_{it})\text{sepsis}_{it} + \beta_2(U_{it})I(\Delta{\text{creat}}_{it} > 50)  \nonumber \\
 & + & \beta_3(U_{it})(\text{age}_{i} - 52)/10 + \beta_4(U_{it})\text{female}_{i} + \beta_5(U_{it})\text{time}_{it}  \nonumber \\
 & + & \gamma_0(V_{i}) + \gamma_1(V_{i})(\text{age}_{i} - 52)/10 + \gamma_2(V_{i})\text{female}_{i}. \nonumber
\end{eqnarray}
We formulated a variety of models, in which $\beta_0(\cdot)$ and $\gamma_0(\cdot)$ were  unbounded,  
while the other coefficients were modeled by bounded functions.
To facilitate interpretation, we forced $\gammavec(0.5) = 0$, assigning the
individual effects a zero-median distribution in which level-2 covariates only affect the scale parameter.

Selected modeling options are illustrated in Table \ref{spec}. 
Different models appeared to fit the data well, and were not rejected by the goodness-of-fit
test described in Section \ref{sec:gof}. The following model combined simplicity
and flexibility, and was selected for illustrative purposes:
\begin{eqnarray}
\beta_0(u \mid \thetavec) &=& \theta_{00} + \theta_{01}\log(u) + \theta_{02}\log(1 - u), \nonumber\\
\beta_j(u \mid \thetavec) &=& \theta_{j0} + \theta_{j3}u + \theta_{j4}u^{1/4} + \theta_{j5}(1 - u)^{1/4}, \hspace{0.5cm} j = 1, \ldots, 5, \nonumber\\
\gamma_0(v \mid \phivec) &=& \phi_{01}\log(2v) + \phi_{02}\log(2(1 - v)), \nonumber\\
\gamma_j(v \mid \phivec) &=& \phi_{j3}(v - 0.5), \hspace{0.5cm} j = 1, \ldots, 2. \nonumber
\end{eqnarray}
The level-1 and level-2 intercepts were described by different versions of the asymmetric Logistic distribution.
The coefficient functions associated with level-1 covariates were a combination of linear and root-4 functions,
while those of the level-2 predictors were assumed to be linear.

The p-value of the Kolmogorov-Smirnov test was $0.21$. To assess local fit, the test was repeated 
in subsamples with different values of the covariates (e.g., the females, those with $\Delta{\text{creat}} > 50$, etc.).
No significant evidence of model misspecification was found.

All 27 model parameters are reported in Table \ref{est_csi}, while regression coefficients at selected quantiles are
summarized in Table \ref{est_beta}. We represent graphically the quantile regression coefficient functions in Figure 2,
where we also report a ``nonparametric'' fit obtained by modeling all coefficients 
as piecewise linear functions with knots at the deciles.

Results showed that the distribution of the individual effects was almost symmetric (as suggested by the fact that
$\hat\phi_{01} \simeq -\hat\phi_{02}$) and that its variance was not significantly affected by 
cluster-level predictors. Instead, all predictors apart from gender appeared to be associated with the level-1 response. 
In particular, the coefficients associated with sepsis, $\Delta{\text{creat}}_{it} > 50$ and age were consistently positive
at all quantiles, while the coefficient of time was always negative. 
The sepsis status was associated with a percentile difference of about 0.12 at quantiles 0.2, 0.4, 0.6, 0.8. 
As shown by Figure 2, an even larger percentile difference was found at quantiles above 0.8.

%%%%%%%%%%%%%%%%%%%%%%%%%%%%%%%%%%%%%%%%%%%%%%%%%
%%%%%%%%%%%%%%%%%%%%%%%%%%%%%%%%%%%%%%%%%%%%%%%%%
%%%%%%%%%%%%%%%%%%%%%%%%%%%%%%%%%%%%%%%%%%%%%%%%%

\begin{table}[h]
\caption{Alternative specifications of $\bvec(u)$ and $\cvec(v)$}
\label{spec}
\small
\centering
\begin{tabular}{cccc}
\noalign{\vspace{0.2cm}}
\hline
\hline
\noalign{\vspace{0.2cm}}
$\beta_0(u)$ & $\beta_j(u)$ & $\gamma_0(v)$ & $\gamma_j(v)$ \\
\noalign{\vspace{0.1cm}}
\cmidrule(lr){1-1}\cmidrule(lr){2-2}\cmidrule(lr){3-3}\cmidrule(lr){4-4}
$\zeta(u)$ & $u$ & $\zeta(v)$ & $(v - 0.5)$\\
$\log(\frac{u}{1 - u})$ & $u, u^2, u^3, \ldots$ & $\log(\frac{v}{1 - v})$ & $(v - 0.5)^3$\\
$\log(u), \log(1 - u)$ & $u, \cos(\pi u), \sin(\pi u)$ & $\log(2v), \log(2(1 - v))$ & $(v - 0.5), (v - 0.5)^3$\\
$u, \log(u), \log(1 - u)$ &  $u^{\delta}, (1 - u)^{\delta}$ & $v - 0.5, \log(2v), \log(2(1 - v))$ & $(v - 0.5)^3_{-}, (v - 0.5)^3_{+}$\\
\noalign{\vspace{0.1cm}}
\hline
\hline
\noalign{\vspace{0.1cm}}
\end{tabular}
\vspace{0.1cm}\footnotesize\center
In the table, $\beta_0(u)$ and $\gamma_0(v)$ denote level-1 and level-2 intercept, respectively,
while $\beta_j(u)$ and $\gamma_j(v)$ represent coefficients associated with generic level-1 and level-2
covariates. Different parametric models are represented by the functions that compose $\bvec(u)$ 
and $\cvec(v)$. A constant term $b(u) = 1$ was always included. 
The notation $\zeta(\cdot)$ is used for the quantile function of a standard normal distribution,
and we defined $(v - 0.5)^3_{-} = I(v \le 0.5)(v - 0.5)^3$ and $(v - 0.5)^3_{+} = I(v > 0.5)(v - 0.5)^3$.
\end{table}

%%%%%%%%%%%%%%%%%%%%%%%%%%%%%%%%%%%%%%%%%%%%%%%%%
%%%%%%%%%%%%%%%%%%%%%%%%%%%%%%%%%%%%%%%%%%%%%%%%%
%%%%%%%%%%%%%%%%%%%%%%%%%%%%%%%%%%%%%%%%%%%%%%%%%

\begin{figure}[h]
    \centering    
	\makebox{\includegraphics[scale = 0.5]{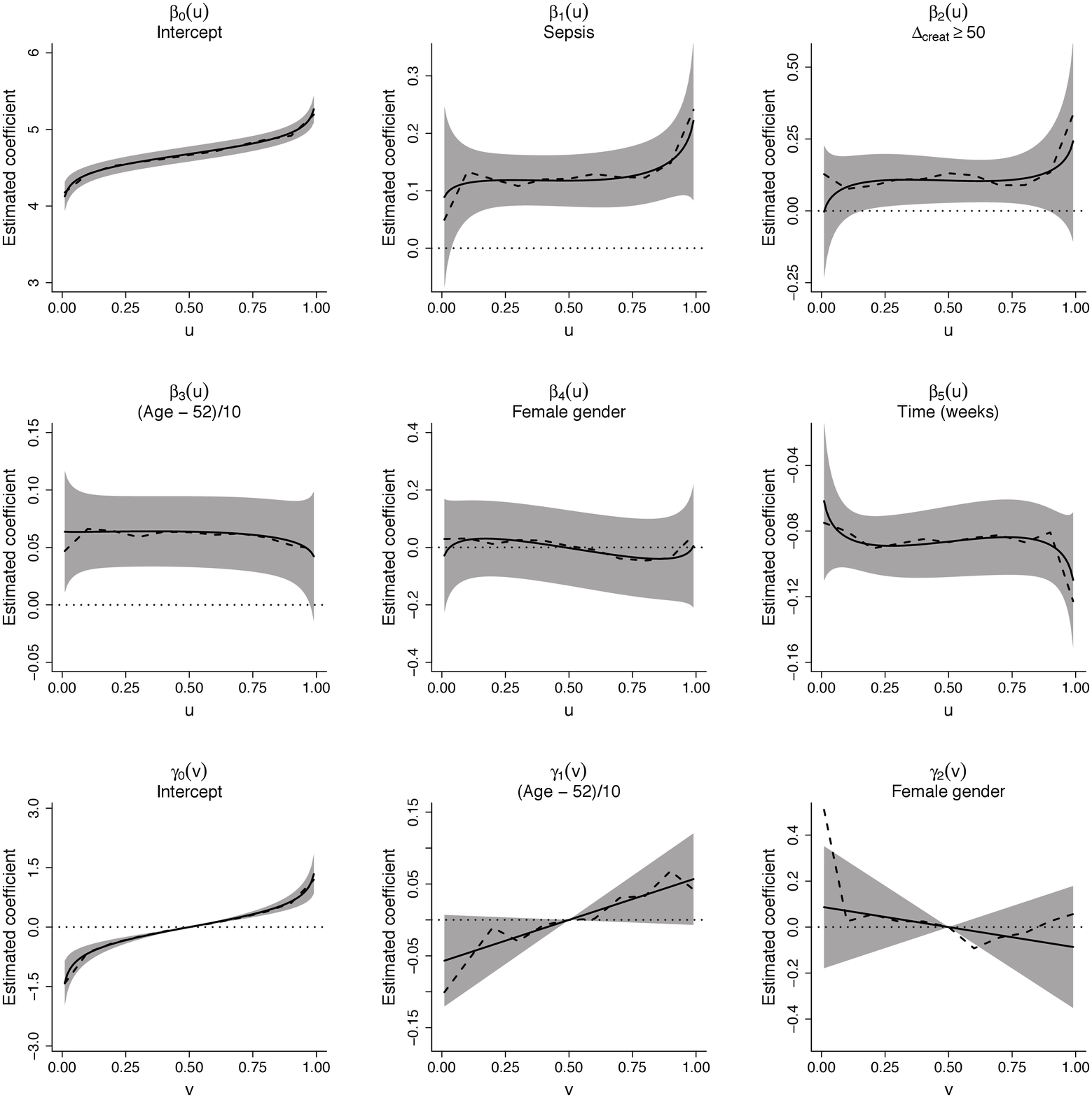}}
	\caption{\label{fig2}
	Continuous lines represent the estimated quantile regression coefficient functions, based on the parametric model
	summarized in Table \ref{est_csi}. Shaded areas represents pointwise 95$\%$ confidence intervals.
	The dashed lines are obtained from a ``nonparametric'' model in which
 	$\betavec(u)$ and $\gammavec(v)$ were fitted by piecewise linear functions with knots at the deciles.
	The dotted horizontal lines indicate the zero.}
\end{figure}

%%%%%%%%%%%%%%%%%%%%%%%%%%%%%%%%%%%%%%%%%%%%%%%%%
%%%%%%%%%%%%%%%%%%%%%%%%%%%%%%%%%%%%%%%%%%%%%%%%%
%%%%%%%%%%%%%%%%%%%%%%%%%%%%%%%%%%%%%%%%%%%%%%%%%

\begin{table}[h]
\caption{Estimated model parameters}
\label{est_csi}
\small
\centering
\begin{tabular}{ccccccc}
\noalign{\vspace{0.2cm}}
Level 1 ($\hat\thetavec$) & 1 & $\log(u)$ & $\log(1 - u)$ & $u$ & $u^{1/4}$ & $(1 - u)^{1/4}$\\
\noalign{\vspace{0.1cm}}
\hline
\noalign{\vspace{0.1cm}}
Intercept 						& 4.67 (.07) & 0.12 (.03) & -0.13 (.03) & - & - & - 			\\
Sepsis 							& 0.39 (.26) & - & - & -0.20 (.23) & 0.14 (.29) & -0.35 (.27) 	\\
$\Delta{\text{creat}} \ge 50$ 	& 0.39 (.61) & - & - & -0.44 (.40) & 0.46 (.46) & -0.54 (.57) 	\\
(age - 52)/10 					& 0.00 (.09) & - & - & 0.02 (.07) & 0.00 (.08) & 0.06 (.09) 	\\
female gender 				& 0.16 (.31) & - & - & -0.40 (.23) & 0.34 (.28) & -0.29 (.30) 	\\
time (weeks) 					& -0.14 (.09) & - & - & 0.12 (.08) & -0.12 (.10) & 0.12 (.09) 	\\
\noalign{\vspace{0.1cm}}
\hline
\hline
\noalign{\vspace{0.1cm}}
\end{tabular}
%%%%%%%%%%%%%%%%%%%%%%%%%%%%%%%%%%%%%%
\begin{tabular}{cccc}
\noalign{\vspace{0.2cm}}
 Level 2 ($\hat\phivec$) & $\log(2v)$ & $\log(2(1 - v))$ & $(v - 0.5)$\\
\noalign{\vspace{0.1cm}}
\hline
\noalign{\vspace{0.1cm}}
Intercept 			& 0.33 (.08) & -0.29 (.08) & - 	\\
(age - 52)/10		& - & - & 0.11 (.07) 			\\
female gender 	& - & - & -0.20 (.28) 			\\
\noalign{\vspace{0.1cm}}
\hline
\hline
\noalign{\vspace{0.1cm}}
\end{tabular}
%%%%%%%%%%%%%%%%%%%%%%%%%%%%%%%%%%%%%%
\vspace{0.1cm}\footnotesize\center
Summary of the selected model (top: $\hat \thetavec$; bottom: $\hat\phivec$),
with estimated standard errors in brackets. 
The model is represented graphically in Figure 2, 
and selected quantiles are summarized in Table \ref{est_beta}.
\end{table}

%%%%%%%%%%%%%%%%%%%%%%%%%%%%%%%%%%%%%%%%%%%%%%%%%
%%%%%%%%%%%%%%%%%%%%%%%%%%%%%%%%%%%%%%%%%%%%%%%%%
%%%%%%%%%%%%%%%%%%%%%%%%%%%%%%%%%%%%%%%%%%%%%%%%%

\begin{table}[h]
\caption{Summary of selected quantiles}
\label{est_beta}
\small
\centering
\begin{tabular}{cccccc|c}
\noalign{\vspace{0.1cm}}
\hline
\hline
\noalign{\vspace{0.2cm}}
quantile & 0.2 & 0.4 & 0.6 & 0.8\\
\noalign{\vspace{0.1cm}}
\hline
\noalign{\vspace{0.1cm}}
Intercept & 4.51 (.05) & 4.63 (.05) & 4.73 (.05) & 4.85 (.05)\\
Sepsis & 0.12 (.02) & 0.12 (.02) & 0.12 (.02) & 0.13 (.03)\\
$\Delta{\text{creat}}$ & 0.10 (.05) & 0.11 (.04) & 0.10 (.04) & 0.11 (.05)\\
(age - 52)/10 & 0.06 (.02) & 0.06 (.02) & 0.06 (.02) & 0.06 (.02)\\
female gender & 0.03 (.07) & 0.01 (.07) & -0.02 (.07) & -0.04 (.07)\\
time (weeks) & -0.09 (.01) & -0.09 (.01) & -0.08 (.01) & -0.08 (.01)\\
\noalign{\vspace{0.1cm}}
\hline
\noalign{\vspace{0.1cm}}
Intercept & -0.42 (.06) & -0.12 (.01) & 0.12 (.01) & 0.41 (.05)\\
(age - 52)/10 & -0.03 (.02) & -0.01 (.01) & 0.01 (.01) & 0.03 (.02)\\
female gender & 0.05 (.08) & 0.02 (.03) & -0.02 (.03) & -0.05 (.08)\\
\noalign{\vspace{0.1cm}}
\hline
\hline
\noalign{\vspace{0.1cm}}
\end{tabular}
\vspace{0.1cm}\footnotesize\center
Estimated regression coefficients at quantiles $(0.2, 0.4, 0.6, 0.8)$.
Top table: $\hat\betavec(u) = \betavec(u \mid \hat\thetavec) = \hat\thetavec \bvec(u)$;
bottom table: $\hat\gammavec(v) = \gammavec(v \mid \hat\phivec) = \hat\phivec \cvec(v)$.
Estimated standard errors in brackets.
\end{table}

%%%%%%%%%%%%%%%%%%%%%%%%%%%%%%%%%%%%%%%%%%%%%%%%%
%%%%%%%%%%%%%%%%%%%%%%%%%%%%%%%%%%%%%%%%%%%%%%%%%
%%%%%%%%%%%%%%%%%%%%%%%%%%%%%%%%%%%%%%%%%%%%%%%%%

\clearpage
\section{CONCLUSIONS}

We introduced a general framework for longitudinal quantile regression, extending the work of \citeauthor{iqr} (\citeyear{iqr, ctiqr})
on quantile regression coefficients modeling. 
We defined a two-level quantile function in which both the ``within'' and the ``between'' part of the distribution are 
described by a quantile regression model. This allows to investigate how covariates affect not only the level-1 response,
but also the distribution of the individual effects, which is generally overlooked in the existing literature on longitudinal 
quantile regression.
Identification is achieved by modeling the coefficient functions parametrically, and estimation is carried out
by minimizing a smooth objective function.

The proposed method is computationally simple and can be viewed as a 
special type of penalized fixed-effects estimator that presents important elements of novelty of its own.
The penalty term carries information on the parameters that describe the conditional distribution
of the individual effects. This permits estimating both level-1 and level-2 parameters, as in random-effects models,
but allows carrying out estimation and inference using fixed-effects techniques. Moreover, it avoids the problem of choosing
a tuning constant as in standard $\ell_1$- or $\ell_2$-penalization. The described form
of penalized fixed-effects method is not limited to a quantile regression framework and could be applied
to different estimation problems.

The proposed modeling framework can be generalized in different directions. 
An interesting possibility is to include multiple individual effects as in random-slope models.
In our framework, individual effects are represented by a pure location shift as in \cite{koenker2004},
\citeauthor{geraci} (\citeyear{geraci}, \citeyear{geraci2}), and \cite{canay}. Using the proposed
penalized fixed-effects approach, it is relatively simple to incorporate
not only an individual intercept, $\alpha_i$, but also a set of individual slopes, say $\{\delta_{1i}, \delta_{2i}, \ldots\}$.
This, however, would typically result in cumbersome computation and, unless
$N$ and $T$ are sufficiently large, would probably undermine model identifiability.
Using \cites{koenker2004} words: ``At best we may be able to estimate an individual specific location-shift effect, 
and even this may strain credulity''.

Another interesting extension is represented by varying-coefficients models (e.g., \citealp{hastie, fan1, fan2, chiang,kim2})
that could be implemented by allowing the level-1 regression coefficients to be functions of time. A possible approach is 
to describe the coefficients, say $\betavec(u, t)$, using tensor products
of splines. Finally, the proposed method could be used to estimate
static and dynamic quantile autoregressive models (e.g., \citealp{arellano}).

An important problem that has not been discussed in the paper is represented by quantile crossing, occurring
when either $\xx_{it}^\T\betavec^{\prime}(u \mid \hat\thetavec) < 0$ or $\zz_{i}^\T\gammavec^{\prime}(v \mid \hat\phivec) < 0$. 
One may want to determine in advance which values of the parameters $\thetavec$ and $\phivec$ 
would ensure that no crossing occurs, i.e., that $\xx_{it}^\T\betavec(\cdot \mid \hat\thetavec)$ and $\zz_{i}^\T\gammavec(\cdot \mid \hat\phivec)$
are monotonically increasing functions. This is only possible in very simple models with few covariates, or in presence of
restrictive assumptions. However, simulation evidence suggests that parametric models are relatively immune to quantile crossing,
compared with the ``nonparametric'' approaches based on ordinary quantile regression.
Additionally, the parametric structure makes it particularly simple to verify crossing, taking advantage
of the closed-form analytical expression of the quantile function, and admits the application of monotonization methods such as the rearrangement of \cite{cfg10} to produce increasing estimates of conditional quantiles.

This paper is accompanied by an R package \texttt{qrcm}, that includes a function named \texttt{iqrL}
that performs model fitting, and a variety of auxiliary functions for prediction, plotting, and goodness-of-fit assessment.
The documentation contains a rich set of examples, and can serve as tutorial for the practitioners.
The package is available upon request to the authors.

%\newpage
%
%
%
%
%%%%%%%%%%%%%%%%%%%%%%%%%%%%%%%%%%%%%%%%%%%%%%%%%%%%%%%%%%%%%%%%%%%%%%%%%%%%%%%
%%%%%%%%%%%%%%%%%%%%%%%%%%%%%%%%%%%%%%%%%%%%%%%%%%%%%%%%%%%%%%%%%%%%%%%%%%%%%%%
%%%%%%%%%%%%%%%%%%%%%%%%%%%%%%%%%%%%%%%%%%%%%%%%%%%%%%%%%%%%%%%%%%%%%%%%%%%%%%%
%
%
%\bigskip
%\begin{center}
%{\large\bf SUPPLEMENTARY MATERIAL}
%\end{center}
%
%\begin{description}
%
%\item[Version 3.0 of the \texttt{qrcm} R-package (to be installed as local zip file):] new package version including the \texttt{iqrL} 
%function that implements the estimator.
%
%\item[NGAL data set:] Data set and R script used in the illustration of the method in Section \ref{sec:app} (.txt file).
%
%\item[Appendix A -] Proof of Theorem \ref{thm:ilm}
%\item[Appendix B -] Computation
%\item[Appendix C -] Extended simulation results
%
%\end{description}
%
%
%%%%%%%%%%%%%%%%%%%%%%%%%%%%%%%%%%%%%%%%%%%%
%%%%%%%%%%%%%%%%%%%%%%%%%%%%%%%%%%%%%%%%%%%%
%%%%%%%%%%%%%%%%%%%%%%%%%%%%%%%%%%%%%%%%%%%%
%
%\clearpage
%\newpage
%

\section*{Appendix A - Proof of Theorem 1} %\ref{thm:ilm}}
\setcounter{theorem}{1}
\setcounter{assumption}{1}

\subsection*{Asymptotic Expansion for Estimators with Different Rates of Convergence}

We start by extending  the asymptotic expansions of Hahn and Newey (2004) and Fern\'andez-Val  (2005) for fixed effects estimators in nonlinear panel data models to models with mixed-rates asymptotics. 
%Second, we consider M-estimators with convex objective functions where we do not need to assume that the parameter space is compact. 
%Let $\{(y_{it}, \xx_{it}, \alpha_i, \zz_i) : 1 \leq i \leq N, 1 \leq t \leq T \}$ be a longitudinal dataset, where $\alpha_i$  is an unobservable individual effect and $(y_{it}, \xx_{it},  \zz_i)$ is vector of observed random variables. 
In particular, we consider the  M-estimators:
\begin{equation} \label{eq:est}
(\hat \thetavec, \hat \phivec) \in \argmin_{\{\thetavec \in \Theta,\phivec \in \Phi\}} L(\thetavec,\phivec), \ \  \hat{L}(\thetavec,\phivec) = \frac{1}{NT} \left[\sum_{i=1}^N \sum_{t=1}^T \ell_{1it}(\thetavec,\hat \alpha_i(\thetavec,\phivec)) +   \sum_{i=1}^N\ell_{2i}(\phivec,\hat \alpha_i(\thetavec,\phivec))\right], 
\end{equation}
where 
\begin{equation}\label{eq:est2}
\hat \alpha_i(\thetavec,\phivec) \in  \argmin_{\alpha_i \in \mathcal{A}}  \hat{L}_{i}(\thetavec,\phivec,\alpha_i), \ \ \hat{L}_{i}(\thetavec,\phivec,\alpha_i) = \frac{1}{T} \left[\sum_{t=1}^T \ell_{1it}(\thetavec, \alpha_i) +   \ell_{2i}(\phivec, \alpha_i)\right],
\end{equation}
$\ell_{1it}$ and $\ell_{2i}$ are random functions that depend on the data, and $\mathcal{A}$, $\Theta$ and $\Phi$ denote the parameter spaces for $\alpha_i$, $\thetavec$ and $\phivec$, respectively.

Let  $\mathcal{B}_{\varepsilon}(\alpha_{i}^0)$, $1 \leq i \leq n$, $\mathcal{B}_{\varepsilon}(\thetavec^0)$ and $\mathcal{B}_{\varepsilon}(\phivec^0)$ be $\varepsilon$-neighborhoods of the true value of the parameters $\alpha_{i}^0$, $1 \leq i \leq n$, $\thetavec^0$ and $\phivec^0$, for some $\varepsilon >0$. In what follows, we  use superscripts for partial derivatives, e.g.  $\ell_{1it}^{\thetavec}(\thetavec, \alpha_i) := \partial \ell_{1it}(\thetavec,\alpha_i)/\partial \thetavec$ and $\ell_{1it}^{\alpha \alpha}(\thetavec,\alpha_i) := \partial^2 \ell_{1it}(\thetavec,\alpha_i)/\partial \alpha_i^2$, and often drop the arguments when the functions are evaluated at the true values, e.g. $\ell_{1it}^{\thetavec} := \ell_{1it}^{\thetavec}(\thetavec^0, \alpha_i^0)$. We assume that the parameters $\thetavec$ and $\phivec$ are vector-valued. We explain below how to adapt the expansions to the integrated loss minimization estimator where $\thetavec$ and $\phivec$ can be matrices.

\begin{assumption}\label{ass:gral} 
(i) Consistency: $\hat \thetavec \to_P \thetavec^0$, $\hat \phivec \to_P \phivec^0$ and $\sup_{i}|\hat \alpha_i - \alpha_i^0| \to_P 0 $ for $\hat \alpha_i := \hat \alpha_i(\hat \thetavec, \hat \phivec)$. 
(ii) For each $i$, $(\thetavec, \alpha_i) \mapsto \ell_{1it}(\thetavec, \alpha_i)$ is four times continuously differentiable a.s. on $\mathcal{B}_{\varepsilon}(\thetavec^0) \times \mathcal{B}_{\varepsilon}(\alpha_{i}^0)$ and the partial derivatives up to fourth order are  bounded in absolute value by random variables $M_{it} >0$ a.s. such that $\sup_{i} \plim_{T\to \infty} \sum_{t=1}^T M_{it}^{\xi} / T < \infty$  for some $\xi > 8$; and  $(\phivec, \alpha_i) \mapsto  \ell_{2i}(\phivec, \alpha_i)$ is four times continuously differentiable a.s. on  $\mathcal{B}_{\varepsilon}(\phivec^0)\times \mathcal{B}_{\varepsilon}(\alpha_{i}^0)$  and the partial derivatives up to fourth order are  bounded in absolute value a.s. (iii) The true value of the parameters are in the interiors of the parameter spaces, i.e., $\thetavec^0 \in \Theta^\circ$, $\phivec^0 \in \Phi^\circ$, and $\alpha_i^0 \in \mathcal{A}^\circ$ for each $i$. (iv) The following limits exist:
\begin{eqnarray*}
\bar{\H}_{\thetavec} &=&  \plim_{N,T \to \infty} \frac{1}{NT} \sum_{i=1}^N \sum_{t=1}^T \left[ \ell_{1it}^{\thetavec \thetavec} - \ell_{1it}^{\thetavec \alpha} \frac{\sum_{t=1}^T \ell_{1it}^{\alpha \thetavec } }{\sum_{t=1}^T  \ell_{1it}^{\alpha \alpha }} \right], \\
\bar{\H}_{\phivec\thetavec} &=&  - \plim_{N,T \to \infty} \frac{1}{N} \sum_{i=1}^N  \ell_{2i}^{\phivec \alpha} \frac{\sum_{t=1}^T \ell_{1it}^{\alpha \thetavec } }{\sum_{t=1}^T  \ell_{1it}^{\alpha \alpha }}, \ \
\bar{\H}_{\phivec} =  \plim_{N \to \infty} \frac{1}{N} \sum_{i=1}^N  \ell_{2i}^{\phivec \phivec}, \\
\bar{b}_{\thetavec} &=&   \plim_{N,T \to \infty} \frac{1}{N} \sum_{i=1}^N b_{\thetavec i}, \ \
\bar{b}_{\phivec} =   \plim_{N,T \to \infty} \frac{1}{N} \sum_{i=1}^N b_{\phivec i}, \\
\bar{\O}_{\thetavec} &=& \plim_{N,T \to \infty} \frac{1}{N} \sum_{i=1}^N \varphi_{\thetavec i} \varphi_{\thetavec i}^{\T}, \ \ \bar{\O}_{\phivec} = \plim_{N,T \to \infty} \frac{1}{N} \sum_{i=1}^N \varphi_{\phivec i} \varphi_{\phivec i}^{\T}, 
\end{eqnarray*}
where 
\begin{eqnarray*}
b_{\thetavec i} &=&  \frac{1}{T} \sum_{t=1}^T  \ell_{1it}^{\thetavec \alpha} \beta_i -   \frac{\sigma_i^2}{T} \sum_{t=1}^T  \ell_{1it}^{\thetavec \alpha} \ell_{1it}^{\alpha}  + \frac{1}{2T} \sum_{t=1}^T \ell_{1it}^{\thetavec \alpha \alpha} \psi_i^2, \\
b_{\phivec i} &=&    \ell_{2i}^{\phivec \alpha}(\sqrt{T}    \psi_i  + \beta_i)  + \frac{1}{2}  \ell_{2i}^{\phivec \alpha \alpha}  \psi_i^2, 
 \\
\varphi_{\thetavec i} &=& \frac{1}{\sqrt{T}} \sum_{t=1}^T \ell_{1it}^{\thetavec} +   \frac{\psi_i}{T} \sum_{t=1}^T \ell_{1it}^{\thetavec \alpha}, \ \ \varphi_{\phivec i} =  \ell_{2i}^{\phivec},
\\
% \varphi_{\phivec i} =  \ell_{2i}^{\phivec} + \frac{ \psi_i}{\sqrt{T}} \plim_{N \to \infty} \frac{1}{N} \sum_{i=1}^N   \ell_{2i}^{\phivec \alpha},\\
\beta_i &=& \sigma_i^2 \plim_{T\to \infty} \frac{1}{T} \sum_{t=1}^T \left(\sigma_i^2 \ell_{1it}^{\alpha \alpha} \ell_{1it}^{\alpha} - \frac{\psi_i^2}{2} \ell_{1it}^{\alpha \alpha \alpha}   \right),\\
\psi_i &=& - \frac{\sigma_i^2}{\sqrt{T}} \left(  \sum_{t=1}^T \ell_{1it}^{\alpha} + \ell_{2i}^{\alpha} \right),  \ \ \sigma_i^2 =  \left[ \plim_{T\to \infty} \frac{1}{T} \sum_{t=1}^T \ell_{1it}^{\alpha\alpha} \right]^{-1}.
\end{eqnarray*}
(v) The minimum eigenvalues of the matrices $\bar{\H}_{\thetavec}$ and $\bar{\H}_{\phivec}$ are bounded away from zero and $\inf_i \plim_{T \to \infty} \sum_{t=1}^T  \ell_{1it}^{\alpha\alpha}/T$ is bounded away from zero.

\end{assumption}

\begin{theorem}\label{thm:gral} Suppose that Assumption \ref{ass:gral} holds. If $N = O(T)$,
$$
\sqrt{NT}\left( \hat \thetavec - \thetavec^0 + \frac{\bar{\H}_{\thetavec}^{-1} \bar{b}_{\thetavec}}{T} \right) \to_d \bar{\H}_{\thetavec}^{-1} \mathcal{N}(0, \bar{\O}_{\thetavec}). 
$$ 
If $N = O(T^2)$,
$$
\sqrt{N}\left( \hat \phivec - \phivec^0 + \frac{\bar{\H}_{\phivec}^{-1} (\bar{b}_{\phivec} +  \bar{\H}_{\phivec\thetavec} \bar{\H}_{\thetavec}^{-1}\bar{b}_{\thetavec})}{T} \right) \to_d \bar{\H}_{\phivec}^{-1} \mathcal{N}(0, \bar{\O}_{\phivec}). 
$$ 
\end{theorem}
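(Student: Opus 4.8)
The plan is to adapt the fixed-effects M-estimation expansions of Hahn and Newey (2004) and Fern\'andez-Val (2005), profiling out the incidental parameters $\alpha_i$ first and then expanding the resulting profile scores for $\thetavec$ and $\phivec$, while carefully tracking that the two structural parameters are informative about different numbers of observations and therefore converge at the rates $\sqrt{NT}$ and $\sqrt{N}$.

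First I would concentrate out the $\alpha_i$. For $(\thetavec,\phivec)$ in a shrinking neighbourhood of $(\thetavec^0,\phivec^0)$, $\hat\alpha_i(\thetavec,\phivec)$ solves the inner first-order condition $\sum_{t}\ell_{1it}^{\alpha}(\thetavec,\hat\alpha_i)+\ell_{2i}^{\alpha}(\phivec,\hat\alpha_i)=0$. A Taylor expansion of this condition to third order in $\alpha_i$, using $T^{-1}\sum_t\ell_{1it}^{\alpha\alpha}\to_P \sigma_i^{-2}$ together with $\ell_{2i}^{\alpha},\ell_{2i}^{\alpha\alpha}=O_P(1)$ (so the penalty is asymptotically negligible relative to the $\sqrt{T}$ within-cluster information), yields the stochastic expansion $\sqrt{T}(\hat\alpha_i(\thetavec^0,\phivec^0)-\alpha_i^0)=\psi_i+\beta_i/\sqrt{T}+o_P(T^{-1/2})$, with $\psi_i$ the $O_P(1)$ leading term and $\beta_i$ the $O(1)$ incidental-parameter bias exactly as defined in Assumption \ref{ass:gral}(iv). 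By the implicit function theorem applied to the same condition I would also record the sensitivities $\partial\hat\alpha_i/\partial\thetavec\approx -\sigma_i^2 T^{-1}\sum_t\ell_{1it}^{\alpha\thetavec}=O_P(1)$ and $\partial\hat\alpha_i/\partial\phivec=O_P(T^{-1})$, the latter being small because $\ell_{2i}^{\alpha\phivec}=O_P(1)$ is divided by a curvature of order $T$.

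Next, by the envelope theorem the profile first-order conditions collapse to $\frac{1}{NT}\sum_i\sum_t\ell_{1it}^{\thetavec}(\hat\thetavec,\hat\alpha_i)=0$ and $\frac{1}{NT}\sum_i\ell_{2i}^{\phivec}(\hat\phivec,\hat\alpha_i)=0$, since the terms multiplying $\partial\hat\alpha_i/\partial(\thetavec,\phivec)$ carry the vanishing inner score $\sum_t\ell_{1it}^{\alpha}+\ell_{2i}^{\alpha}$. I would expand each profile score around $(\thetavec^0,\phivec^0,\alpha_i^0)$ to first order in $(\hat\thetavec-\thetavec^0,\hat\phivec-\phivec^0)$ and to second order in $(\hat\alpha_i-\alpha_i^0)$, inserting the expansion of $\hat\alpha_i$ from the previous step. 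Collecting terms by stochastic order, the $\thetavec$-equation produces the Hessian $\bar\H_\thetavec$, a leading average $N^{-1/2}\sum_i\varphi_{\thetavec i}$ of the influence functions, and a deterministic term $\bar b_\thetavec/T$; the $\phivec$-equation produces $\bar\H_\phivec$, the average $N^{-1/2}\sum_i\varphi_{\phivec i}$, its own bias $\bar b_\phivec/T$, and in addition the cross term $\bar\H_{\phivec\thetavec}(\hat\thetavec-\thetavec^0)$ that arises precisely because $\hat\alpha_i$, and hence $\ell_{2i}^{\phivec}(\hat\phivec,\hat\alpha_i)$, depends on $\hat\thetavec$. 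Substituting the solved $\thetavec$-expansion into this cross term injects $\bar\H_{\phivec\thetavec}\bar\H_\thetavec^{-1}\bar b_\thetavec$ into the $\phivec$-bias while leaving the leading stochastic part of $\phivec$ unaffected, since the stochastic contribution of $\hat\thetavec-\thetavec^0$ is $O_P((NT)^{-1/2})$, of smaller order than the $O_P(N^{-1/2})$ leading term of $\phivec$.

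Finally I would solve the block-triangular linear system and pass to the limit. The Hessians $\bar\H_\thetavec$ and $\bar\H_\phivec$ are invertible by Assumption \ref{ass:gral}(v), and a central limit theorem for the independent-across-$i$ arrays $N^{-1/2}\sum_i\varphi_{\thetavec i}$ and $N^{-1/2}\sum_i\varphi_{\phivec i}$ gives the $\mathcal N(0,\bar\O_\thetavec)$ and $\mathcal N(0,\bar\O_\phivec)$ limits after premultiplying by the inverse Hessians. Rescaling shows the biases enter as $\sqrt{NT}\cdot(\bar b_\thetavec/T)=\sqrt{N/T}\,\bar b_\thetavec$ and $(\sqrt{N}/T)(\bar b_\phivec+\bar\H_{\phivec\thetavec}\bar\H_\thetavec^{-1}\bar b_\thetavec)$, which are $O(1)$ exactly under $N=O(T)$ and $N=O(T^2)$ respectively, while the neglected remainders are of strictly smaller order under the same rate restrictions combined with the $\xi>8$ moment bound of Assumption \ref{ass:gral}(ii). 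The main obstacle is twofold: establishing the expansion of $\hat\alpha_i$ \emph{uniformly} in $i$ with a remainder that is negligible after aggregation over the $N$ clusters (which is where the moment condition and a maximal inequality do the heavy lifting), and the careful bookkeeping needed to isolate the exact $O(T^{-1})$ bias, in particular verifying that the third-order and cross-parameter contributions aggregate precisely into $\bar b_\thetavec$, $\bar b_\phivec$, and $\bar\H_{\phivec\thetavec}\bar\H_\thetavec^{-1}\bar b_\thetavec$ rather than into additional surviving terms.
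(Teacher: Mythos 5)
Your proposal is correct and follows essentially the same route as the paper's own proof: consistency is taken from Assumption \ref{ass:gral}(i), the envelope theorem reduces the profile scores to $\sum_t\ell_{1it}^{\thetavec}$ and $\ell_{2i}^{\phivec}$, a second-order (Rilstone--Srivastava--Ullah type) expansion of $\hat\alpha_i(\thetavec^0,\phivec^0)$ delivers $\psi_i$ and $\beta_i$, second-order expansions of the two gradients yield $(\varphi_{\thetavec i},b_{\thetavec i})$ and $(\varphi_{\phivec i},b_{\phivec i})$, and the resulting asymptotically block-triangular system is inverted so that the cross term injects $\bar{\H}_{\phivec\thetavec}\bar{\H}_{\thetavec}^{-1}\bar{b}_{\thetavec}$ into the $\phivec$-bias before a cross-sectional CLT is applied. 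Your sequential substitution of the solved $\thetavec$-expansion into the $\phivec$-equation is the same bookkeeping the paper performs by showing the off-diagonal Hessian blocks are negligible in the two rate regimes $N=O(T)$ and $T=o(N)$, including your correct observation that $\hat\alpha_i^{\phivec}=O_P(T^{-1})$ and that only the bias, not the $O_P((NT)^{-1/2})$ stochastic part of $\hat\thetavec-\thetavec^0$, survives in the $\phivec$-limit.
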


\begin{proof}
We divide the proof in two steps: asymptotic expansions and asymptotic distributions.

%\paragraph{Step 1: Consistency.} The consistency of all the estimators can be established sequentially. Consistency of $\hat \thetavec$ and uniform consistency of $\hat \alpha_i = \hat \alpha_i(\hat \thetavec, \hat \phivec)$, i.e., $\hat \thetavec \to_P \thetavec^0$ and $\max_i |\hat \alpha_i -  \alpha_i^0| \to_P 0$,  follows from the convexity lemma since
%$$
%\sup_i \left[ \sup_{\{\thetavec \in \Theta, \phivec \in \Phi, \alpha_i \in \mathcal{A}\}} \left| \hat L_{1i}(\thetavec, \phivec, \alpha_i) - L_{1i}(\thetavec, \alpha_i) \right| \right] \to_P 0,
%$$
%where we use that
%$$
%\sup_i \left[ \sup_{\{\thetavec \in \Theta, \phivec \in \Phi, \alpha_i \in \mathcal{A}\}} \left| \frac{1}{T} \ell_{2i}(\phivec,\alpha_i) \right| \right] \to_P 0,
%$$
%and the triangle inequality. Then, $\hat \phivec \to_P \phivec^0$ follows from the convexity lemma since
%$$
%\sup_{\phivec \in \Phi} \left|  \frac{1}{N} \sum_{i=1}^N \ell_{2i}(\phivec, \hat \alpha_i) - L_2(\phivec)\right| \to_P 0,
%$$
%where we use that, by the triangle inequality,
%$$
%\left|  \frac{1}{N} \sum_{i=1}^N \ell_{2i}(\phivec, \hat \alpha_i) - L_2(\phivec)\right| \leq \left|  \frac{1}{N} \sum_{i=1}^N \ell_{2i}(\phivec, \hat \alpha_i) - \frac{1}{N} \sum_{i=1}^N \ell_{2i}(\phivec,  \alpha_i^0) \right| + \left|  \frac{1}{N} \sum_{i=1}^N \ell_{2i}(\phivec, \alpha_i^0) - L_2(\phivec)\right|,
%$$
%continuity of $\alpha_i \mapsto \ell_{2i}(\phivec,  \alpha_i)$ on $\mathcal{B}_{\varepsilon}(\alpha_{i}^0)$,  and uniform consistency of $\hat \alpha_i$.

\paragraph{Step 1: Asymptotic Expansions.} The asymptotic expansions of $\hat \thetavec$ and $\hat \phivec$ are derived in the following steps:
\begin{enumerate}
\item First-order asymptotic expansion of the first order conditions of \eqref{eq:est} around the true values $\thetavec^0$ and $\phivec^0$.
\item Second-order expansion of  $\hat \alpha_i(\thetavec^0,\phivec^0)$ from \eqref{eq:est2}.
\item Second-order expansion of the gradients $\sum_{t=1}^T \ell_{1it}^{\thetavec}(\thetavec^0,\hat \alpha_i(\thetavec^0,\phivec^0))$ and $\ell_{2i}^{\phivec}(\phivec^0,\hat \alpha_i(\thetavec^0,\phivec^0))$ with respect to $\hat \alpha_i(\thetavec^0,\phivec^0)$ around $\alpha_i^0$.
% where $\ell_{1it}^{\thetavec}(\thetavec,\alpha_i) := \partial \ell_{1it}(\thetavec,\alpha_i)/\partial \thetavec$ and $\ell_{2i}^{\phivec}(\phivec,\alpha_i) := \partial \ell_{2i}(\phivec,\alpha_i)/\partial \phivec$.
\item Asymptotic expansion of $\hat \thetavec$ and $\hat \phivec$.
\end{enumerate}

Assumption \ref{ass:gral} guarantees that in these expansions all the terms are bounded in probability and the remainders are negligible (e.g., Hahn and Newey, 2004, and Fern\'andez-Val, 2005).

\bigskip 

\noindent \underline{1. First-Order Expansion of the First-Order Conditions for $(\hat \thetavec, \hat \phivec)$:} the first-order conditions of  \eqref{eq:est} are
\begin{multline*}
0 = \frac{1}{NT} \sum_{i=1}^N \left\{ \sum_{t=1}^T  \ell_{1it}^{\thetavec}(\hat \thetavec,\hat \alpha_i(\hat \thetavec, \hat \phivec)) + \left[ \sum_{t=1}^T  \ell_{1it}^{\alpha}(\hat \thetavec,\hat \alpha_i(\hat \thetavec, \hat \phivec)) +  \ell_{2i}^{\alpha}(\hat \phivec,\hat \alpha_i(\hat \thetavec, \hat \phivec))   \right] \hat \alpha_i^{\thetavec}(\hat \thetavec, \hat \phivec) \right\} \\ =  \frac{1}{NT} \sum_{i=1}^N \sum_{t=1}^T  \ell_{1it}^{\thetavec}(\hat \thetavec,\hat \alpha_i(\hat \thetavec, \hat \phivec))
\end{multline*}
and
\begin{multline*}
0 = \frac{1}{NT} \sum_{i=1}^N \left\{  \ell_{2i}^{\phivec}(\hat \phivec,\hat \alpha_i(\hat \thetavec, \hat \phivec)) + \left[ \sum_{t=1}^T \ell_{1it}^{\alpha}(\hat \thetavec,\hat \alpha_i(\hat \thetavec, \hat \phivec)) +   \ell_{2i}^{\alpha}(\hat \phivec,\hat \alpha_i(\hat \thetavec, \hat \phivec)) \right] \hat \alpha_i^{\phivec}(\hat \thetavec, \hat \phivec)  \right\}  \\ 
= \frac{1}{NT} \sum_{i=1}^N \ell_{2i}^{\phivec}(\hat \phivec,\hat \alpha_i(\hat \thetavec, \hat \phivec)),
\end{multline*}
where the second equality in both cases  follows from 
\begin{equation}\label{eq:est2foc}
 \sum_{t=1}^T \ell_{1it}^{\alpha}( \thetavec,\hat \alpha_i( \thetavec,  \phivec)) + \ell_{2i}^{\alpha}( \phivec,\hat \alpha_i( \thetavec,  \phivec)) = 0, 
\end{equation}
by the first-order condition of \eqref{eq:est2}. 

A first-order expansion in $(\hat \thetavec, \hat \phivec)$ around $(\thetavec^0, \phivec^0)$ yields
\begin{multline}\label{eq:theta}
0 =   \frac{1}{NT} \sum_{i=1}^N \left[\begin{array}{c} \sum_{t=1}^T \ell_{1it}^{\thetavec}(\hat \thetavec,\hat \alpha_i(\hat \thetavec, \hat \phivec)) \\  \ell_{2i}^{\phivec}(\hat \phivec,\hat \alpha_i(\hat \thetavec, \hat \phivec)) \end{array}\right] = \frac{1}{NT} \sum_{i=1}^N \left[\begin{array}{c} \sum_{t=1}^T \ell_{1it}^{\thetavec}(\thetavec^0,\hat \alpha_i(\thetavec^0, \phivec^0)) \\  \ell_{2i}^{\phivec}(\phivec^0,\hat \alpha_i(\thetavec^0, \phivec^0)) \end{array}\right]  \\ + \frac{1}{NT} \sum_{i=1}^N  \left[\begin{array}{cc} \sum_{t=1}^T  \frac{d \ell_{1it}^{\thetavec}(\bar \thetavec, \hat \alpha_i(\bar \thetavec, \bar \phivec))}{d \thetavec^{\T}}  &  \sum_{t=1}^T  \frac{d \ell_{1it}^{\thetavec}(\bar \thetavec, \hat \alpha_i(\bar \thetavec, \bar \phivec))}{d \phivec^{\T}}  \\   \frac{d \ell_{2i}^{\phivec}(\bar \phivec, \hat \alpha_i(\bar \thetavec, \bar \phivec))}{d \thetavec^{\T}} &   \frac{d \ell_{2i}^{\phivec}(\bar \phivec, \hat \alpha_i(\bar \thetavec, \bar \phivec))}{d \phivec^{\T}} \end{array}\right] \left[\begin{array}{c}  \hat \thetavec - \thetavec^0 \\\hat \phivec - \phivec^0 \end{array}\right],
\end{multline}
%
%\begin{equation}\label{eq:theta}
%0 = \sum_{i=1}^N \sum_{t=1}^T \ell_{1it}^{\thetavec}(\thetavec^0,\hat \alpha_i(\thetavec^0, \phivec^0))  +  \sum_{i=1}^N \sum_{t=1}^T  \frac{d \ell_{1it}^{\thetavec}(\bar \thetavec, \hat \alpha_i(\bar \thetavec, \bar \phivec))}{d \thetavec^{\T}} \left( \hat \thetavec - \thetavec^0\right),
%\end{equation}
%and
%\begin{equation}\label{eq:phi}
%0 = \sum_{i=1}^N \ell_{2i}^{\phivec}(\hat \phivec,\hat \alpha_i(\hat \thetavec, \hat \phivec)) = \sum_{i=1}^N \ell_{2i}^{\phivec}(\phivec^0,\hat \alpha_i(\thetavec^0, \phivec^0)) + \sum_{i=1}^N  \frac{d \ell_{2i}^{\phivec}(\bar \phivec, \hat \alpha_i(\bar \thetavec, \bar \phivec))}{d \phivec^{\T}} \left( \hat \phivec - \phivec^0\right),
%\end{equation}
where
$$
 \frac{d \ell_{1it}^{\thetavec}(\thetavec, \hat \alpha_i(\thetavec, \phivec))}{d \thetavec^{\T}} = \ell_{1it}^{\thetavec \thetavec}(\thetavec, \hat \alpha_i(\thetavec, \phivec)) + \ell_{1it}^{\thetavec \alpha}(\thetavec, \hat \alpha_i(\thetavec, \phivec) ) \hat \alpha_i^{\thetavec}(\thetavec, \phivec)^{\T},
$$
$$
 \frac{d \ell_{1it}^{\thetavec}(\thetavec, \hat \alpha_i(\thetavec, \phivec))}{d \phivec^{\T}} =  \ell_{1it}^{\thetavec \alpha}(\thetavec, \hat \alpha_i(\thetavec, \phivec) ) \hat \alpha_i^{\phivec}(\thetavec, \phivec)^{\T},
$$
$$
 \frac{d \ell_{2i}^{\phivec}(\phivec, \hat \alpha_i(\thetavec, \phivec))}{d \thetavec^{\T}} =  \ell_{2i}^{\phivec \alpha}(\phivec, \hat \alpha_i^{\thetavec}(\thetavec, \phivec) ) \hat \alpha_i^{\thetavec}(\thetavec, \thetavec)^{\T},
$$
$$
 \frac{d \ell_{2i}^{\phivec}(\phivec, \hat \alpha_i(\thetavec, \phivec))}{d \phivec^{\T}} = \ell_{2i}^{\phivec \phivec}(\phivec, \hat \alpha_i(\thetavec, \phivec)) + \ell_{2i}^{\phivec \alpha}(\phivec, \hat \alpha_i^{\thetavec}(\thetavec, \phivec) ) \hat \alpha_i^{\phivec}(\thetavec, \phivec)^{\T},
$$
and $(\bar \thetavec, \bar \phivec)$ lies between  $(\hat \thetavec, \hat \phivec)$ and $(\thetavec^0, \phivec^0)$. 

The expressions of $\hat \alpha_i^{\thetavec}(\thetavec, \phivec)$ and $ \hat \alpha_i^{\phivec}(\thetavec, \phivec)$ can be obtained from \eqref{eq:est2foc}.
%$$
%0 = \sum_{t=1}^T \ell_{1it}^{\alpha}(\thetavec, \hat \alpha_i(\thetavec, \phivec)) +   \ell_{2i}^{\alpha}(\phivec, \hat \alpha_i(\thetavec, \phivec)),
%$$
Thus, differentiation with respect to $\thetavec$ and $\phivec$ gives
$$
\hat \alpha_i^{\thetavec}(\thetavec, \phivec) = - \frac{\sum_{t=1}^T \ell_{1it}^{\alpha \thetavec}(\thetavec, \hat \alpha_i(\thetavec, \phivec))^{\T}}{\sum_{t=1}^T \ell_{1it}^{\alpha \alpha}(\thetavec, \hat \alpha_i(\thetavec, \phivec)) +   \ell_{2i}^{\alpha \alpha}(\phivec, \hat \alpha_i(\thetavec, \phivec))},
$$
and
$$
\hat \alpha_i^{\phivec}(\thetavec, \phivec) = - \frac{ \ell_{2i}^{\alpha \phivec}(\phivec, \hat \alpha_i(\thetavec, \phivec))^{\T}}{\sum_{t=1}^T \ell_{1it}^{\alpha \alpha}(\thetavec, \hat \alpha_i(\thetavec, \phivec)) +   \ell_{2i}^{\alpha \alpha}(\phivec, \hat \alpha_i(\thetavec, \phivec))},
$$
respectively. Note that $\hat \alpha_i^{\thetavec}(\thetavec, \phivec) = O_P(1)$ and $\hat \alpha_i^{\phivec}(\thetavec, \phivec) = O_P(T^{-1})$, because all the terms are bounded in probability and the denominator is bounded away from zero with probability one. 

\bigskip

\noindent \underline{2. Second-Order Expansion of  $\hat \alpha_i(\thetavec^0,\phivec^0)$:} the first-order condition of  \eqref{eq:est2} at $(\thetavec,\phivec) = (\thetavec^0,\phivec^0)$ is 
\begin{equation*}
0 = \sum_{t=1}^T \ell_{1it}(\thetavec^0, \hat \alpha_i(\thetavec^0,\phivec^0)) +   \ell_{2i}(\phivec^0, \hat \alpha_i(\thetavec^0,\phivec^0)).
\end{equation*}
A second-order expansion in $\hat \alpha_i(\thetavec^0,\phivec^0)$ around $\alpha_i^0$ yields (e.g., Rilstone, Srivastava and Ullah, 1996)
\begin{equation}\label{eq:alpha}
\hat \alpha_i(\thetavec^0,\phivec^0) - \alpha_i^0 = \frac{\psi_i}{\sqrt{T}} + \frac{\beta_i}{T} + o_P(T^{-1}) 
\end{equation}
where the expressions of the influence function $\psi_i$ and second-order bias  $\beta_i$ are given in Assumption \ref{ass:gral}. To bound the remainder term uniformly over $i$, we use that
$$
\sup_i \|\hat \alpha_i(\thetavec^0, \phivec^0) - \alpha_i^0 \| \leq \sup_i \|\hat \alpha_i(\thetavec^0, \phivec^0) -  \hat \alpha_i(\hat \thetavec, \hat \phivec)\| + \sup_i \|\hat \alpha_i - \alpha_i^0 \| \to_P 0,
$$
by the triangle inequality, a.s. continuity of $(\thetavec, \phivec) \mapsto \hat \alpha_i( \thetavec,  \phivec)$ and Assumption \ref{ass:gral}(i).
%$$
%\plim_{T\to \infty}\hat \alpha_i(\thetavec^0, \phivec^0) =  \argmin_{\alpha_i} \plim_{T\to \infty} \frac{1}{T} \left[ \sum_{t=1}^T \ell_{1it}(\thetavec^0, \alpha_i) +   \ell_{2i}(\phivec^0, \alpha_i) \right] = \alpha_i^0.
%$$

\bigskip

\noindent \underline{3. Second-Order Expansion of the Gradients $\sum_{t=1}^T \ell_{1it}^{\thetavec}(\thetavec^0,\hat \alpha_i(\thetavec^0,\phivec^0))$ and $\ell_{2i}^{\phivec}(\phivec^0,\hat \alpha_i(\thetavec^0,\phivec^0))$:} a second-order expansion of $\sum_{t=1}^T \ell_{1it}^{\thetavec}(\thetavec^0,\hat \alpha_i(\thetavec^0,\phivec^0))$ in $\hat \alpha_i(\thetavec^0,\phivec^0)$ around $\alpha_i^0$ yields
$$
\frac{1}{T} \sum_{t=1}^T \ell_{1it}^{\thetavec}(\thetavec^0,\hat \alpha_i(\thetavec^0,\phivec^0)) = \frac{\varphi_{\thetavec i} }{\sqrt{T}} + \frac{b_{\thetavec i}}{T} + o_P(T^{-1}),
$$
where the expressions of the influence function $\varphi_{\thetavec i}$, and  second-order bias $b_{\thetavec i}$ are given in Assumption \ref{ass:gral}, and we use \eqref{eq:alpha}. A similar analysis for  $ \ell_{2i}^{\phivec}(\phivec^0,\hat \alpha_i(\thetavec^0,\phivec^0))$ gives
$$
\ell_{2i}^{\phivec}(\phivec^0,\hat \alpha_i(\thetavec^0,\phivec^0)) = \varphi_{\phivec i}  + \frac{b_{\phivec i}}{T} + o_P(T^{-1}),
$$
where the expression of $\varphi_{\phivec i}$ and $b_{\phivec i}$ are given in Assumption \ref{ass:gral}, and we use again \eqref{eq:alpha}.

\bigskip

\paragraph{Step 2: Asymptotic Distributions of $\hat \thetavec$ and $\hat \phivec$.} 
From the expansions of the gradients in the previous section
$$
 \sum_{i=1}^N \sum_{t=1}^T \ell_{1it}^{\thetavec}(\thetavec^0,\hat \alpha_i(\thetavec^0,\phivec^0)) = O_P(\sqrt{NT} \vee N),
$$
and
$$
\sum_{i=1}^N \ell_{2i}^{\phivec}(\phivec^0,\hat \alpha_i(\thetavec^0,\phivec^0)) = O_P(\sqrt{N}),
$$
if $N = O(T^{2})$.  Here we use that $\sum_{i=1}^N \varphi_{\thetavec i} = O_P(\sqrt{N})$, $\sum_{i=1}^N b_{\thetavec i} = O_P(N)$, $\sum_{i=1}^N \varphi_{\phivec i} = O_P(\sqrt{N})$, and $\sum_{i=1}^N b_{\phivec i} = O_P(N)$ (e.g., Fern\'andez-Val, 2005).

We need to consider two different cases because the asymptotic distribution of  $\hat \phivec$ is non-degenerate under a wider range of sequences for $N$ and $T$ than the distribution of $\hat \thetavec$.  These cases are
\begin{enumerate}
\item $N = O(T)$: combining the previous results with \eqref{eq:theta} yields
$
\sqrt{NT}(\hat \thetavec - \thetavec^0) = O_P(1)
$
and 
$
\sqrt{N}(\hat \phivec - \phivec^0) = O_P(1),
$
such that
\begin{multline}\label{eq:theta2} 
\frac{1}{N}\sum_{i=1}^N  \left[\begin{array}{cc} \frac{1}{T} \sum_{t=1}^T  \frac{d \ell_{1it}^{\thetavec}(\bar \thetavec, \hat \alpha_i(\bar \thetavec, \bar \phivec))}{d \thetavec^{\T}}  &  \frac{1}{\sqrt{T}} \sum_{t=1}^T  \frac{d \ell_{1it}^{\thetavec}(\bar \thetavec, \hat \alpha_i(\bar \thetavec, \bar \phivec))}{d \phivec^{\T}}  \\   \frac{1}{\sqrt{T}} \frac{d \ell_{2i}^{\phivec}(\bar \phivec, \hat \alpha_i(\bar \thetavec, \bar \phivec))}{d \thetavec^{\T}} &   \frac{d \ell_{2i}^{\phivec}(\bar \phivec, \hat \alpha_i(\bar \thetavec, \bar \phivec))}{d \phivec^{\T}} \end{array}\right] \left[\begin{array}{c} \sqrt{NT}(\hat \thetavec - \thetavec^0) \\\sqrt{N}(\hat \phivec - \phivec^0)\end{array}\right] \\ = \frac{-1}{\sqrt{N}} \sum_{i=1}^N \left[\begin{array}{c} \frac{1}{\sqrt{T}}\sum_{t=1}^T \ell_{1it}^{\thetavec}(\thetavec^0,\hat \alpha_i(\thetavec^0, \phivec^0)) \\  \ell_{2i}^{\phivec}(\phivec^0,\hat \alpha_i(\thetavec^0, \phivec^0)) \end{array}\right].
\end{multline}
Note that  the off-diagonal elements of the matrix in the LHS are of order $O_P(T^{-1})$, whereas the diagonal elements are of order $O_P(1)$. Hence
\begin{multline*}
\frac{1}{N}\sum_{i=1}^N  \left[\begin{array}{cc} \frac{1}{T} \sum_{t=1}^T  \frac{d \ell_{1it}^{\thetavec}(\bar \thetavec, \hat \alpha_i(\bar \thetavec, \bar \phivec))}{d \thetavec^{\T}}  &  \frac{1}{\sqrt{T}} \sum_{t=1}^T  \frac{d \ell_{1it}^{\thetavec}(\bar \thetavec, \hat \alpha_i(\bar \thetavec, \bar \phivec))}{d \phivec^{\T}}  \\   \frac{1}{\sqrt{T}} \frac{d \ell_{2i}^{\phivec}(\bar \phivec, \hat \alpha_i(\bar \thetavec, \bar \phivec))}{d \thetavec^{\T}} &   \frac{d \ell_{2i}^{\phivec}(\bar \phivec, \hat \alpha_i(\bar \thetavec, \bar \phivec))}{d \phivec^{\T}} \end{array}\right] \\ = \frac{1}{N}\sum_{i=1}^N  \left[\begin{array}{cc} \frac{1}{T} \sum_{t=1}^T  \frac{d \ell_{1it}^{\thetavec}(\bar \thetavec, \hat \alpha_i(\bar \thetavec, \bar \phivec))}{d \thetavec^{\T}}  &  0 \\  0 &   \frac{d \ell_{2i}^{\phivec}(\bar \phivec, \hat \alpha_i(\bar \thetavec, \bar \phivec))}{d \phivec^{\T}} \end{array}\right] + O_P\left( \frac{1}{\sqrt{T}}\right). 
\end{multline*}

\item $T = o(N): $ combining the previous results with \eqref{eq:theta} yields
$
T(\hat \thetavec - \thetavec^0) = O_P(1)
$
and 
$
\sqrt{N}(\hat \phivec - \phivec^0) = O_P(1),
$
such that
\begin{multline}\label{eq:theta2b} 
\frac{1}{N}\sum_{i=1}^N  \left[\begin{array}{cc} \frac{1}{T} \sum_{t=1}^T  \frac{d \ell_{1it}^{\thetavec}(\bar \thetavec, \hat \alpha_i(\bar \thetavec, \bar \phivec))}{d \thetavec^{\T}}  &  \frac{1}{\sqrt{N}} \sum_{t=1}^T  \frac{d \ell_{1it}^{\thetavec}(\bar \thetavec, \hat \alpha_i(\bar \thetavec, \bar \phivec))}{d \phivec^{\T}}  \\   \frac{\sqrt{N}}{T} \frac{d \ell_{2i}^{\phivec}(\bar \phivec, \hat \alpha_i(\bar \thetavec, \bar \phivec))}{d \thetavec^{\T}} &   \frac{d \ell_{2i}^{\phivec}(\bar \phivec, \hat \alpha_i(\bar \thetavec, \bar \phivec))}{d \phivec^{\T}} \end{array}\right] \left[\begin{array}{c} T(\hat \thetavec - \thetavec^0) \\\sqrt{N}(\hat \phivec - \phivec^0)\end{array}\right] \\ = \frac{-1}{\sqrt{N}} \sum_{i=1}^N \left[\begin{array}{c} \frac{1}{\sqrt{N}}\sum_{t=1}^T \ell_{1it}^{\thetavec}(\thetavec^0,\hat \alpha_i(\thetavec^0, \phivec^0)) \\  \ell_{2i}^{\phivec}(\phivec^0,\hat \alpha_i(\thetavec^0, \phivec^0)) \end{array}\right].
\end{multline}
Similar to the other case,  some of the off-diagonal elements of the matrix in the LHS are of smaller order than the diagonal elements. Hence, if $N = O(T^2)$,
\begin{multline*}
\frac{1}{N}\sum_{i=1}^N  \left[\begin{array}{cc} \frac{1}{T} \sum_{t=1}^T  \frac{d \ell_{1it}^{\thetavec}(\bar \thetavec, \hat \alpha_i(\bar \thetavec, \bar \phivec))}{d \thetavec^{\T}}  &  \frac{1}{\sqrt{N}} \sum_{t=1}^T  \frac{d \ell_{1it}^{\thetavec}(\bar \thetavec, \hat \alpha_i(\bar \thetavec, \bar \phivec))}{d \phivec^{\T}}  \\   \frac{\sqrt{N}}{T} \frac{d \ell_{2i}^{\phivec}(\bar \phivec, \hat \alpha_i(\bar \thetavec, \bar \phivec))}{d \thetavec^{\T}} &   \frac{d \ell_{2i}^{\phivec}(\bar \phivec, \hat \alpha_i(\bar \thetavec, \bar \phivec))}{d \phivec^{\T}} \end{array}\right] \\ = \frac{1}{N}\sum_{i=1}^N  \left[\begin{array}{cc} \frac{1}{T} \sum_{t=1}^T  \frac{d \ell_{1it}^{\thetavec}(\bar \thetavec, \hat \alpha_i(\bar \thetavec, \bar \phivec))}{d \thetavec^{\T}}  &  0 \\  \frac{\sqrt{N}}{T} \frac{d \ell_{2i}^{\phivec}(\bar \phivec, \hat \alpha_i(\bar \thetavec, \bar \phivec))}{d \thetavec^{\T}}  &   \frac{d \ell_{2i}^{\phivec}(\bar \phivec, \hat \alpha_i(\bar \thetavec, \bar \phivec))}{d \phivec^{\T}} \end{array}\right] + o_P\left( 1\right). 
\end{multline*}

\end{enumerate}

Note that 
\begin{eqnarray*}
& &\plim_{N,T \to \infty} \frac{1}{NT} \sum_{i=1}^N \sum_{t=1}^T  \frac{d \ell_{1it}^{\thetavec}(\bar \thetavec, \hat \alpha_i(\bar \thetavec, \bar \phivec))}{d \thetavec^{\T}}  = \plim_{N,T \to \infty} \frac{1}{NT} \sum_{i=1}^N \sum_{t=1}^T \left[ \ell_{1it}^{\thetavec \thetavec} - \ell_{1it}^{\thetavec \alpha} \frac{\sum_{t=1}^T \ell_{1it}^{\alpha \thetavec } }{\sum_{t=1}^T  \ell_{1it}^{\alpha \alpha }} \right], \\
& &\plim_{N,T \to \infty} \frac{1}{N} \sum_{i=1}^N   \frac{d \ell_{2i}^{\phivec}(\bar \thetavec, \hat \alpha_i(\bar \thetavec, \bar \phivec))}{d \thetavec^{\T}}  = - \plim_{N,T \to \infty} \frac{1}{N} \sum_{i=1}^N \ell_{2i}^{\phivec \alpha} \frac{\sum_{t=1}^T \ell_{1it}^{\alpha \thetavec } }{\sum_{t=1}^T  \ell_{1it}^{\alpha \alpha } }, \\
& &\plim_{N \to \infty} \frac{1}{N} \frac{d \ell_{2i}^{\phivec}(\bar \phivec, \hat \alpha_i(\bar \thetavec, \bar \phivec))}{d \phivec^{\T}} = \plim_{N \to \infty} \frac{1}{N} \sum_{i=1}^N  \ell_{2i}^{\phivec \phivec},
\end{eqnarray*}
because $
\sqrt{NT}(\bar \thetavec - \thetavec^0) = O_P(1),
$
$
\sqrt{N}(\bar \phivec - \phivec^0) = O_P(1),
$
and
$$
 \hat \alpha_i(\bar \thetavec, \bar \phivec) = \hat \alpha_i(\thetavec^0, \phivec^0) +  \hat \alpha_i^{\thetavec}(\check \thetavec, \check \phivec)^{\T}(\bar \thetavec - \thetavec^0) + \hat \alpha_i^{\phivec}(\check \thetavec, \check \phivec)^{\T}(\bar \phivec - \phivec^0) = \alpha_i^0 + O_P(T^{-1/2}),
$$
where $(\check \thetavec, \check \phivec)$ lies between  $(\bar \thetavec, \bar \phivec)$ and $(\thetavec^0, \phivec^0)$.

%\footnote{Some terms of the previous limits can be drop out as they are negligible asymptotically. These include $\Ep(\ell_{2i}^{\alpha \alpha }) $ and the second term of the last limit. We keep these terms to construct sample analog estimators that might have better finite sample properties.}

Starting from \eqref{eq:theta2} and combining the results from the previous steps yields
$$
0 = \frac{1}{\sqrt{N}} \sum_{i=1}^N \varphi_{\thetavec i} + \bar{\H}_{\thetavec} \sqrt{NT}\left( \hat \thetavec - \thetavec^0 + \frac{\bar{\H}_{\thetavec}^{-1} \bar{b}_{\thetavec}}{T} \right) + o_P\left( \sqrt{\frac{N}{T}}\right) + o_P\left(\sqrt{NT}( \hat \thetavec - \thetavec^0 )\right)
$$
Hence, under asymptotic sequences where $N = O(T)$,
$$
\sqrt{NT}\left( \hat \thetavec - \thetavec^0 + \frac{\bar{\H}_{\thetavec}^{-1} \bar{b}_{\thetavec}}{T} \right) \to_d \bar{\H}_{\thetavec}^{-1} \mathcal{N}(0, \bar{\O}_{\thetavec}). 
$$ 
%If $N/T \to c > 0$, the asymptotic bias of  $\hat \thetavec$ has the same order as the asymptotic standard deviation. Confidence intervals constructed around $\hat \thetavec$ will undercover $\thetavec^0$ even in large samples.
%
%
Similarly, starting from \eqref{eq:theta2b}, combining the results from the previous steps yields and using block matrix inversion
\begin{multline*}
0 = \frac{1}{\sqrt{N}} \sum_{i=1}^N \varphi_{\phivec i} + \bar{\H}_{\phivec} \sqrt{N}\left( \hat \phivec - \phivec^0 + \frac{\bar{\H}_{\phivec}^{-1} (\bar{b}_{\phivec} +  \bar{\H}_{\phivec\thetavec} \bar{\H}_{\thetavec}^{-1}\bar{b}_{\thetavec})}{T} \right) \\ + o_P\left( \frac{\sqrt{N}}{T} \right)  + o_P\left(\sqrt{N}( \hat \phivec - \phivec^0 )\right).
\end{multline*}
Hence, under asymptotic sequences where $N = O(T^2)$,
$$
\sqrt{N}\left( \hat \phivec - \phivec^0 + \frac{\bar{\H}_{\phivec}^{-1} (\bar{b}_{\phivec} +  \bar{\H}_{\phivec\thetavec} \bar{\H}_{\thetavec}^{-1}\bar{b}_{\thetavec})}{T} \right) \to_d \bar{\H}_{\phivec}^{-1} \mathcal{N}(0, \bar{\O}_{\phivec}). 
$$  
%If $N/T^2 \to c > 0$, the asymptotic bias of  $\hat \phivec$ has the same order as the asymptotic standard deviation. Confidence intervals constructed around $\hat \phivec$ will undercover $\phivec^0$ even in large samples.
\end{proof}

\subsection*{Application to ILM Estimator}

In the case of the integrated loss minimization (ILM) estimator for longitudinal data
\begin{equation}\label{eq:l1}
 \ell_{1it}(\thetavec,\alpha_i) = (y_{it} - \alpha_i)(u_{it}(\thetavec,\alpha_i) -0.5 ) + \xx_{it}^{\T} \thetavec [\bar \Bvec - \Bvec(u_{it}(\thetavec,\alpha_i))],
\end{equation}
where $\Bvec(u) = \int_0^u \bvec(x) dx$, $\bar \Bvec = \int_0^1 \Bvec(u) du$, and $u_{it}(\thetavec,\alpha_i)$ is the solution in $u$ to
$$
y_{it} - \alpha_i = \xx_{it}^{\T} \thetavec \bvec(u);
$$
and 
\begin{equation}\label{eq:l2}
 \ell_{2i}(\phivec,\alpha_i) =  \alpha_i(v_{i}(\phivec,\alpha_i) -0.5 ) + \zz_{i}^{\T} \phivec [\bar \Cvec - \Cvec(v_i(\phivec,\alpha_i))],
\end{equation} 
where $\Cvec(v) = \int_0^v \cvec(x) dx$, $\bar \Cvec = \int_0^1 \Cvec(v) dv$, and $v_{i}(\phivec,\alpha_i)$ is the solution in $v$ to
$$
\alpha_i = \zz_{i}^{\T} \phivec \cvec(v).
$$
To apply the previous analysis to the ILM estimator, we need to adapt the notation to the case where $\thetavec$ and $\phivec$ can be matrix-valued parameters by vectorizing $\thetavec$ and $\phivec$ in all the expressions. For example, $ \ell_{1it}^{\thetavec}(\thetavec, \alpha_i)$ becomes $\partial  \ell_{1it}(\thetavec, \alpha_i) / \partial \ve(\thetavec)$ and $(\hat \phivec - \phivec^0)$ becomes $\ve(\hat \phivec - \phivec^0)$.   

\begin{proof} [Proof of Theorem 1] %\ref{thm:ilm}]
%In particular, we can not rely on convexity of $(\thetavec,\alpha_i) \mapsto \ell_{1it}(\thetavec,\alpha_i)$ and $\phivec \mapsto \ell_{2i}(\phivec,\alpha_i)$ to show identification and consistency. Once we establish identification and consistency, the derivation of the asymptotic distribution  follows directly from  Theorem \ref{thm:gral} after verifying the conditions of Assumption \ref{ass:gral}.   
%
We divide the proof in three parts: identification,  consistency, and asymptotic distribution.

\paragraph{Part 1: Identification.} The identification analysis has several steps. First, we show identification of the quantile regression level-1 coefficient functions of the time-varying covariates and an aggregated individual effect that contains the quantile regression level-1 coefficient functions of the time invariant covariates and individual effects, using within-group variation. Second, we separate the quantile regression level-1 coefficient functions of the time invariant covariates from the individual effects using between-group variation and the location normalization in the distribution of the individual effects. Third, we show identification of the quantile regression level-2 coefficient functions using between-group variation of the individual effects. Fourth, we show that the parametric coefficient functions are identified from the quantile regression coefficient functions as the solutions to linear systems of equations. These solutions exist and are unique by Assumption 1(iv).%\ref{ass:ilm}(iv).

In this part, it is convenient to partition $\xx_{it} = (\xx_{1i},\xx_{2it})$, where $\xx_{1i}$ contains the time-invariant components including the constant and $\xx_{2it}$ contains the time-varying covariates,  $\thetavec = (\thetavec_1,\thetavec_2)$ and $\bvec(u) = [\bvec_1(u), \bvec_2(u)]$,  such that $\xx_{it}^{\T} \thetavec \bvec(u) = \xx_{1i}^{\T} \thetavec_1 \bvec_1(u) + \xx_{2it}^{\T} \thetavec_2 \bvec_2(u)$; and  use the parametrization $\alpha_i(u) = \alpha_i + \xx_{1i}^{\T}\thetavec_1 \bvec_1(u)$, $\thetavec_2(u) := \thetavec_2 \bvec(u)$ and $\phivec(v) := \phivec \cvec(v)$. Then, we can express the elements of the objective function as
$$
\ell_{1it}(\thetavec,\alpha_i) = \int_0^1 \rho_u(y_{it} - \alpha_i(u) - \xx_{2it}^{\T} \thetavec_2(u)) du,
$$
and
$$
\ell_{2i}(\phivec,\alpha_i) = \int_0^1  \rho_v(\alpha_i - \zz_i^{\T} \phivec(v)) dv,
$$
where $\rho_{\tau}(t) = t(\tau - I(t \leq 0))$ for any $\tau \in [0,1]$. 

For any $u$, identification of $\thetavec_2^0(u) = \thetavec_2^0 \bvec_2(u)$ and $\alpha_i^0(u) = \alpha_i^0 + \xx_{1i}^{\T}\thetavec_1^0 \bvec_1(u)$ follows from Assumption 1(i)--(iii) by standard arguments for quantile regression since % Assumption \ref{ass:ilm}
$$
f_{Y_{it}}(\alpha_i^0(u) + \xx_{2it}^{\T} \thetavec_2^0(u) \mid \alpha_i, \xx_{it}) = \frac{1}{\xx_{it}^{\T}\thetavec^0 \bvec'(u)} < \infty,
$$
where $f_{Y_{it}}(\cdot \mid \alpha_i, \xx_{it}) $ is the distribution of $Y_{it}$ conditional on $\alpha_i$ and $\xx_{it}$.
Then, to identify $\alpha_i^0$ and $\thetavec_1^0(u) = \thetavec_1^0 \bvec_1(u)$ from $\alpha_i^0(u)$ we use the within expectation
$$
\Ep[\alpha_i^0(u) \mid \alpha_i, \xx_{1i}] = \alpha_i^0 + \xx_{1i}^{\T} \Ep[\thetavec_1(U_{it})].
$$  
The between variation of this expression identifies $\plim_{N \to \infty} N^{-1} \sum_{i=1}^N \alpha_i^0 + \Ep(\theta_{1}(U_{it}))$ and  $\Ep[\thetavec_{1,-1}(U_{it})]$, where $\theta_1(u)$  and $\thetavec_{1,-1}(u)$ are the intercept and vector of slopes of $\thetavec_1(u)$, respectively.  To separate $\plim_{N \to \infty} N^{-1} \sum_{i=1}^N \alpha_i^0$ from  $\Ep(\theta_{1}(U_{it}))$ we use the normalization in Assumption 1(i). Thus, % \ref{ass:ilm}(i). Thus,
$$
\plim_{N\to \infty} \frac{1}{N} \sum_{i=1}^N \alpha_i^0(u) = \plim_{N\to \infty}\frac{1}{N}  \sum_{i=1}^N \alpha_i^0 + \plim_{N\to \infty}\frac{1}{N}  \sum_{i=1}^N \xx_{1i}^{\T} \Ep[\thetavec_1(U_{it})] 
$$
pins down $\Ep[\theta_1(U_{it})]$ since $\plim_{N\to \infty}N^{-1} \sum_{i=1}^N \alpha_i^0 = 0$. Then, $\alpha_i^0$ is identified by
$$
\alpha_i^0 = \Ep[\alpha_i^0(u) \mid \alpha_i, \xx_{1i}] - \xx_{1i}^{\T} \Ep[\thetavec_1(U_{it})],
$$
and $\thetavec_{1}(u)$ is identified by
$$
\thetavec_{1}(u) = \left[ \plim_{N\to \infty}\frac{1}{N}  \sum_{i=1}^N  \xx_{1i} \xx_{1i}^{\T} \right]^{-1} \plim_{N\to \infty}\frac{1}{N}  \sum_{i=1}^N \xx_{1i} (\alpha_i^0(u) - \alpha_i^0).
$$
Finally, identification of $\phivec^0(v) = \phivec^0 \cvec(v)$ for any $v$ follows from identification of $\alpha_i^0$ by standard arguments for quantile regression since
$$
f_{\alpha_i}(\zz_{i}^{\T} \phivec^0(v) \mid \zz_{i}) = \frac{1}{\zz_{i}^{\T}\phivec^0 \cvec'(v_{i})} < \infty,
$$
where $f_{\alpha_i}(\cdot \mid \zz_{i})$ is the distribution of $\alpha_i$ conditional on $\zz_{i}$.

To show identification of $\thetavec^0$ and $\phivec^0$ from identification of $\thetavec^0(u) = [\thetavec_1^0(u), \thetavec_2^0(u)]$ and $\phivec^0(v)$, we use Assumption 1(iv). %\ref{ass:ilm}
Let $\thetavec^0(u_1, \ldots, u_{d_{b}}) = [\thetavec^0(u_1), \ldots, \thetavec^0(u_{d_{b}})]$ and $\phivec^0(v_1, \ldots, v_{d_{c}}) = [\phivec^0(v_1), \ldots, \phivec^0(v_{d_{c}})]$. Then, we have two systems of linear equations
$$
\thetavec^0(u_1, \ldots, u_{d_{b}})  = \thetavec^0 \bvec(u_1, \ldots, u_{d_{b}}), \ \ \phivec^0(v_1, \ldots, v_{d_{c}}) = \phivec^0 \cvec(v_1, \ldots, v_{d_{c}}), 
$$
which have as unique solutions
$$
\thetavec^0 = \thetavec^0(u_1, \ldots, u_{d_{b}}) \bvec(u_1, \ldots, u_{d_{b}})^{-1}, \ \ \phivec^0 = \phivec^0(v_1, \ldots, v_{d_{c}}) \cvec(v_1, \ldots, v_{d_{c}})^{-1}.
$$

%Once established the identification, the consistency of the estimators follows from standard sample analog arguments. This step verifies Assumption \ref{ass:gral}(i). 

\paragraph{Part 2: Consistency.} The consistency of all the estimators can be established sequentially. Let
$$
 \hat L_{i}(\thetavec, \phivec, \alpha_i) = \frac{1}{T} \left[\sum_{t=1}^T \ell_{1it}(\thetavec, \alpha_i) +   \ell_{2i}(\phivec, \alpha_i)\right],
$$
and
$$
L_{1i}(\thetavec, \alpha_i) = \plim_{T \to \infty} \frac{1}{T} \sum_{t=1}^T \ell_{1it}(\thetavec, \alpha_i). 
$$
Consistency of $\hat \thetavec$ and uniform consistency of $\hat \alpha_i$, i.e., $\hat \thetavec \to_P \thetavec^0$ and $\sup_i |\hat \alpha_i -  \alpha_i^0| \to_P 0$,  follows from standard arguments for quantile regression using the parametrization of step 1, together with the fact that the penalty term of the objective function is asymptotically negligible, i.e.,
$$
\sup_i \left[ \sup_{\{\thetavec, \phivec, \alpha_i\}} \left| \hat L_{i}(\thetavec, \phivec, \alpha_i) - L_{1i}(\thetavec, \alpha_i) \right| \right] \to_P 0,
$$
where we use that
$$
\sup_i \left[ \sup_{\{\phivec, \alpha_i\}} \left| \frac{1}{T} \ell_{2i}(\phivec,\alpha_i) \right| \right] \to_P 0,
$$
and the triangle inequality.

Then, $\hat \phivec \to_P \phivec^0$ also follows from standard arguments for quantile regression using that
$$
\sup_{\phivec} \left|  \frac{1}{N} \sum_{i=1}^N \ell_{2i}(\phivec, \hat \alpha_i) - L_2(\phivec)\right| \to_P 0, \ \  L_2(\phivec) = \plim_{N \to \infty} \frac{1}{N}  \sum_{i=1}^N \ell_{2i}(\phivec, \alpha_i^0),
$$
where we use that, by the triangle inequality,
\begin{multline*}
\left|  \frac{1}{N} \sum_{i=1}^N \ell_{2i}(\phivec, \hat \alpha_i) - L_2(\phivec)\right| \leq \left|  \frac{1}{N} \sum_{i=1}^N \ell_{2i}(\phivec, \hat \alpha_i) - \frac{1}{N} \sum_{i=1}^N \ell_{2i}(\phivec,  \alpha_i^0) \right| \\ + \left|  \frac{1}{N} \sum_{i=1}^N \ell_{2i}(\phivec, \alpha_i^0) - L_2(\phivec)\right|, 
\end{multline*}
continuity of $\alpha_i \mapsto \ell_{2i}(\phivec,  \alpha_i)$,  and uniform consistency of $\hat \alpha_i$.

\paragraph{Part 3: Asymptotic Distribution. } The derivation of the asymptotic distribution follows from Theorem \ref{thm:gral} after verifying Assumption \ref{ass:gral}(ii)-(viii) and evaluating the expressions for the ILM estimator.  

We start by obtaining the derivatives of $(\thetavec,\alpha_i) \mapsto \ell_{1it}(\thetavec,\alpha_i)$ and $\phivec \mapsto \ell_{2i}(\phivec,\alpha_i)$. We report only the derivatives that show up in the  terms of the asymptotic expansions. A similar analysis applies to the additional terms that appear in the remainder terms. Direct calculations from \eqref{eq:l1} yield
\begin{eqnarray*}
 \ell_{1it}^{\alpha}(\thetavec,\alpha_i) &=& - (u_{it}(\thetavec,\alpha_i) -0.5 ), \ \  \ell_{1it}^{\alpha \alpha}(\thetavec,\alpha_i) = - u_{it}^{\alpha}(\thetavec,\alpha_i) = [ \xx_{it}^{\T} \thetavec  \bvec'(u_{it}(\thetavec,\alpha_i)]^{-1}, \\ \ell_{1it}^{\alpha \alpha \alpha}(\thetavec,\alpha_i) &=&  - \xx_{it}^{\T} \thetavec \bvec''(u_{it}(\thetavec,\alpha_i)) \ell_{1it}^{\alpha \alpha}(\thetavec,\alpha_i)^{3}, \\
%\ell_{1it}^{\alpha \alpha \alpha \alpha}(\thetavec,\alpha_i) &=&  \xx_{it}^{\T} \thetavec [\bvec'''(u_{it}(\thetavec,\alpha_i)) \ell_{1it}^{\alpha \alpha}(\thetavec,\alpha_i)^{4} - 3 \bvec''(u_{it}(\thetavec,\alpha_i)) \ell_{1it}^{\alpha \alpha}(\thetavec,\alpha_i)^{2}\ell_{1it}^{\alpha \alpha\alpha}(\thetavec,\alpha_i)], \\
  \ell_{1it}^{\thetavec}(\thetavec,\alpha_i) &=&  [\bar \Bvec - \Bvec(u_{it}(\thetavec,\alpha_i))] \otimes \xx_{it}, \\   
  \ell_{1it}^{\thetavec \thetavec}(\thetavec,\alpha_i) &=&  [ \ell_{1it}^{\alpha \alpha}(\thetavec,\alpha_i) \bvec(u_{it}(\thetavec,\alpha_i)) \bvec(u_{it}(\thetavec,\alpha_i))^{\T}] \otimes  [\xx_{it} \xx_{it}^{\T}], \\ 
   \ell_{1it}^{\thetavec \alpha}(\thetavec,\alpha_i) &=&   [\ell_{1it}^{\alpha \alpha}(\thetavec,\alpha_i) \bvec(u_{it}(\thetavec,\alpha_i))] \otimes  \xx_{it}, \\  \ell_{1it}^{\thetavec \alpha \alpha}(\thetavec,\alpha_i) &=& [  \ell_{1it}^{\alpha \alpha \alpha}(\thetavec,\alpha_i) \bvec(u_{it}(\thetavec,\alpha_i)) - \ell_{1it}^{\alpha \alpha}(\thetavec,\alpha_i)^2  \bvec'(u_{it}(\thetavec,\alpha_i)) ] \otimes \xx_{it}.
\end{eqnarray*}
Analogously, direct calculations from \eqref{eq:l2} yield
\begin{eqnarray*}
 \ell_{2i}^{\alpha}(\phivec,\alpha_i) &=& v_i(\phivec,\alpha_i) -0.5 , \ \  \ell_{2i}^{\alpha \alpha}(\phivec,\alpha_i) =  v_i^{\alpha}(\phivec,\alpha_i) =  [ \zz_{i}^{\T} \phivec  \cvec'(v_i(\phivec,\alpha_i)]^{-1}, \\ 
 \ell_{2i}^{\alpha \alpha \alpha}(\phivec,\alpha_i) &=&  - \zz_{i}^{\T} \phivec \cvec''(v_i(\phivec,\alpha_i)) \ell_{2i}^{\alpha \alpha}(\phivec,\alpha_i)^3, \\
  \ell_{2i}^{\phivec}(\phivec,\alpha_i) &=&  [\bar \Cvec - \Cvec(v_i(\phivec,\alpha_i))] \otimes \zz_{i}, \\   \ell_{2i}^{\phivec \phivec}(\phivec,\alpha_i) &=&  [\ell_{2i}^{\alpha \alpha}(\phivec,\alpha_i) \cvec(v_i(\phivec,\alpha_i)) \cvec(v_i(\phivec,\alpha_i))^{\T}] \otimes [\zz_{i} \zz_{i}^{\T}], \\ 
   \ell_{2i}^{\phivec \alpha}(\phivec,\alpha_i) &=& -  [\cvec(v_i(\phivec,\alpha_i)) \ell_{2i}^{\alpha \alpha}(\phivec,\alpha_i) ] \otimes  \zz_{i}, \\  \ell_{2i}^{\phivec \alpha \alpha}(\phivec,\alpha_i) &=& -  [ \ell_{2i}^{\alpha \alpha \alpha}(\phivec,\alpha_i) \cvec(v_i(\phivec,\alpha_i))   + \ell_{2i}^{\alpha \alpha}(\phivec,\alpha_i)^2 \cvec'(v_i(\phivec,\alpha_i)) ] \otimes \zz_{i}.
\end{eqnarray*}

Assumption \ref{ass:gral}(ii) follows from Assumption 1(v) %\ref{ass:ilm}
by inspection of the derivatives. Assumption \ref{ass:gral}(iii) holds trivially as the parameter spaces are $\mathcal{A} = \mathbb{R}$, $\Theta = \mathbb{R}^{d_{x}}$ and $\Phi = \mathbb{R}^{d_{z}}$, and Assumptions \ref{ass:gral}(iv)-(v) follow directly from Assumption 1(vi)-(vii). %\ref{ass:ilm}
Then, the asymptotic distributions follow from Theorem \ref{thm:gral}, replacing the derivatives above evaluated at the true parameter values in the expressions of the bias and variance given in Assumption \ref{ass:gral}, after dropping out some terms that are either asymptotically negligible or zero because $u_{it} \sim U_{it}$ and $v_i \sim V_{i}$ are independent.
\end{proof}

\section*{Appendix B - Computation}

An efficient algorithm has been implemented in the \texttt{qrcm} R package, which also includes
all the necessary functions for model building, summary, predictions, testing, and plotting.
The steps of the algorithm can be summarized as follows:
\begin{itemize}
\item{step 0. Select starting values $\hat\csi^{(0)} = (\hat\thetavec^{(0)}, \hat\phivec^{(0)}, \aahat^{(0)}).$}
\item{step 1.
	Given a current estimate $\aahat$, minimize $L(\thetavec, \aahat)$ (equation 6) %\ref{Ltheta}) 
        with respect to $\thetavec$, and $L(\phivec, \aahat)$ (equation 7) %\ref{Lphi}) 
        with respect to $\phivec$. Equivalently, find the approximated zeroes of 
	$\G_{\thetavec}(\thetavec, \aahat)$ (equation 13) %\ref{Gtheta}) 
       and $\G_{\phivec}(\phivec, \aahat)$ (equation 14). %\ref{Gphi}).
}
\item{step 2.
	Given a current estimate $(\hat\thetavec, \hat\phivec)$, find the approximated zeroes of 
	$G_{\alpha_i}(\alpha_i, \hat\thetavec, \hat\phivec)$ (equation 15), %\ref{Ga}), 
        for $i = 1, \ldots, N$.
}
\end{itemize}
Steps 1-2 are repeated until a convergence criterion has been reached. In the \texttt{qrcm} package, 
the algorithm stops when, in two consecutive iterations, either the absolute difference in the estimated parameters, 
or the absolute change in the loss function defined by $L(\hat\thetavec,\hat \phivec, \aahat) = L(\hat\thetavec, \aahat) + L(\hat\phivec, \aahat)$
(equation 12) %\ref{L}) 
is below a certain tolerance (default $0.00001$).

Note that $\G_{\thetavec}(\thetavec, \aa)$, $\G_{\phivec}(\phivec,\aa)$, and $G_{\alpha_i}(\alpha_i, \thetavec, \phivec)$
are smooth functions of their arguments, and can be solved by using a standard Newton-type algorithm.
In practice, a bisection algorithm is used to solve $G_{\alpha_i}(\alpha_i, \hat\thetavec, \hat\phivec)$
(see section {\bf B2} of this Appendix).

The table below summarizes the computation times required to estimate the models presented in simulation,
using a desktop computer Intel(R) Core(TM) i7-4770 CPU @ 3.40GHz, RAM 8.00 GB, 64-bit Operating System. 
\begin{table}[h]
\small
\centering
\begin{tabular}{ccccccc}
\hline
\hline
\noalign{\vspace{0.1cm}}
&&\multicolumn{2}{c}{$N = 150$} && \multicolumn{2}{c}{$N = 300$}\\
\cline{3-4}\cline{6-7}
\noalign{\vspace{0.1cm}}
&& $T = 5$ & $T = 10$ && $T = 5$ & $T = 10$\\
Simulation 1 && 0.7 (0.4--0.8) & 1.6 (0.8--2.2) && 1.3 (0.7--1.6) & 2.7 (1.3--3.8)\\
Simulation 2 && 1.7 (1.5--1.8) & 2.3 (2.1--2.5) && 3.2 (3.0--3.4) & 4.7 (4.4--5.0)\\
\noalign{\vspace{0.1cm}}
\hline
\hline
\noalign{\vspace{0.1cm}}
\end{tabular}
\vspace{0.1cm}\footnotesize\center
Summary statistics of computation times for simulations 1 and 2 described in Section \ref{sec:sim}.
We report the median time (in seconds) and, in brackets, the interquartile range.
\end{table}

\subsection*{{\bf B1}. Choosing the starting values}

Selecting the starting points is fundamental, because not all values of the parameters 
correspond to a well-defined quantile function. If $\xx_{it}^\T\betavec(\cdot \mid \hat\thetavec^{(0)})$
or $\zz_{i}^\T\gammavec(\cdot \mid \hat\phivec^{(0)})$ are not monotonically increasing functions,
the algorithm may converge to a nonsense solution.

As shown by equations (13--15), %(\ref{Gtheta}--\ref{Ga}), 
the gradient only depends on the model parameters through 
$u_{it}(\thetavec,\alpha_i)$ and $v_i(\phivec,\alpha_i)$, that correspond to the values of the cumulative distribution functions of $y_{it} - \alpha_i$ and $\alpha_i$,
respectively (equations 10 and 11). % \ref{u} and \ref{v}). 
Given an initial estimate of $(u_{it},v_i)$, $(\hat u_{it}^{(0)}, \hat v_i^{(0)})$,
a starting value for $\thetavec$ and $\phivec$ can be obtained by approximating model (4) %(\ref{themodel2}) 
by a linear regression model of the form $$Y_{it} = \xx_{it}^\T\thetavec \bvec(\hat u_{it}^{(0)}) + \zz_{i}^\T\phivec\cvec(\hat v_i^{(0)}) + \epsilon_{it},$$
in which $\thetavec$ and $\phivec$ represent regression coefficients associated with the tensor products 
$\xx_i\otimes\bvec(\hat u_{it}^{(0)})$ and $\zz_i\otimes\cvec(\hat v_i^{(0)})$.

To obtain the initial values $\hat u_{it}^{(0)}$ and $\hat v_i^{(0)}$, we proceed as follows: 
first, we compute a preliminary estimate $\aahat^{(0)} = \{\hat \alpha_1^{(0)}, \ldots, \hat \alpha_N^{(0)}\}$
of $\aa = \{\alpha_1, \ldots, \alpha_N\}$ using the cluster medians.
Then, we use the \texttt{pch} R package to estimate nonparametrically the cumulative distribution function
of $y_{it} - \hat \alpha_i^{(0)}$ and, separately, that of $\aahat^{(0)}$. The fitted values are used to
define $\hat u_{it}^{(0)}$ and $\hat v_i^{(0)}$.

The algorithm appears very stable
and, in the simulation and data analysis conducted in this paper, never failed to converge.

\subsection*{{\bf B2}. Evaluating $\hat u_{it} := u_{it}(\hat \thetavec,\hat \alpha_i)$ and $\hat v_i := v_{i}(\hat \phivec,\hat \alpha_i)$}

At any current estimate $\hat\csi = (\hat\thetavec, \hat\phivec, \aahat)$,
evaluating the loss function and its derivatives requires computing the values 
$(\hat u_{it}, \hat v_i)$ such that 
$$\xx_{it}^\T\betavec(\hat u_{it} \mid \hat\thetavec) = \xx_{it}^\T\hat\thetavec\bvec(\hat u_{it}) = y_{it} - \hat\alpha_i,$$
$$\zz_i^\T\gammavec(\hat v_i \mid \hat\phivec) = \zz_i^\T\hat\phivec\cvec(\hat v_i) = \hat\alpha_i.$$
These values are not generally available in closed form, and are computed using a bisection algorithm.
For example, to compute $\hat u_{it}$, we proceed as follows: (i) start with $\hat u_{it}^{(0)} = 0.5$; (ii)
for $s = 1,2,3, \ldots$, define 
$$\hat u_{it}^{(s)} = \hat u_{it}^{(s - 1)} + \frac{1}{2^{s + 1}}\text{sign}\left(y_{it} - \hat\alpha_i - \xx_{it}^\T\hat\thetavec\bvec(\hat u_{it}^{(s - 1)})\right).$$
By bisecting the unit interval $20$ times, it is possible to achieve a precision of about $10^{-6}$. 

To solve equation (15), %(\ref{Ga}), 
which includes both $u_{it}$ and $v_i$, two nested bisections must be implemented.
We start with $\hat v_i = 0.5$, and compute $\hat \alpha_i = \zz_i \hat\phi \cvec(\hat v_i)$, $i = 1, \ldots, N$.
We then apply bisection to compute the corresponding value of $\hat u_{it}$. Based on the sign of $G_{\alpha_i}(\hat \alpha_i, \hat\thetavec, \hat\phivec)$,
we update the current value of $\hat v_i$, and repeat the process until convergence.

%%%%%%%%%%%%%%%%%%%%%%%%%%%%%%%%%%%%%%%%%%%
%%%%%%%%%%%%%%%%%%%%%%%%%%%%%%%%%%%%%%%%%%%
%%%%%%%%%%%%%%%%%%%%%%%%%%%%%%%%%%%%%%%%%%%

\newpage
\section*{Appendix C - Extended simulation results}

We present additional simulation results that were not shown in the main text. We refer to Section 7 %\ref{sec:sim}
for details on the simulation scenarios.

\subsection*{{\bf C1}. Comparison with standard penalized fixed-effects estimators}

We compared the performance of the described estimator with that of Koenker's (2004) penalized fixed-effects 
method for longitudinal quantile regression, implemented in the \texttt{rqpd} R package. 
To fit the model, the individual effects were centered around the median of their marginal distribution (which is zero
in simulation 1, and approximately $0.8744$ in simulation 2).
The tuning parameter $\lambda$ was selected as the ratio of the level-1 and level-2 variance components of 
a linear random-intercept model. Quantiles $0.2, 0.4, 0.6, 0.8$ were estimated jointly with equal weight.

A comparison is only possible for level-1 parameters, as no level-2 coefficients are estimated in
standard penalized quantile regression. Results are summarized in Tables \ref{C1a} and \ref{C1b}.
As expected, using a parametric model improved efficiency. Additionally, our estimator appeared
to have a smaller bias, as if the imposed parametric structure could alleviate the incidental parameter problem.

%%%%%%%%%%%%%%%%%%%%%%%%%%%%%%%%%%%%%%%%%%%%%%%%%%%%%%%%%%%
%%%%%%%%%%%%%%%%%%%%%%%%%%%%%%%%%%%%%%%%%%%%%%%%%%%%%%%%%%%
%%%%%%%%%%%%%%%%%%%%%%%%%%%%%%%%%%%%%%%%%%%%%%%%%%%%%%%%%%%

\subsection*{{\bf C2}. Model selection}
To assess the performance of the information criteria described in Section 6.2, % \ref{subsec:ms},
we estimated three alternative models, of which only one was correctly specified.
Then, we used \textsc{AIC} and \textsc{BIC} to select the ``best'' model.
The simulation is described in details in Table \ref{C2a}. In simulation 1, we
compared different specifications of $\beta_1(u \mid \thetavec)$, using level-1 \textsc{AIC} and \textsc{BIC}.
In simulation 2, we compared different specifications of $\gamma_1(v \mid \phivec)$, using level-2 \textsc{AIC} and \textsc{BIC}.
In both scenarios, model II had the same number of parameters as model I, and corresponded to a very good approximation of the true model;
model III had less parameters, but was more severely misspecified. 

Results are summarized in Table \ref{C2b}. As expected, \textsc{BIC} tends to reward more parsimonious models,
especially when the sample size is relatively small.

%%%%%%%%%%%%%%%%%%%%%%%%%%%%%%%%%%%%%%%%%%%%%%%%%%%%%%%%%%%
%%%%%%%%%%%%% Comparison with Koenker 2004 %%%%%%%%%%%%%%%%%%%%%%%%%%%%%%%%
%%%%%%%%%%%%%%%%%%%%%%%%%%%%%%%%%%%%%%%%%%%%%%%%%%%%%%%%%%%

\begin{table}
\footnotesize
\renewcommand\thetable{C1a}
\caption{Comparison with a penalized fixed-effects estimator ($N = 150$)}
\label{C1a}
\centering
\begin{tabular}{cc
m{0.01cm}m{0.5cm}m{0.5cm}m{0.5cm}m{0.3cm}m{0.3cm}
m{0.01cm}m{0.5cm}m{0.5cm}m{0.5cm}m{0.3cm}m{0.3cm}
}
\noalign{\vspace{0.2cm}}
\hline
\hline
\noalign{\vspace{0.2cm}}
 &&& \multicolumn{11}{c}{Simulation 1}\\
\noalign{\vspace{0.1cm}}
\hline
%%%%%%%%%%%%%%%%%%%%%%%%%%%%%%%%%%%%%%%%%%%%%%%%%%%%%%%%%%%
\noalign{\vspace{0.2cm}}
$T = 5$ & $u$ && 
	$\beta_0$ & $\hat\beta_0$ & $\hat\beta_0^K$ & $\text{se}$ & $\text{se}^K$ &&
	$\beta_1$ & $\hat\beta_1$ & $\hat\beta_1^K$ & $\text{se}$ & $\text{se}^K$\\
\cline{4-8}\cline{10-14}
\noalign{\vspace{0.1cm}}
& $0.2$ && 1.11 & 1.17 & 1.19 & .10 & .13 && 0.73 & 0.72 & 0.69 & .10 & .13 \\
& $0.4$ && 1.26 & 1.30 & 1.33 & .11 & .12 && 0.99 & 0.97 & 0.92 & .11 & .11 \\
& $0.6$ && 1.46 & 1.48 & 1.48 & .11 & .13 && 1.01 & 0.99 & 1.02 & .11 & .12 \\
& $0.8$ && 1.80 & 1.79 & 1.75 & .12 & .15 && 1.27 & 1.24 & 1.23 & .12 & .22 \\
\noalign{\vspace{0.2cm}}
$T = 10$ & $u$ && 
	$\beta_0$ & $\hat\beta_0$ & $\hat\beta_0^K$ & $\text{se}$ & $\text{se}^K$ &&
	$\beta_1$ & $\hat\beta_1$ & $\hat\beta_1^K$ & $\text{se}$ & $\text{se}^K$\\
\cline{4-8}\cline{10-14}
\noalign{\vspace{0.1cm}}
& $0.2$ && 1.11 & 1.14 & 1.16 & .10 & .12 && 0.73 & 0.73 & 0.71 & .06 & .09 \\
& $0.4$ && 1.26 & 1.27 & 1.30 & .10 & .12 && 0.99 & 0.98 & 0.95 & .07 & .07 \\
& $0.6$ && 1.46 & 1.47 & 1.48 & .10 & .12 && 1.01 & 1.00 & 1.03 & .07 & .08 \\
& $0.8$ && 1.80 & 1.80 & 1.78 & .10 & .14 && 1.27 & 1.25 & 1.25 & .08 & .17 \\
\noalign{\vspace{0.2cm}}
\hline
\hline
\noalign{\vspace{0.2cm}}
 &&& \multicolumn{11}{c}{Simulation 2}\\
\noalign{\vspace{0.1cm}}
\hline
\noalign{\vspace{0.2cm}}
$T = 5$ & $u$ && 
	$\beta_0$ & $\hat\beta_0$ & $\hat\beta_0^K$ & $\text{se}$ & $\text{se}^K$ &&
	$\beta_1$ & $\hat\beta_1$ & $\hat\beta_1^K$ & $\text{se}$ & $\text{se}^K$\\
\cline{4-8}\cline{10-14}
\noalign{\vspace{0.1cm}}
& $0.2$ && 0.11 & 0.11 & 0.22 & .01 & .11 && 3.60 & 3.79 & 3.71 & .11 & .17 \\
& $0.4$ && 0.24 & 0.24 & 0.35 & .03 & .11 && 4.20 & 4.28 & 4.17 & .12 & .18 \\
& $0.6$ && 0.41 & 0.41 & 0.46 & .05 & .11 && 4.80 & 4.77 & 4.72 & .15 & .19 \\
& $0.8$ && 0.66 & 0.66 & 0.66 & .08 & .12 && 5.40 & 5.26 & 5.29 & .19 & .21 \\
\noalign{\vspace{0.2cm}}
$T = 10$ & $u$ && 
	$\beta_0$ & $\hat\beta_0$ & $\hat\beta_0^K$ & $\text{se}$ & $\text{se}^K$ &&
	$\beta_1$ & $\hat\beta_1$ & $\hat\beta_1^K$ & $\text{se}$ & $\text{se}^K$\\
\cline{4-8}\cline{10-14}
\noalign{\vspace{0.1cm}}
& $0.2$ && 0.11 & 0.11 & 0.19 & .01 & .09 && 3.60 & 3.68 & 3.64 & .07 & .10 \\
& $0.4$ && 0.24 & 0.23 & 0.31 & .02 & .09 && 4.20 & 4.23 & 4.18 & .08 & .12 \\
& $0.6$ && 0.41 & 0.40 & 0.44 & .03 & .10 && 4.80 & 4.78 & 4.76 & .10 & .13 \\
& $0.8$ && 0.66 & 0.64 & 0.66 & .05 & .10 && 5.40 & 5.33 & 5.35 & .12 & .15 \\
%%%%%%%%%%%%%%%%%%%%%%%%%%%%%%%%%%%%%%%%%%%%%%%%%%%%%%%%%%%
\noalign{\vspace{0.1cm}}
\hline
\hline
\noalign{\vspace{0.1cm}}
%%%%%%%%%%%%%%%%%%%%%%%%%%%%%%%%%%%%%%%%%%%%%%%%%%%%%%%%%%%
\end{tabular}
\vspace{0.1cm}\footnotesize\center
Comparison with Koenker's (2004) penalized fixed-effects estimator with $N = 150$ and $T = \{5, 10\}$.
We report the true value ($\beta$) of the level-1 coefficients at the quintiles ($0.2, 0.4, 0.6, 0.8$),
and the average estimates and empirical standard errors obtained with our method ($\hat \beta$, $\text{se}$) and 
with Koenker's penalized fixed-effects estimator ($\hat \beta^K$, $\text{se}^K)$, across $B = 1000$ simulated datasets. 
\end{table}

%%%%%%%%%%%%%%%%%%%%%%%%%%%%%%%%%%%%%%%%%%%%%%%%%%%%%%%%%%%
%%%%%%%%%%%%%%%%%%%%%%%%%%%%%%%%%%%%%%%%%%%%%%%%%%%%%%%%%%%
%%%%%%%%%%%%%%%%%%%%%%%%%%%%%%%%%%%%%%%%%%%%%%%%%%%%%%%%%%%

\begin{table}
\footnotesize
\renewcommand\thetable{C1b}
\caption{Comparison with a penalized fixed-effects estimator ($N = 300$)}
\label{C1b}
\centering
\begin{tabular}{cc
m{0.01cm}m{0.5cm}m{0.5cm}m{0.5cm}m{0.3cm}m{0.3cm}
m{0.01cm}m{0.5cm}m{0.5cm}m{0.5cm}m{0.3cm}m{0.3cm}
}
\noalign{\vspace{0.2cm}}
\hline
\hline
\noalign{\vspace{0.2cm}}
 &&& \multicolumn{11}{c}{Simulation 1}\\
\noalign{\vspace{0.1cm}}
\hline
%%%%%%%%%%%%%%%%%%%%%%%%%%%%%%%%%%%%%%%%%%%%%%%%%%%%%%%%%%%
\noalign{\vspace{0.2cm}}
$T = 5$ & $u$ && 
	$\beta_0$ & $\hat\beta_0$ & $\hat\beta_0^K$ & $\text{se}$ & $\text{se}^K$ &&
	$\beta_1$ & $\hat\beta_1$ & $\hat\beta_1^K$ & $\text{se}$ & $\text{se}^K$\\
\cline{4-8}\cline{10-14}
\noalign{\vspace{0.1cm}}
& $0.2$ && 1.11 & 1.17 & 1.20 & .08 & .09 && 0.73 & 0.72 & 0.69 & .07 & .09 \\
& $0.4$ && 1.26 & 1.30 & 1.33 & .08 & .09 && 0.99 & 0.97 & 0.92 & .08 & .08 \\
& $0.6$ && 1.46 & 1.48 & 1.48 & .08 & .09 && 1.01 & 1.00 & 1.02 & .08 & .09 \\
& $0.8$ && 1.80 & 1.79 & 1.74 & .08 & .11 && 1.27 & 1.24 & 1.24 & .09 & .16 \\
\noalign{\vspace{0.2cm}}
$T = 10$ & $u$ && 
	$\beta_0$ & $\hat\beta_0$ & $\hat\beta_0^K$ & $\text{se}$ & $\text{se}^K$ &&
	$\beta_1$ & $\hat\beta_1$ & $\hat\beta_1^K$ & $\text{se}$ & $\text{se}^K$\\
\cline{4-8}\cline{10-14}
\noalign{\vspace{0.1cm}}
& $0.2$ && 1.11 & 1.14 & 1.15 & .07 & .08 && 0.73 & 0.73 & 0.71 & .05 & .06 \\
& $0.4$ && 1.26 & 1.27 & 1.29 & .07 & .08 && 0.99 & 0.98 & 0.94 & .05 & .05 \\
& $0.6$ && 1.46 & 1.47 & 1.47 & .07 & .09 && 1.01 & 1.00 & 1.02 & .05 & .06 \\
& $0.8$ && 1.80 & 1.80 & 1.78 & .07 & .10 && 1.27 & 1.25 & 1.25 & .05 & .12 \\
\noalign{\vspace{0.2cm}}
\hline
\hline
\noalign{\vspace{0.2cm}}
 &&& \multicolumn{11}{c}{Simulation 2}\\
\noalign{\vspace{0.1cm}}
\hline
\noalign{\vspace{0.2cm}}
$T = 5$ & $u$ && 
	$\beta_0$ & $\hat\beta_0$ & $\hat\beta_0^K$ & $\text{se}$ & $\text{se}^K$ &&
	$\beta_1$ & $\hat\beta_1$ & $\hat\beta_1^K$ & $\text{se}$ & $\text{se}^K$\\
\cline{4-8}\cline{10-14}
\noalign{\vspace{0.1cm}}
& $0.2$ && 0.11 & 0.11 & 0.22 & .01 & .08 && 3.60 & 3.79 & 3.70 & .08 & .11 \\
& $0.4$ && 0.24 & 0.24 & 0.36 & .02 & .08 && 4.20 & 4.28 & 4.17 & .08 & .13 \\
& $0.6$ && 0.41 & 0.41 & 0.47 & .03 & .08 && 4.80 & 4.77 & 4.72 & .10 & .14 \\
& $0.8$ && 0.66 & 0.66 & 0.65 & .06 & .08 && 5.40 & 5.27 & 5.30 & .13 & .15 \\
\noalign{\vspace{0.2cm}}
$T = 10$ & $u$ && 
	$\beta_0$ & $\hat\beta_0$ & $\hat\beta_0^K$ & $\text{se}$ & $\text{se}^K$ &&
	$\beta_1$ & $\hat\beta_1$ & $\hat\beta_1^K$ & $\text{se}$ & $\text{se}^K$\\
\cline{4-8}\cline{10-14}
\noalign{\vspace{0.1cm}}
& $0.2$ && 0.11 & 0.10 & 0.18 & .01 & .06 && 3.60 & 3.67 & 3.64 & .05 & .07 \\
& $0.4$ && 0.24 & 0.23 & 0.31 & .01 & .07 && 4.20 & 4.23 & 4.17 & .06 & .08 \\
& $0.6$ && 0.41 & 0.40 & 0.44 & .02 & .07 && 4.80 & 4.78 & 4.76 & .07 & .09 \\
& $0.8$ && 0.66 & 0.64 & 0.65 & .04 & .07 && 5.40 & 5.33 & 5.35 & .09 & .10 \\
%%%%%%%%%%%%%%%%%%%%%%%%%%%%%%%%%%%%%%%%%%%%%%%%%%%%%%%%%%%
\noalign{\vspace{0.1cm}}
\hline
\hline
\noalign{\vspace{0.1cm}}
%%%%%%%%%%%%%%%%%%%%%%%%%%%%%%%%%%%%%%%%%%%%%%%%%%%%%%%%%%%
\end{tabular}
\vspace{0.1cm}\footnotesize\center
Comparison with Koenker's (2004) penalized fixed-effects estimator with $N = 300$ and $T = \{5, 10\}$.
\end{table}

%%%%%%%%%%%%%%%%%%%%%%%%%%%%%%%%%%%%%%%%%%%%%%%%%%%%%%%%%%%
%%%%%%%%%%%%%%%%%%%%%%%%%%%%%%%%%%%%%%%%%%%%%%%%%%%%%%%%%%%
%%%%%%%%%%%%%%%%%%%%%%%%%%%%%%%%%%%%%%%%%%%%%%%%%%%%%%%%%%%
% AIC AND BIC %%%%%%%%%%%%%%%%%%%%%%%%%%%%%%%%%%%%%%%%%%%%%%%%%%%%
%%%%%%%%%%%%%%%%%%%%%%%%%%%%%%%%%%%%%%%%%%%%%%%%%%%%%%%%%%%
%%%%%%%%%%%%%%%%%%%%%%%%%%%%%%%%%%%%%%%%%%%%%%%%%%%%%%%%%%%
%%%%%%%%%%%%%%%%%%%%%%%%%%%%%%%%%%%%%%%%%%%%%%%%%%%%%%%%%%%

\begin{table}
\footnotesize
\renewcommand\thetable{C2a}
\caption{Three different model specifications}
\label{C2a}
\centering
\begin{tabular}{ccc
}
\noalign{\vspace{0.2cm}}
\hline
\hline
\noalign{\vspace{0.2cm}}
 & Simulation 1 & Simulation 2\\
\noalign{\vspace{0.1cm}}
\hline
%%%%%%%%%%%%%%%%%%%%%%%%%%%%%%%%%%%%%%%%%%%%%%%%%%%%%%%%%%%
\noalign{\vspace{0.2cm}}
Coefficient & $\beta_1(u \mid \thetavec)$ & $\gamma_1(v \mid \phivec)$\\
True value & $1 + 10(u - 0.5)^3$ & $0.5\log(1 - \log(1 - v))$\\ 
Model I (correct) & $\theta_{10} + \theta_{11}u + \theta_{12}u^2 + \theta_{13}u^3$ & $\phi_{10} + \phi_{11}\log(1 - \log(1 - v))$\\
Model II (misspecified) & $\theta_{10} + \theta_{11}u + \theta_{12}\cos{(\pi u)} + \theta_{13}\sin{(\pi u)}$ & $\phi_{10} + \phi_{11}(1 - (1 - v)^{0.5})$\\
Model III (misspecified) & $\theta_{10} + \theta_{11}\sqrt{u} + \theta_{12}\sqrt{1 - u}$ &  $\phi_{11}(1 - v^2)$\\
%%%%%%%%%%%%%%%%%%%%%%%%%%%%%%%%%%%%%%%%%%%%%%%%%%%%%%%%%%%
\noalign{\vspace{0.1cm}}
\hline
\hline
%%%%%%%%%%%%%%%%%%%%%%%%%%%%%%%%%%%%%%%%%%%%%%%%%%%%%%%%%%%
\end{tabular}
\vspace{0.1cm}\footnotesize\center
Alternative models to be compared using \textsc{AIC} and \textsc{BIC}. For each scenario,
we estimated three different models, of which one (model I) was correctly specified, and the other two (models II and III) were misspecified.
\end{table}

\begin{table}
\footnotesize
\renewcommand\thetable{C2b}
\caption{Performance of \textsc{AIC} and \textsc{BIC}}
\label{C2b}
\centering
\begin{tabular}{ccccccccccccc}
%%%%%%%%%%%%%%%%%%%%%%%%%%%%%%%%%%%%%%%%%%%%%%%%%%%%%%%%%%%
\noalign{\vspace{0.2cm}}
\hline
\hline
\noalign{\vspace{0.2cm}}
%%%%%%%%%%%%%%%%%%%%%%%%%%%%%%%%%%%%%%%%%%%%%%%%%%%%%%%%%%%
 && \multicolumn{11}{c}{Simulation 1}\\
\noalign{\vspace{0.1cm}}
\hline
%%%%%%%%%%%%%%%%%%%%%%%%%%%%%%%%%%%%%%%%%%%%%%%%%%%%%%%%%%%
\noalign{\vspace{0.2cm}}
&& \multicolumn{5}{c}{$N = 150$} && \multicolumn{5}{c}{$N = 300$}\\
\noalign{\vspace{0.1cm}}
\cline{3-7}\cline{9-13}
\noalign{\vspace{0.1cm}}
&& \multicolumn{2}{c}{$T = 5$} && \multicolumn{2}{c}{$T = 10$} && \multicolumn{2}{c}{$T = 5$} && \multicolumn{2}{c}{$T = 10$}\\
\cline{3-4}\cline{6-7}\cline{9-10}\cline{12-13}
\noalign{\vspace{0.1cm}}
&& \textsc{AIC} & \textsc{BIC} && \textsc{AIC} & \textsc{BIC} && \textsc{AIC} & \textsc{BIC} && \textsc{AIC} & \textsc{BIC}\\
\noalign{\vspace{0.1cm}}
Model I (correct) && 				0.77 & 0.28 && 0.88 & 0.84 && 0.87 & 0.82 && 0.94 & 0.94\\
Model II (misspecified) &&			0.21 & 0.07 && 0.12 & 0.12 && 0.13 & 0.12 && 0.06 & 0.06\\
Model III (misspecified) &&		0.02 & 0.65 && 0.00 & 0.04 && 0.00 & 0.06 && 0.00 & 0.00\\
%%%%%%%%%%%%%%%%%%%%%%%%%%%%%%%%%%%%%%%%%%%%%%%%%%%%%%%%%%%
\noalign{\vspace{0.2cm}}
\hline
\hline
\noalign{\vspace{0.2cm}}
%%%%%%%%%%%%%%%%%%%%%%%%%%%%%%%%%%%%%%%%%%%%%%%%%%%%%%%%%%%
 && \multicolumn{11}{c}{Simulation 2}\\
\noalign{\vspace{0.1cm}}
\hline
%%%%%%%%%%%%%%%%%%%%%%%%%%%%%%%%%%%%%%%%%%%%%%%%%%%%%%%%%%%
\noalign{\vspace{0.2cm}}
&& \multicolumn{5}{c}{$N = 150$} && \multicolumn{5}{c}{$N = 300$}\\
\noalign{\vspace{0.1cm}}
\cline{3-7}\cline{9-13}
\noalign{\vspace{0.1cm}}
&& \multicolumn{2}{c}{$T = 5$} && \multicolumn{2}{c}{$T = 10$} && \multicolumn{2}{c}{$T = 5$} && \multicolumn{2}{c}{$T = 10$}\\
\cline{3-4}\cline{6-7}\cline{9-10}\cline{12-13}
\noalign{\vspace{0.1cm}}
&& \textsc{AIC} & \textsc{BIC} && \textsc{AIC} & \textsc{BIC} && \textsc{AIC} & \textsc{BIC} && \textsc{AIC} & \textsc{BIC}\\
\noalign{\vspace{0.1cm}}
Model I (correct) && 				0.71 & 0.55 && 0.75 & 0.64 && 0.86 & 0.81 && 0.80 & 0.79\\
Model II (misspecified) &&			0.20 & 0.15 && 0.23 & 0.20 && 0.13 & 0.12 && 0.20 & 0.20\\
Model III (misspecified) &&		0.09 & 0.30 && 0.02 & 0.16 && 0.01 & 0.07 && 0.00 & 0.01\\
%%%%%%%%%%%%%%%%%%%%%%%%%%%%%%%%%%%%%%%%%%%%%%%%%%%%%%%%%%%
\noalign{\vspace{0.1cm}}
\hline
\hline
%%%%%%%%%%%%%%%%%%%%%%%%%%%%%%%%%%%%%%%%%%%%%%%%%%%%%%%%%%%
\end{tabular}
\vspace{0.1cm}\footnotesize\center
Model selection based on \textsc{AIC} and \textsc{BIC}. In the table, we report the relative frequency with which
each of the three candidate models was selected, across $B = 1000$ simulated datasets.
The correct specification is that defined by model I (see Table \ref{C2a}). 
\end{table}

\end{document}